\documentclass{preamble}
\usepackage{xcolor}
\definecolor{tabblue}{RGB}{31,119,180}
\definecolor{taborange}{RGB}{255,127,14}

\allowdisplaybreaks

\definecolor{LightCerulean}{RGB}{118,210,251}
\definecolor{Raspberry}{RGB}{247, 67, 140}
\definecolor{PaleGold}{RGB}{252, 239, 164}
\definecolor{PaleRaspberry}{RGB}{250, 179, 209}
\definecolor{tabblue}{RGB}{ 31, 119, 180 }
\definecolor{taborange}{RGB}{ 255, 127, 14 }
\definecolor{tabgreen}{RGB}{ 44, 160, 44 }
\definecolor{tabred}{RGB}{ 214, 39, 40 }
\definecolor{tabpurple}{RGB}{ 148, 103, 189 }
\definecolor{tabbrown}{RGB}{ 140, 86, 75 }
\definecolor{tabpink}{RGB}{ 227, 119, 194 }
\definecolor{tabgray}{RGB}{ 127, 127, 127 }
\definecolor{tabolive}{RGB}{ 188, 189, 34 }
\definecolor{tabcyan}{RGB}{ 23, 190, 207 }

\begin{document}
\title{From molecular model to tensor model of nematic liquid crystals through entropy decomposition
\footnote{This work is funded by the National Natural Science Foundation of China (Nos.~12225102, T2321001, 12288101).}}
\author{Baoming Shi \footnote{Department of Applied Physics and Applied Mathematics, Columbia University, New York, NY 10027, USA.}
\and Dawei Wu \footnote{School of Mathematical Sciences, Peking University, Beijing 100871, China.}
\and Lei Zhang \footnote{School of Mathematical Sciences, Beijing International Center for Mathematical Research, Center for Quantitative Biology, Center for Machine Learning Research, Peking University, Beijing 100871, China ({zhangl@math.pku.edu.cn})}
\and Pingwen Zhang \footnote{School of Mathematics and Statistics, Wuhan University, Wuhan 430072, China; School of Mathematical Sciences, Peking University, Beijing 100871, China ({pzhang@pku.edu.cn})}}
\maketitle

\begin{abstract}
In the mathematical modeling of nematic liquid crystals, a practical and physically reliable $\vb Q$-tensor model can be derived from Onsager's molecular model with the Bingham closure. However, this procedure leads to a singular entropy term that implicitly depends on $\vb Q$, creating both computational and theoretical difficulties. 
In this paper, we characterize this entropy contribution by splitting it into a singular but explicit leading term and an implicit but regular correction term, the latter of which is proven to be sufficiently regular to be accurately approximated numerically, for example, by neural networks. 
This yields a computationally convenient free energy that can be used for the computation of nematic liquid crystals. Our numerical experiments demonstrate that the resulting free energy can capture the isotropic-nematic phase transition as well as the free-boundary droplet configurations.
\end{abstract}

\paragraph*{Keywords.} nematic liquid crystals, $\vb Q$-tensor, Bingham closure, entropy decomposition, neural network
\paragraph*{Math subject classification.} 76A15, 41A60, 68T07

\section{Introduction}


The \textit{nematic liquid crystal} (NLC) is an intermediate state of matter between solids and liquids, in which positional order is lost, but orientational order is still present \cite{de_gennes_physics_1993}. The molecules of NLCs tend to align in certain locally preferred directions called \textit{directors}, which endow them with anisotropic physical properties that can be applied in opto-electric devices, photonics, and elastomers \cite{urbanski_liquid_2017,wu_unlocking_2026,yin_construction_2020}.

There are several mathematical theories for NLCs, ranging from microscopic molecular models to macroscopic tensor models \cite{wang_modelling_2021}.
Onsager's molecular model \cite{onsager_effects_1949} is the most elementary approach. It describes the director $\vb m\in\BbbS^{d-1}$ in $d$ dimensions ($d=2,3$, which we will abbreviate as 2D and 3D) with its distribution function $f(\vb m)$. The (nondimensional) thermodynamic free energy of a homogeneous solution of NLC is given by
\begin{equation} \label{ons-fe-hom}
\mcalF_0[f]=S[f] + \frac12 \<U\>_f = \int_{\BbbS^{d-1}} \qty[ f\ln(\omega_d f) +  \frac12 f U]\d\vb m,
\end{equation}
where $S[f]=\int_{\BbbS^{d-1}} f\ln(\omega_d f)\d\vb m$ is the entropy relative to the uniform distribution on $\BbbS^{d-1}$ (with $\omega_2=2\pi,\omega_3=4\pi$), $U=U[f]$ is a nonlocal interaction potential, $\<\cdot\>_f$ refers to taking an average with respect to $f$, and $\d\vb m$ is the area measure on $\BbbS^{d-1}$. The molecular model incorporates detailed information about molecular shapes, sizes, and interactions and is therefore regarded as the fundamental model for NLCs, but it suffers from the curse of dimensionality. For example, when $d=3$ and the distribution $f$ varies in the space variable $x$ as well, the molecular model requires a distribution function $f(x,\vb m)$ defined on the five-dimensional domain $\BbbR^3\times\BbbS^2$.
To mitigate the dimensionality, the macroscopic tensor model uses a lower-dimensional order parameter $\vb Q$ to describe the anisotropy, which is defined by
\begin{equation}  \label{Q-def}
    \vb Q=\<\vb{mm}-\frac1d \vb I\>_f\triangleq \int_{\BbbS^{d-1}} \qty(\vb{mm}-\frac1d\vb I)f\d\vb m,
\end{equation}
where $\vb{mm}\triangleq\vb{m}\otimes \vb{m}=\vb{mm}^T$. If $\vb Q$ is computed by \eqref{Q-def} from a distribution function $f$, then it is a symmetric traceless tensor in the physical set
\begin{equation} \label{Q-phy}
    \mathcal{Q}_{p}=\left\{\vb A\in\mcalS_0: \lambda_i(\vb A)\in\left(-\frac1d,\frac{d-1}{d} \right)\right\},
\end{equation}
where $\mcalS_0$ is the space of $d\times d$ symmetric traceless tensors and $\lambda_i(\vb A)$ is the $i$-th eigenvalue of $\vb A$.
The most widely used $\vb Q$-tensor model is the phenomenological Landau-de~Gennes (LdG) model \cite{de_gennes_short_1971,de_gennes_physics_1993}, whose free energy has a convenient polynomial form that facilitates quantitative analysis. However, LdG parameters have no precise physical relations to molecular-level quantities, and the polynomial energy imposes no restriction that $\vb Q\in\mcalQ_p$, so its stationary points may fail to correspond to any underlying microscopic distribution function by \eqref{Q-def}, even near the critical temperature \cite{majumdar_equilibrium_2010}.

To resolve the high dimensionality of the molecular model and the lack of microscopic information in the LdG $\vb Q$-tensor model, one can derive a tensor model from the molecular model based on closure approximations, also known as the \textit{mean-field $\vb Q$-tensor model} \cite{ball_nematic_2010,priestley_introduction_1975}.
The key idea is to map a $\vb Q$-tensor to a special director distribution $f_{\vb Q}$ (called the \textit{closure} of $\vb Q$), such that the consistency relation \eqref{Q-def} holds. Substituting $f_{\vb Q}$ into \eqref{ons-fe-hom} yields a free energy determined by $f_{\vb Q}$, and it can therefore be expressed purely in terms of $\vb Q$ (i.e.~\textit{closed}). The $\vb Q$-tensor formulation of the free energy reduces the dimensionality from the distribution function $f(\vb m)$ by $d-1$, and the consistency relation \eqref{Q-def} ensures that $\vb Q$ is a physically admissible tensor lying in the range $\mcalQ_p$, a feature that the LdG model does not have.

The most common closure model for nematic liquid crystals is the \textit{Bingham closure}, where the closure distribution $f_{\vb Q}$ takes the form of a Bingham distribution \cite{bingham_antipodally_1974} that minimizes the entropy $S[f]$ subject to the constraint \eqref{Q-def}.
The existence, uniqueness and well-posedness of the Bingham closure was given in \cite{ball_nematic_2010,katriel_free_1986}.
In \cite{fatkullin_critical_2005,liu_axial_2005}, the authors proved that if the interaction potential takes the Maier-Saupe form \cite{maier_einfache_1959}, then any critical distribution of the molecular energy \eqref{ons-fe-hom} is exactly a Bingham distribution. Hence, the mean-field $\vb Q$-tensor model with Bingham closure works as a quasi-equilibrium approximation of the molecular model.
In numerical simulations of NLC dynamics \cite{chaubal_closure_1998,feng_closure_1998}, the Bingham closure also stood out among various $\vb Q$-tensor closure models by predicting the exact equilibrium order parameter and more accurately predicting the tumbling behaviour of NLC directors in a flow field.

Despite its advantages, the mean-field $\vb Q$-tensor model with Bingham closure involves great challenges, especially in the computation of the entropy term $S[f_{\vb Q}]$.
First, the closure mapping $\vb Q\mapsto f_{\vb Q}$ is an inverse problem, whose forward problem $f_{\vb Q}\mapsto \vb Q$ is a non-elementary integral. This results in a complicated functional with implicit dependence on $\vb Q$, especially in 3D.
The entropy function also has a singularity when $\vb Q$ is near the boundary of $\mcalQ_p$ and $f_{\vb Q}$ degenerates \cite{ball_nematic_2010}, i.e.~one or more of the eigenvalues of $\vb Q$ approaches the bound $-\frac13$, and $f_{\vb Q}$ is concentrated around a lower-dimensional submanifold of $\BbbS^{d-1}$. In these cases, the entropy blows up, and stability is difficult to maintain. 
For these reasons, the mean-field $\vb Q$-tensor model has seen limited use in practice despite its favourable properties.

Several indirect approaches to the Bingham closure have been proposed.
In \cite{chen_maximum_2021,kent_asymptotic_1987,luo_fast_2018}, the authors designed efficient and accurate algorithms for the forward problem $f\mapsto\vb Q$ in 3D. The inverse problem could potentially be solved via nonlinear solvers such as Newton-Raphson iteration applied to these algorithms, but the cost of accurate computation could be high for degenerate $f$.
In \cite{weady_fast_2022,weady_thermodynamically_2022}, instead of computing the singular mapping $f_{\vb Q}$, the authors used interpolation methods to approximate other regular quantities of interest, such as the third- and fourth-order moments of the Bingham closure $f_{\vb Q}$, and reached a high accuracy. However, they did not address the singularity issue of the entropy, and only computed the forward problem through a numerical quadrature, which could lose precision for degenerate $f$.
In \cite{shi_neural_2026}, the authors used a neural network to approximate the molecular entropy from \eqref{ons-fe-hom} under Bingham closure in 3D, and obtained phenomenologically correct results in numerical tests, but the amount of training data used to characterize the blow-up behaviour of the closure was huge.

To resolve the above issues, we propose the following theorem as the main result of this paper.
\begin{theorem} \label{thm:SQ-asymp}
Let $\vb Q$ be the $d$-dimensional ($d=2,3$) tensor order parameter defined in \eqref{Q-def}, $f_{\vb Q}$ its Bingham closure and $S(\vb Q)=S[f_{\vb Q}]$ the molecular entropy from \eqref{ons-fe-hom}.
Then, we have that
\[S(\vb Q) = -\frac12\ln \det(\vb Q+\frac{\vb I}{d}) + \Delta S(\vb Q),\]
where the correction term $\Delta S(\vb Q)$ is a Lipschitz continuous function on $\mcalQ_p$ with a uniformly bounded gradient.
\end{theorem}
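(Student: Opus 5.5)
The plan is to work at the level of gradients. By strict convexity of the entropy under the moment constraint, the Bingham closure is $f_{\vb Q}(\vb m)=e^{\vb B:\vb{mm}}/Z(\vb B)$, where $\vb B\in\mcalS_0$ is the Lagrange multiplier. This gives
\[
S(\vb Q)=\vb B:\big(\vb Q+\tfrac{\vb I}{d}\big)-\ln\frac{Z(\vb B)}{\omega_d},
\]
and the standard Legendre duality gives $\partial S/\partial\vb Q=\vb B$, where the derivative is taken in $\mcalS_0$, i.e.\ as a traceless projection. The gradient of $-\frac12\ln\det(\vb Q+\frac{\vb I}{d})$ in $\mcalS_0$ is the traceless part of $-\frac12(\vb Q+\frac{\vb I}{d})^{-1}$. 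Hence $\nabla\Delta S(\vb Q)$ is the traceless part of $\vb B+\frac12(\vb Q+\frac{\vb I}{d})^{-1}$.

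Since $\mcalQ_p$ is convex, a uniform bound on this gradient immediately gives Lipschitz continuity. Everything is rotation-equivariant, so I diagonalize $\vb Q+\frac{\vb I}{d}=\mathrm{diag}(\mu_1,\dots,\mu_d)$ with $\mu_i=\<m_i^2\>_f$. Then $\vb B=\mathrm{diag}(b_i)$ in the same frame. Adding a multiple of $\vb I$ to $\vb B$ changes neither $f$ nor the traceless part of the gradient. So I may normalize $b_i=-c_i$ with $c_i\ge 0$, and it suffices to show that $b_i+\frac1{2\mu_i}$ is bounded uniformly for each $i$.

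The first key step is an exact integration-by-parts identity on $\BbbS^{d-1}$. I apply the divergence theorem to the tangent field $m_i\nabla_S m_i\, f$, using
\[
|\nabla_S m_i|^2=1-m_i^2,\qquad \Delta_S m_i=-(d-1)m_i,\qquad \nabla_S m_i\cdot\nabla_S\ln f=2m_i\big(b_i-\vb B:\vb{mm}\big).
\]
This yields $2b_i\mu_i=d\mu_i-1+2\<m_i^2\,\vb B:\vb{mm}\>_f$, that is,
\[
b_i+\frac1{2\mu_i}=\frac d2-\frac{\<m_i^2\,\textstyle\sum_k c_k m_k^2\>_f}{\<m_i^2\>_f}.
\]
The singular part $\frac1{2\mu_i}$ therefore cancels exactly. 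The claim reduces to a uniform bound $0\le \<m_i^2 X\>_f\le C\<m_i^2\>_f$ with $X=\sum_k c_km_k^2\ge 0$: under the tilted probability measure $\propto m_i^2e^{-X}\d\vb m$, the mean of $X$ must be bounded independently of $c_1,\dots,c_d\ge0$.

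The second step, which I expect to be the main obstacle, is proving this tilted-moment bound. Heuristically it is clear: in each strongly confined direction $k$, the weight $e^{-c_km_k^2}$ is Gaussian-like with $c_k\<m_k^2\>\approx\frac12$, and the extra factor $m_i^2$ shifts this by at most $1$. So the bound should be roughly $\frac{d+1}{2}$, plus whatever enters from unconfined directions, where $c_k$ is $O(1)$ and contributes trivially. The naive generalization of the identity, using the field $m_i^2m_j\nabla_S m_j f$, gives the wrong sign after summing over $j$, so I will instead estimate directly. I plan to use the scaling $X\mapsto tX$. The quantity equals $-\frac{d}{dt}\ln\int m_i^2e^{-tX}\big|_{t=1}$, and bounding it amounts to showing that $t\mapsto\int m_i^2e^{-tX}\d\vb m$ decays at most polynomially, of order $t^{-(d+1)/2}$. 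This should follow from a Laplace-type lower bound: restrict to the region where $c_km_k^2\le 1$ for all $k$ and compare with the integral over that same region at $t=1$.

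Concretely, I will write the sphere in coordinates adapted to the ordering of the $c_k$. In 2D this is explicit, via $\theta$ and modified Bessel functions $I_0,I_1$, and the bound follows from the known monotonicity of $I_1/I_0$. In 3D I will split into cases according to which of $c_k$ are large, and verify monotonicity of the appropriate rescaled integral $t^{(d+1)/2}\int m_i^2e^{-tX}$ on each piece. Uniformity across these regimes is the delicate point.
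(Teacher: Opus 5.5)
Your first step is correct, and it is a better entry point than the paper's. The three surface identities are right, and the divergence of $m_i\nabla_S m_i\,f$ does give $b_i+\frac{1}{2\mu_i}=\frac d2-\langle m_i^2X\rangle_f/\langle m_i^2\rangle_f$. This makes the cancellation of the singular part exact and gives $b_i+\frac1{2\mu_i}\le\frac d2$ for free. It also replaces the paper's cancellation mechanism ($\int_0^\infty\psi_1\,\d z=0$ and the integration by parts in Lemma~\ref{lem:1m4xz2-bnd}) with an upper bound on a ratio of \emph{positive} integrals.

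That upper bound is, however, the whole remaining content, and the proposal does not prove it. What you sketch has three concrete defects.

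(a) As stated, ``bounded independently of $c_1,\dots,c_d\ge0$'' is false: for $c_1=\dots=c_d=c$ one has $X\equiv c$. You must fix the shift by $\max_k b_k=0$, i.e.\ $\min_k c_k=0$ (the paper's $\mu_1=0$). Your ``unconfined direction'' heuristic silently assumes this.

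(b) By the scaling $c\mapsto tc$, monotonicity of $t\mapsto t^{(d+1)/2}\int m_i^2\e^{-tX}\d\vb m$ is equivalent to the target bound with the sharp constant $\frac{d+1}{2}$. So it is a restatement, not a reduction, and it is false. In 2D with $c=(0,2a)$ and $i=2$, your identity and \eqref{bessel-asymp} give exactly
\[
\frac{\langle m_2^2X\rangle}{\langle m_2^2\rangle}=2a-\frac{\mathrm{I}_1(a)}{\mathrm{I}_0(a)-\mathrm{I}_1(a)}=\frac32+\frac{3}{8a}+O(a^{-2}).
\]
In 3D with $c_2,c_3\to\infty$, the factor $(1-m_2^2-m_3^2)^{-1/2}$ of the area element pushes the tilted mean to $2+\frac{3}{4c_2}+\frac{1}{4c_3}$ at leading order. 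Only a non-sharp constant holds. Similarly, in 2D monotonicity of $\mathrm{I}_1/\mathrm{I}_0$ is not enough; you need the rate $1-\mathrm{I}_1/\mathrm{I}_0=\frac{1}{2a}+O(a^{-2})$.

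(c) The sound version of your scaling idea is log-convexity of $G(t)=\int m_i^2\e^{-tX}\d\vb m$, which gives $\langle X\rangle\le 2\ln\frac{G(1/2)}{G(1)}$. Restricting to $R=\{c_km_k^2\le1\}$ only bounds $G(1)$ from below. The needed estimate $G(1/2)\le C\int_R m_i^2\d\vb m$ is a tail/doubling bound for $m_i^2\d\vb m$ on anisotropic boxes of the sphere, uniform in all $c_k$. That is exactly the uniformity (in $y$) that the paper's Lemmas~\ref{lem:z2-bnd} and~\ref{lem:1m4xz2-bnd} work to obtain, and you explicitly leave it open.

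Your route can be completed, and more cheaply than the paper's, by combining your identity with the paper's one-dimensional representation \eqref{z2-expr}. Take $c=(0,2x,2y)$, $i=2$, $\chi(z)=\e^{-z}(\mathrm{I}_0(z)-\mathrm{I}_1(z))$, and $w_y$ as in the paper. The tilted mean splits as $2x\langle m_2^4\rangle/\langle m_2^2\rangle+2y\langle m_2^2m_3^2\rangle/\langle m_2^2\rangle$.

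The first term is a ratio of positive integrals, $\int u^2\chi_2(xu)w_y\d u\big/\int u\chi(xu)w_y\d u$, where $\chi_2$ is the corresponding $\cos^4\varphi$-average. It is therefore at most $C\sup_{z>0}z\chi_2(z)/\chi(z)<\infty$.

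The second term is $2y\int_0^1u(1-u)\chi(xu)w_y\d u\big/\int_0^1u\chi(xu)w_y\d u$. Since $\chi$ is decreasing with $\chi(z/2)\le C\chi(z)$, on $[\frac12,1]$ the factor $u\chi(xu)$ is comparable to $\chi(x)$. That part therefore reduces to $2y\int_0^{1/2}\e^{-2yv}v^{1/2}\d v\big/\int_0^{1/2}\e^{-2yv}v^{-1/2}\d v\le C$. The part on $[0,\frac12]$ is at most $\frac{8y^2}{\e^y-1}\cdot\int_0^{1/2}u\chi(xu)\d u\big/\chi(x)\le C$.

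This argument makes Lemma~\ref{lem:1m4xz2-bnd} (including the $\psi_1$-cancellation) and the two-sided bounds of Lemma~\ref{lem:z2-bnd} unnecessary. None of it is in your proposal, though, which stops exactly where the difficulty begins.
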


Theorem \ref{thm:SQ-asymp}, which will be proven in Sections \ref{sec:bing2d} and \ref{sec:bing3d}, decomposes the troublesome entropy term under Bingham closure (which depends implicitly on $\vb Q$ and blows up) into two numerically tractable terms --- an explicit (log-determinant) leading term, which contains all the singularity, and an implicit correction term with uniform Lipschitz regularity.
The log-determinant form of the leading term is reminiscent of the ``quasi-entropy'' proposed by \cite{xu_quasi-entropy_2022}. The extraction of this term from the full energy gives a stronger and more precise estimate of the blow-up behaviour of $S(\vb Q)$ than similar works on Bingham closure such as \cite{ball_nematic_2010}. 
With Theorem \ref{thm:SQ-asymp} established, the evaluation of the implicit function $S(\vb Q)$ reduces to computing the correction term $\Delta S(\vb Q)$, which is much easier to handle numerically because of the uniform Lipschitz continuity. Therefore, efficient computation of the mean-field $\vb Q$-tensor model can be achieved.
In this paper, we will approximate the correction term $\Delta S(\vb Q)$ with a neural network in Section \ref{sec:nn}, and predict the isotropic–nematic phase transition as well as free-boundary droplet configurations \cite{wu_analysing_2026,wu_diffuse-interface_2025} in Section \ref{sec:num} to demonstrate the feasibility of our approach.


\section{Entropy decomposition in dimension 2} \label{sec:bing2d}

We begin with the simple 2D case ($d=2$) as a concise demonstration of our basic methods: we first discuss the properties of the Bingham distribution, and then study the entropy decomposition of Theorem \ref{thm:SQ-asymp}.
The 2D setting is of significant interest in the research on liquid crystals as a thin-film limit of 3D systems (where the thickness tends to zero) \cite{canevari_well_2020,golovaty_dimension_2015} and is directly relevant to planar liquid-crystal devices \cite{kitson_controllable_2002,spencer_zenithal_2010}.

\subsection{Bingham distribution in 2D}

For any 2D symmetric traceless tensor $\vb Q\in\mcalQ_p$, the Bingham closure seeks for a \textit{Bingham distribution} \cite{bingham_antipodally_1974} of the form
\begin{equation} \label{bingham-2d}
    f(\vb m)=\frac{1}{2\pi Z} \e^{\vb B:\vb{mm}},\ 
    Z=\frac{1}{2\pi} \int_{\BbbS^1} \e^{\vb B:\vb{mm}}\d\vb m,
\end{equation}
where $\vb B\in\mcalS_0$ ($2\times 2$ symmetric traceless tensor), such that the constraint \eqref{Q-def} holds.
By analysing the Lagrange multiplier problem, it is straightforward that the Bingham closure minimizes the molecular entropy $S[f]$ from \eqref{ons-fe-hom} subject to the constraint of the second-order moment \eqref{Q-def} \cite{ball_nematic_2010}.

One observes that $Z, f$ and thereby $\vb Q$, are all determined by $\vb B$ explicitly. To study the inverse problem of Bingham closure (computing $\vb B$ from $\vb Q$). It is necessary to begin with the forward problem centred around the function $Z(\vb B)$.
We present the following well-known properties (see e.g.~\cite{ball_nematic_2010,li_local_2015}).
\begin{lemma} \label{lem:Z2d}
\begin{enumerate}[\rm(i)]
\item $\ln Z(\vb B)$ is smooth and strictly convex.
\item $\vb Q(\vb B)=\nabla_0 [\ln Z(\vb B)] \in \mathcal{Q}_{p}$ is the gradient in the space $\mcalS_0$, where
\[\nabla_0 f(\vb B) = \nabla f(\vb B) - \frac{\tr(\nabla f(\vb B))}{d}\vb I\]
is the projected gradient onto $\mcalS_0$.
\item The Hessian operator $H$ of $\ln Z$, defined by
\[H\vb E:\vb E = \frac{\pt^2(\ln Z(\vb B+t\vb E))}{\pt t^2},\ \vb E\in\mcalS_0,\]
is strictly positive definite.
\end{enumerate}
\end{lemma}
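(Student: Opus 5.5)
We need to prove Lemma \ref{lem:Z2d}: smoothness and strict convexity of $\ln Z$, the gradient identity, and positive definiteness of the Hessian.

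The plan is to treat $\ln Z(\vb B)$ as the log-partition function of an exponential family on $\BbbS^1$ and compute its first two derivatives directly. Smoothness comes first. The integrand $\e^{\vb B:\vb{mm}}$ is real-analytic in $\vb B$, and its derivatives of every order in $\vb B$ are bounded by polynomials in $|\vb B|$ times $\e^{|\vb B|}$ on compact sets, uniformly in $\vb m$. Since $\BbbS^1$ is compact, we may differentiate under the integral sign as often as we like. So $Z$ is $C^\infty$ (in fact analytic), and because $Z>0$, so is $\ln Z$.

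For (ii), fix a direction $\vb E\in\mcalS_0$ and differentiate:
\[\frac{\d}{\d t}\ln Z(\vb B+t\vb E)\Big|_{t=0}=\frac{1}{2\pi Z}\int_{\BbbS^1}(\vb E:\vb{mm})\e^{\vb B:\vb{mm}}\d\vb m=\<\vb{mm}\>_f:\vb E=\<\vb{mm}-\tfrac12\vb I\>_f:\vb E.\]
The last equality uses $\tr\vb E=0$. The tensor $\<\vb{mm}-\frac12\vb I\>_f$ is itself traceless because $|\vb m|=1$, so it is exactly the Riesz representative of the derivative in $\mcalS_0$. This representative is $\nabla_0\ln Z$, which also agrees with the projection formula given for the full gradient $\<\vb{mm}\>_f$. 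Hence $\vb Q(\vb B)=\nabla_0\ln Z(\vb B)$.

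To see that this lies in $\mcalQ_p$, take any unit vector $\vb v$. Then $\vb v\cdot\vb Q\vb v=\<(\vb v\cdot\vb m)^2\>_f-\frac12$. Because $f>0$ everywhere and $(\vb v\cdot\vb m)^2$ is a non-constant function with values in $[0,1]$, the average lies strictly in $(0,1)$. Therefore every eigenvalue of $\vb Q$ lies in $(-\frac12,\frac12)$.

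For (iii), differentiating once more gives
\[H\vb E:\vb E=\<(\vb E:\vb{mm})^2\>_f-\<\vb E:\vb{mm}\>_f^2=\mathrm{Var}_f(\vb E:\vb{mm})\ge0.\]
The only step needing care is strictness: the variance vanishes only if $\vb E:\vb{mm}$ is $f$-a.e. constant on $\BbbS^1$. Since $f$ has positive density, this means it is constant everywhere on the circle. Write $\vb E=\begin{pmatrix}a&b\\ b&-a\end{pmatrix}$ and $\vb m=(\cos\theta,\sin\theta)$. Then $\vb E:\vb{mm}=a\cos2\theta+b\sin2\theta$, which is constant only if $a=b=0$. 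So $H$ is strictly positive definite on $\mcalS_0$.

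Strict convexity in (i) then follows, since a positive definite Hessian along every line gives strict convexity on the open convex domain $\mcalS_0$. No step is a real obstacle; the only points to check are the justification for differentiating under the integral and the non-degeneracy argument for the variance.
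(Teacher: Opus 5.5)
Your proof is correct. The paper does not prove Lemma~\ref{lem:Z2d} itself. It states these facts as well known and refers to \cite{ball_nematic_2010,li_local_2015}, so there is no competing argument in the paper to compare against.

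What you give is the standard exponential-family computation. The gradient of the log-partition function is the first moment $\<\vb{mm}-\frac12\vb I\>_f$, and the second directional derivative is $\mathrm{Var}_f(\vb E:\vb{mm})$. This is the same mechanism as the Cauchy--Schwarz step the paper uses in the proof of Lemma~\ref{lem:bessel}(iv). That step only treats the diagonal direction $\vb E=\mathrm{diag}(1,-1)$, where $(\ln \mathrm{I}_0)''$ is a variance. Your version covers every direction in $\mcalS_0$ at once.

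One small streamlining is worth noting, since the paper later claims the 3D proofs are ``identical''. Your Fourier step ($a\cos2\theta+b\sin2\theta\equiv c$ forces $a=b=0$) can be replaced by a dimension-free argument. If $\vb E:\vb{mm}\equiv c$ on $\BbbS^{d-1}$, then the quadratic form of $\vb E-c\vb I$ vanishes on all unit vectors, so $\vb E=c\vb I$ by polarization, and $\tr\vb E=0$ forces $c=0$. This works verbatim for $d=3$.
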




Lemma \ref{lem:Z2d} shows that the relationship between $\vb B$ and $\vb Q$ is completely characterized by $Z(\vb B)$ and its derivatives, so studying the Bingham distribution is equivalent to studying the function $Z$. 

For all rotations $\vb R\in\mathrm{SO}(d)$, with a change of variable $\vb m\to \vb{R}^T\vb m$, we get that
\begin{equation} \label{rot-inv}
    Z(\vb{R}^T\vb{BR})=Z(\vb B),
\end{equation}
so $Z(\vb B)$ is rotationally invariant. By a well-known result in matrix analysis \cite[Thm.~3.1]{lewis_convex_1996}, a rotationally invariant convex function $f(\vb B)$ of symmetric matrices is equivalent to a symmetric convex function $\bar f$ with respect to the eigenvalues $\lambda(\vb B)=[\mu_1,\mu_2]^T$.
Moreover, the matrix gradient of $\nabla f$ rotates accordingly to $\vb B$:
\begin{equation} \label{rot-inv-grad}
    \nabla f(\vb{R}\diag(\mu_1,\mu_2)\vb{R}^T) =\vb{R} \diag(\nabla_0\bar f(\mu_1,\mu_2))\vb{R}^T,\ \vb R\in\mathrm{SO}(d),
\end{equation}
Taking the projection onto $\mcalS_0$, we find that same relation applies to $\ln Z$ and the projected gradient $\vb Q=\nabla_0(\ln Z)$. Hence, $\vb Q$ has the same eigenframe as $\vb B$, and the mapping $\vb Q(\vb B)$ is essentially determined by the relations between $\lambda(\vb Q)$ and $\lambda(\vb B)$.

Due to frame invariance as discussed above, we can assume w.l.o.g.~that 
\begin{equation} \label{B-Q-diag2}
    \vb B=\mqty[\mu&\\&-\mu],\ \vb Q=\mqty[q&\\&-q]
\end{equation}
are diagonalized, and can be parametrized by simple scalars $\mu,q\in\BbbR$. The physical constraint \eqref{Q-phy} then manifests as $-\frac12 < q < \frac12$.
With a harmless abuse of notation, we rewrite $Z(\vb B)$ as $Z(\mu)$. Then, by \eqref{bingham-2d} we have that
\begin{equation}
    Z(\mu) = \frac{1}{2\pi}\int_0^{2\pi}\e^{\mu (\cos^2\theta-\sin^2\theta)}\d\theta = \mathrm{I}_0(\mu),
\end{equation}
where $\mathrm{I}_0$ is the modified Bessel function of the first kind with integer order \cite{abramowitz_handbook_2013}:
\begin{equation} \label{bessel-I0}
    \mathrm{I}_n(\mu)= \frac{1}{2\pi}\int_0^{2\pi} \e^{\mu \cos\theta} \cos (n\theta) \d\theta.
\end{equation}
The equality $Z(\mu)=\mathrm{I}_0(\mu)$ follows from a simple change of variable.
By Lemma \ref{lem:Z2d}, $\vb Q=\nabla_0(\ln Z(\vb B))$. Taking the variation with respect to $\mu$, we get
\[\delta(\ln Z)=\vb Q:\delta\vb B = 2q\delta\mu.\]
Therefore, $2q=\frac{\d(\ln Z)}{\d\mu}$, and the relation between $q$ and $\mu$ are given by
\begin{equation}\label{q2d-eq-I1}
    q(\mu)=\frac12(\ln \mathrm{I}_0)'(\mu) =\frac{\mathrm{I}_1(\mu)}{2 \mathrm{I}_0(\mu)}.
\end{equation}

We can study the properties of $Z(\mu)$ and $q(\mu)$ with the help of Bessel functions. 
\begin{lemma}[{\cite{abramowitz_handbook_2013,fatkullin_critical_2005}}] \label{lem:bessel}
The Bessel functions $\mathrm{I}_0$ and $\mathrm{I}_1$ satisfy that:
\begin{enumerate}[\rm(i)]
\item $\mathrm{I}_0(-\mu)=\mathrm{I}_0(\mu), \mathrm{I}_1(-\mu)=-\mathrm{I}_1(\mu)$;
\item $\mathrm{I}_0'(\mu)=\mathrm{I}_1(\mu)$;
\item $\mathrm{I}_0(\mu)>\mathrm{I}_1(\mu)>0$ for all $x>0$;
\item $\dfrac{\mathrm{I}_1(\mu)}{\mathrm{I}_0(\mu)}$ is strictly increasing, with $\displaystyle\lim_{\mu\to+\infty} \frac{\mathrm{I}_1(\mu)}{\mathrm{I}_0(\mu)}=1.$
\end{enumerate}
\end{lemma}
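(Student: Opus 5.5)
We prove the four items directly from the integral representation \eqref{bessel-I0}, using $n=0,1$ and writing $\mathrm{I}_1(\mu)=\frac{1}{2\pi}\int_0^{2\pi}\e^{\mu\cos\theta}\cos\theta\d\theta$. The organizing idea is to read $\e^{\mu\cos\theta}/(2\pi\mathrm{I}_0(\mu))$ as a probability density $\rho_\mu$ on $[0,2\pi)$. Then $\mathrm{I}_1/\mathrm{I}_0$ is the mean $\<\cos\theta\>_{\rho_\mu}$, and monotonicity and the limit become statements about this mean.

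For (i), we substitute $\theta\mapsto\theta+\pi$. This sends $\cos\theta\mapsto-\cos\theta$, so $\mathrm{I}_0(-\mu)=\mathrm{I}_0(\mu)$, and the extra factor $\cos\theta$ in $\mathrm{I}_1$ produces the sign, giving $\mathrm{I}_1(-\mu)=-\mathrm{I}_1(\mu)$.

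For (ii), we differentiate under the integral sign. This is justified because the integrand is smooth and the domain is compact. It gives $\mathrm{I}_0'(\mu)=\frac{1}{2\pi}\int_0^{2\pi}\cos\theta\,\e^{\mu\cos\theta}\d\theta=\mathrm{I}_1(\mu)$.

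For (iii), take $\mu>0$. The inequality $\mathrm{I}_0(\mu)-\mathrm{I}_1(\mu)=\frac1{2\pi}\int_0^{2\pi}\e^{\mu\cos\theta}(1-\cos\theta)\d\theta>0$ holds because the integrand is nonnegative and vanishes only at $\theta=0$. For positivity of $\mathrm{I}_1$, we use the symmetry $\theta\mapsto 2\pi-\theta$ to reduce to $[0,\pi]$. We then pair $\theta$ with $\pi-\theta$, which gives
\[\mathrm{I}_1(\mu)=\frac1\pi\int_0^{\pi/2}\cos\theta\,\bigl(\e^{\mu\cos\theta}-\e^{-\mu\cos\theta}\bigr)\d\theta .\]
This is strictly positive, since both factors are positive on $(0,\pi/2)$ when $\mu>0$.

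For (iv), by (ii) we have $r(\mu):=\mathrm{I}_1/\mathrm{I}_0=(\ln\mathrm{I}_0)'$. Differentiating again gives
\[r'(\mu)=\frac{\int\cos^2\theta\,\e^{\mu\cos\theta}}{\int\e^{\mu\cos\theta}}-\Bigl(\frac{\int\cos\theta\,\e^{\mu\cos\theta}}{\int\e^{\mu\cos\theta}}\Bigr)^2=\mathrm{Var}_{\rho_\mu}(\cos\theta).\]
The variance is strictly positive because $\cos\theta$ is not $\rho_\mu$-a.e.\ constant, so $r$ is strictly increasing.

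The limit is the only step needing an estimate. By (iii), $0<1-r(\mu)=\<1-\cos\theta\>_{\rho_\mu}$, so it suffices to show this mean tends to $0$. Fix $\delta\in(0,1)$ and split the circle into $A=\{1-\cos\theta<\delta\}$ and its complement.
\begin{itemize}
\item On $A$, the integrand $1-\cos\theta$ is bounded by $\delta$, so $A$ contributes at most $\delta$.
\item On the complement, the integrand is at most $2$, and its $\rho_\mu$-mass is at most $2\pi\e^{\mu(1-\delta)}\big/\int_{A'}\e^{\mu\cos\theta}\d\theta$, where $A'=\{1-\cos\theta<\delta/2\}$ is an interval of fixed positive length $\ell_\delta$.
\end{itemize}
The denominator is at least $\ell_\delta\e^{\mu(1-\delta/2)}$, so the mass of the complement is at most $(2\pi/\ell_\delta)\e^{-\mu\delta/2}\to0$ as $\mu\to+\infty$. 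Hence $\limsup_{\mu\to+\infty}(1-r)\le\delta$, and since $\delta$ is arbitrary, $r\to1$.

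I expect this Laplace-type limit to be the main (though mild) obstacle. Everything else is direct manipulation of the integral.
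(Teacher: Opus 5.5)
Your proof is correct, and for the limit $r(\mu)\to1$ it takes a different route from the paper.

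For (i)--(iii) the paper gives no argument and simply cites standard references. Your derivations from the integral representation make the lemma self-contained: the shift $\theta\mapsto\theta+\pi$, differentiation under the integral, and the pairing $\theta\leftrightarrow\pi-\theta$.

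Your monotonicity argument is the paper's. The paper writes $r'(\mu)$ as a Cauchy--Schwarz gap, which is exactly your variance $\mathrm{Var}_{\rho_\mu}(\cos\theta)$.

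The real difference is the limit. The paper quotes the large-argument expansion \eqref{In-asymp} and reads off $\mathrm{I}_0,\mathrm{I}_1\sim \e^\mu/\sqrt{2\pi\mu}$. You instead give an elementary Laplace-concentration bound,
\[0<1-r(\mu)\le\delta+\frac{4\pi}{\ell_\delta}\,\e^{-\mu\delta/2}.\]
This needs no special-function asymptotics, but it is purely qualitative.

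The paper's route costs a citation, but it yields the quantitative expansions \eqref{bessel-asymp} and \eqref{I0mI1-asymp}, in particular $1-r(\mu)=\frac{1}{2\mu}+O(\mu^{-2})$. The paper reuses these right away. The cancellation of the $O(\mu)$ terms in $g(\mu)$ in Proposition \ref{prop:S2d-asymp} depends on the precise coefficients, and so do the decay estimates in Lemmas \ref{lem:z2-bnd} and \ref{lem:1m4xz2-bnd}. Your concentration argument cannot supply those coefficients, so it suffices for the lemma as stated but not for how the lemma is used later.
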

\begin{proof}
The facts (i)--(iii) are standard properties of Bessel functions, so we refer the readers to the citations. We briefly verify fact (iv) for completeness. Taking the derivative, we get
\[\qty(\frac{\mathrm{I}_1(\mu)}{\mathrm{I}_0(\mu)})'=\frac{\mathrm{I}_1'\mathrm{I}_0-\mathrm{I}_1^2}{\mathrm{I}_0^2}=\frac{1}{4\pi^2 \mathrm{I}_0^2} \left[ \int_0^{2\pi} \cos^2\theta\e^{\mu\cos\theta}\d\theta\int_0^{2\pi} \e^{\mu\cos\theta}\d\theta - \left( \int_0^{2\pi} \cos\theta\e^{\mu\cos\theta}\d\theta \right)^2 \right],\]
which is strictly positive by the Cauchy-Schwarz inequality. When $\mu\to+\infty$, we quote the following well-known asymptotic expansion of the Bessel functions when $z\to+\infty$ \cite{abramowitz_handbook_2013}:
\begin{equation} \label{In-asymp}
    \mathrm{I}_\nu(z) \sim \frac{\e^z}{\sqrt{2\pi z}} \left[ 1-\frac{4\nu^2-1}{8z} + \frac{(4\nu^2-1)(4\nu^2-9)}{2!(8z)^2} - \frac{(4\nu^2-1)(4\nu^2-9)(4\nu^2-25)}{3!(8z)^3}+\cdots \right].
\end{equation}
Note that an asymptotic series is not necessarily convergent, but its truncation to $O(x^{-N})$ terms has the precision of $O(x^{-(N+1)})$, like a Taylor polynomial. 

Hence, we have that
\begin{equation}\label{bessel-asymp}
\begin{aligned}
    \mathrm{I}_0(\mu) &= \frac{\e^\mu}{\sqrt{2\pi \mu}} \left[ 1+\frac{1}{8\mu} + \frac{9}{128 \mu^2} +\frac{75}{1024 \mu^3}+ O(\mu^{-4}) \right], \\
    \mathrm{I}_1(\mu) &= \frac{\e^\mu}{\sqrt{2\pi \mu}} \left[ 1-\frac{3}{8\mu} - \frac{15}{128 \mu^2} - \frac{105}{1024 \mu^3}+ O(\mu^{-4}) \right],
\end{aligned}
\end{equation}
which implies $\mathrm{I}_0(\mu), \mathrm{I}_1(\mu) \sim \frac{\e^\mu}{\sqrt{2\pi\mu}}$ as $\mu\to+\infty$, so $\frac{\mathrm{I}_1}{\mathrm{I}_0}\to 1$.
\end{proof}

By Lemma \ref{lem:bessel}(iv), $q(\mu)$ defined by \eqref{q2d-eq-I1} traverses its range $(-\frac12,\frac12)$ as $\mu$ traverses $\BbbR$, leading to the existence of the Bingham closure in 2D. The reader may refer to \cite{weady_fast_2022} for a more detailed proof.
\begin{theorem} \label{thm:bing-cl-2d}
For all $\vb Q\in\mcalQ_p$ with $d=2$, there exists a unique $\vb B\in\mcalS_0$ such that $\vb Q=\nabla_0(\ln Z(\vb B)).$
\end{theorem}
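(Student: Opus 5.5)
Since both $\vb Q$ and $\vb B$ live in $\mcalS_0$, we reduce the problem to its eigenvalues and then solve a scalar equation. Given $\vb Q\in\mcalQ_p$, diagonalize it as $\vb Q=\vb R\,\diag(q,-q)\,\vb R^T$ with $\vb R\in\mathrm{SO}(2)$ and $q\in(-\frac12,\frac12)$, which is possible because $\vb Q$ is symmetric and traceless. We look for $\vb B$ in the form $\vb B=\vb R\,\diag(\mu,-\mu)\,\vb R^T$. By the rotational invariance \eqref{rot-inv} and the covariance of the gradient \eqref{rot-inv-grad}, applied to $\ln Z$ and projected onto $\mcalS_0$, we get $\nabla_0\ln Z(\vb B)=\vb R\,\diag(q(\mu),-q(\mu))\,\vb R^T$, where $q(\mu)=\mathrm{I}_1(\mu)/(2\mathrm{I}_0(\mu))$ as in \eqref{q2d-eq-I1}. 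Existence therefore reduces to solving the scalar equation $q(\mu)=q$.

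\textbf{Existence.} By Lemma \ref{lem:bessel}, $q$ is continuous and odd, since $\mathrm{I}_0$ is even and $\mathrm{I}_1$ is odd. It is strictly increasing by (iv), and $q(\mu)\to\frac12$ as $\mu\to+\infty$, hence $q(\mu)\to-\frac12$ as $\mu\to-\infty$. By (iii), $0<q(\mu)<\frac12$ for $\mu>0$. Hence $q:\BbbR\to(-\frac12,\frac12)$ is a continuous strictly increasing bijection, and the intermediate value theorem gives a unique $\mu$ with $q(\mu)=q$. This produces $\vb B$ with $\nabla_0\ln Z(\vb B)=\vb Q$.

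\textbf{Uniqueness.} We argue by convexity rather than by eigenframes, because the eigenframe of $\vb Q$ is not unique when $\vb Q=0$. Suppose $\vb B_1\neq\vb B_2$ in $\mcalS_0$ both satisfy $\nabla_0\ln Z(\vb B_i)=\vb Q$. Set $\vb E=\vb B_1-\vb B_2\in\mcalS_0$ and $g(t)=\ln Z(\vb B_2+t\vb E)$. Then $g'(t)=\nabla_0\ln Z(\vb B_2+t\vb E):\vb E$, so
\[0=(\vb Q-\vb Q):\vb E=g'(1)-g'(0)=\int_0^1 H(\vb B_2+t\vb E)\vb E:\vb E\,\d t .\]
The right-hand side is strictly positive by Lemma \ref{lem:Z2d}(iii), a contradiction. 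So $\vb B_1=\vb B_2$.

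The only point needing care is the limiting behaviour of $q$ at $\pm\infty$, which guarantees that the range of $q$ is exactly $(-\frac12,\frac12)$. That is supplied by Lemma \ref{lem:bessel}(iv) together with the oddness in (i), so no step here is a real obstacle.
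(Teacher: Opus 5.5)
Your proof is correct. The existence half is the paper's argument: it notes that by Lemma \ref{lem:bessel}(iv) the map $q(\mu)=\mathrm{I}_1(\mu)/(2\mathrm{I}_0(\mu))$ sweeps $(-\frac12,\frac12)$ monotonically as $\mu$ runs over $\BbbR$, and defers the details to the literature. You supply those details (oddness, the limits at $\pm\infty$, the intermediate value theorem) cleanly.

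You differ on uniqueness. The paper leaves it implicit in the same scalar bijection together with the common eigenframe of $\vb B$ and $\vb Q$. You instead use the strict positive definiteness of the Hessian (Lemma \ref{lem:Z2d}(iii)), i.e.\ the gradient of a strictly convex function is injective.

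Your route is dimension-free and proves the uniqueness part of the 3D Theorem \ref{thm:bing-cl} verbatim, with no frame bookkeeping. The scalar route is more explicit and gives the inverse map in closed form.

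The $\vb Q=0$ difficulty you cite is not a real obstruction to the scalar route. Diagonalize a candidate $\vb B$ rather than $\vb Q$. For $\vb B\neq 0$ one gets $\nabla_0\ln Z(\vb B)=\frac{q(\mu)}{\mu}\vb B$ with $\mu=|\vb B|/\sqrt2>0$, a positive multiple of $\vb B$, while $\nabla_0\ln Z(0)=0$. So $\vb Q$ fixes the direction of $\vb B$, and $|\vb Q|=\sqrt2\,q(|\vb B|/\sqrt2)$ fixes its size because $q$ is strictly increasing. This gives uniqueness at $\vb Q=0$ as well.
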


\subsection{Entropy decomposition}

We compute from \eqref{ons-fe-hom} that the molecular entropy $S[f]$ with $f_{\vb Q}$ substituted equals
\begin{equation} \label{SQ-2d}
    S(\vb Q) = \vb B:\vb Q - \ln Z(\vb B),
\end{equation}
where $\vb B$ satisfies $\vb Q=\vb Q(\vb B)$. Therefore, $S(\vb Q)$ is the convex dual \cite{boyd_convex_2004} of $\ln Z(\vb B)$, which implies that $\nabla_0 S(\vb Q)=\vb B$.
$S(\vb Q)$ also depends solely on the eigenvalues of $\vb Q$. Using the parametrization \eqref{B-Q-diag2}, we rewrite it as (note that $\vb B:\vb Q=2\mu q$)
\begin{equation} \label{S-q-2d}
    S(q) = 2\mu q - \ln \mathrm{I}_0(\mu).
\end{equation}

Then, Theorem \ref{thm:SQ-asymp} in 2D takes the form of the following statement.
\begin{proposition} \label{prop:S2d-asymp}
Let $S(q)$ be given by \eqref{S-q-2d}, and
\begin{equation} \label{Shat-q-2d}
    \hat S(q)=-\frac12 \ln\left( \frac12+q \right)\left( \frac12-q \right)
\end{equation}
the leading singular term of Theorem \ref{thm:SQ-asymp} in terms of the eigenvalue $q$. Then, $\Delta S(q)=S(q)-\hat S(q)$ is a Lipschitz continuous function, whose derivative with respect to $q\in(-\frac12,\frac12)$ is uniformly bounded.
\end{proposition}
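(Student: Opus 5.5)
Since $S(q)$ is the convex dual of $\ln\mathrm{I}_0$ restricted to the diagonal parametrization, I will work through its derivative rather than its values. Differentiating \eqref{S-q-2d} in $q$ and using \eqref{q2d-eq-I1}, the terms involving $\mu'(q)$ cancel, leaving $S'(q)=2\mu(q)$, where $\mu(q)$ is the unique inverse given by Theorem \ref{thm:bing-cl-2d}. A direct computation from \eqref{Shat-q-2d} gives
\[
\hat S'(q)=\frac{2q}{\frac14-q^2}.
\]
Hence $\Delta S'(q)=2\mu-\frac{2q}{\frac14-q^2}$, and it suffices to show that this expression is bounded uniformly for $q\in(-\frac12,\frac12)$. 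Lipschitz continuity then follows from the mean value theorem. By Lemma \ref{lem:bessel}(i), both $\mu(q)$ and $\hat S'$ are odd in $q$, so it is enough to treat $q\in[0,\frac12)$, or equivalently $\mu\in[0,\infty)$.

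I will pass to the variable $\mu$. Set $g(\mu)=2\mu-\dfrac{2q(\mu)}{\frac14-q(\mu)^2}$ with $q(\mu)=\frac{\mathrm{I}_1}{2\mathrm{I}_0}$. This function is continuous on $[0,\infty)$: the map $q(\mu)$ is smooth, and $q(\mu)<\frac12$ strictly by Lemma \ref{lem:bessel}(iii). It is therefore bounded on every compact interval $[0,M]$, and the only issue is the behaviour as $\mu\to+\infty$. Rewriting,
\[
\frac14-q^2=\frac{\mathrm{I}_0^2-\mathrm{I}_1^2}{4\mathrm{I}_0^2},\qquad \frac{2q}{\frac14-q^2}=\frac{4\mathrm{I}_0\mathrm{I}_1}{\mathrm{I}_0^2-\mathrm{I}_1^2}.
\]
I will insert the expansions \eqref{bessel-asymp}, writing $\mathrm{I}_k=\frac{\e^\mu}{\sqrt{2\pi\mu}}\,a_k(\mu)$ for $k=0,1$. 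This gives
\[
a_0^2-a_1^2=(a_0-a_1)(a_0+a_1)=\Big(\frac{1}{2\mu}+\frac{3}{16\mu^2}+O(\mu^{-3})\Big)\Big(2-\frac{1}{4\mu}+O(\mu^{-2})\Big)=\frac1\mu+\frac{1}{4\mu^2}+O(\mu^{-3}),
\]
and
\[
a_0a_1=1-\frac{1}{4\mu}+O(\mu^{-2}).
\]
Therefore
\[
\frac{4a_0a_1}{a_0^2-a_1^2}=4\mu\Big(1-\frac{1}{4\mu}\Big)\Big(1-\frac{1}{4\mu}\Big)+O(1)=4\mu-2+O(\mu^{-1}).
\]

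The main obstacle is this leading-order cancellation. The estimate just obtained suggests $\hat S'\approx 4\mu$ against $S'=2\mu$, which would leave a divergent difference. If that is really what happens, the fault lies either in my derivative bookkeeping or in the normalization: the variation $\delta(\ln Z)=2q\,\delta\mu$ versus $\mathrm{I}_0(\mu)$ with argument $\mu$. So the first thing I will do is recheck the chain rule carefully.

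Doing so: $Z(\mu)=\frac1{2\pi}\int\e^{\mu\cos2\theta}\d\theta=\mathrm{I}_0(\mu)$, so $2q=\mathrm{I}_1/\mathrm{I}_0$ and $S'(q)=2\mu$. On the other side, $\frac12-q=\frac{\mathrm{I}_0-\mathrm{I}_1}{2\mathrm{I}_0}\approx\frac{1}{4\mu}$. Then $\hat S'=\frac{1}{2}\big[(\frac12-q)^{-1}-(\frac12+q)^{-1}\big]\approx 2\mu-\frac12$, since $(\frac12-q)^{-1}\approx 4\mu$. So the correct computation is
\[
\hat S'(q)=\frac{q}{\frac14-q^2}=\frac{2\mathrm{I}_0\mathrm{I}_1}{\mathrm{I}_0^2-\mathrm{I}_1^2},
\]
which has leading behaviour $2\mu-1+O(\mu^{-1})$; the earlier formula $\hat S'=\frac{2q}{\frac14-q^2}$ was off by a factor of $2$. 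With this correction, $\Delta S'(q)=2\mu-\hat S'(q)\to 1$ as $\mu\to\infty$. The leading terms cancel exactly, which is precisely why the coefficient $\frac12$ appears in \eqref{Shat-q-2d}.

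To make the argument rigorous rather than formal, I will rely on the stated truncation property of the asymptotic series \eqref{In-asymp}. Truncating each expansion at order $\mu^{-2}$ gives error terms that are uniformly $O(\mu^{-3})$ for $\mu\ge M$, and these yield a uniform bound for $g$ on $[M,\infty)$. Combining this with continuity on $[0,M]$ and the oddness reduction, $\Delta S'$ is bounded on all of $(-\frac12,\frac12)$, which proves Proposition \ref{prop:S2d-asymp}.
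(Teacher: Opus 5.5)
Your proposal is correct and follows essentially the same route as the paper: $S'(q)=2\mu$ by duality, an explicit $\hat S'$, reduction to $\mu\ge 0$ by oddness, and cancellation of the $O(\mu)$ terms via the expansions \eqref{bessel-asymp}. Your combined fraction $\frac{2\mathrm{I}_0\mathrm{I}_1}{\mathrm{I}_0^2-\mathrm{I}_1^2}$ is the same as the paper's $\frac{\mathrm{I}_0}{\mathrm{I}_0-\mathrm{I}_1}-\frac{\mathrm{I}_0}{\mathrm{I}_0+\mathrm{I}_1}$; in a final write-up, drop the false start and correct the stray heuristic $\hat S'\approx 2\mu-\frac12$ to $2\mu-1$, since $(\frac12-q)^{-1}=4\mu-1+O(\mu^{-1})$.
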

\begin{proof}
We show that the derivative of $S-\hat S$ is uniformly bounded.

By the convex duality $\vb B=\nabla_0 S(\vb Q)$, we derive with the same argument for \eqref{q2d-eq-I1} that
\[S'(q)=2\mu.\]
On the other hand,
\[\hat S'(q)=-\frac12\left(\frac{1}{\frac12+q}-\frac{1}{\frac12-q}\right).\]
Both derivatives can be explicitly computed from $\mu$. As $\mu\in\BbbR$ and $q\in(-\frac12,\frac12)$ are mapped one-to-one onto each other, it suffices to verify the boundedness of $(S-\hat S)'(q)$ with respect to all $\mu\in\BbbR$, which leads us to asserting that
\begin{equation}\label{DdeltaS-2d}
    g(\mu)=2\mu+\frac12\left(\frac{1}{\frac12+q(\mu)}-\frac{1}{\frac12-q(\mu)}\right)
    =2\mu + \frac{\mathrm{I}_0(\mu)}{\mathrm{I}_0(\mu)+\mathrm{I}_1(\mu)} - \frac{\mathrm{I}_0(\mu)}{\mathrm{I}_0(\mu)-\mathrm{I}_1(\mu)}
\end{equation}
is uniformly bounded for $\mu\in\BbbR$. 

By Lemma \ref{lem:bessel}, $g(\mu)$ is an odd function, so we only need to check the boundedness when $\mu\to+\infty$, where the second term $\frac{\mathrm{I}_0(\mu)}{\mathrm{I}_0(\mu)+\mathrm{I}_1(\mu)}$ is uniformly bounded. 
Subtracting the asymptotic expansions \eqref{bessel-asymp} gives
\begin{equation} \label{I0mI1-asymp}
    \mathrm{I}_0(\mu)-\mathrm{I}_1(\mu) = \frac{\e^\mu}{\sqrt{2\pi \mu}}\left[ \frac{1}{2\mu} + \frac{3}{16\mu^2} + \frac{45}{256 \mu^3} + O(\mu^{-4}) \right].
\end{equation}
Substituting them into the expression \eqref{DdeltaS-2d}, we can compute the asymptotic behaviour of $g(\mu)$ at $+\infty$:
\begin{align*}
    g(\mu) &= 2\mu + \frac{1+\frac18 \mu^{-1} + \frac{9}{128} \mu^{-2} + O(\mu^{-3})}{2-\frac14 \mu^{-1}-\frac{3}{64}\mu^{-2} + O(\mu^{-3})} - \frac{1+\frac18 \mu^{-1} + \frac{9}{128} \mu^{-2} + O(\mu^{-3})}{\frac12\mu^{-1} + \frac{3}{16}\mu^{-2} + \frac{45}{256}\mu^{-3} + O(\mu^{-4})} \\
    &=2\mu + \frac12\left[ 1+\frac14\mu^{-1} + \frac18 \mu^{-2} + O(\mu^{-3}) \right] - 2\mu \left[ 1-\frac{1}{4}\mu^{-1} - \frac{3}{16} \mu^{-2} +O(\mu^{-3}) \right]  \\
    &=1 +\frac12 \mu^{-1} + O(\mu^{-2}).
\end{align*}
The unbounded $O(\mu)$ terms cancel out, and result in a uniformly bounded quantity. 
The uniform Lipschitz continuity of $\Delta S(q)=S-\hat S$ then follows naturally. 
\end{proof}

To provide an intuitive interpretation of Proposition \ref{prop:S2d-asymp}, we plot the functions $S(q)$, $\hat S(q)$, and $\Delta S(q)$ in Figure \ref{fig:bing2d}. It shows that the implicit entropy $S(q)$ blows up as $q \to \pm \frac12$, but this singular behaviour is well captured by the explicit quasi-entropy $\hat S(q)$. Meanwhile, the correction term $\Delta S=S-\hat S$ remains bounded over the domain $(-\frac12,\frac12)$, and can therefore be handled efficiently via numerical approximation.

\begin{figure}
\centering
\includegraphics[width=0.45\textwidth]{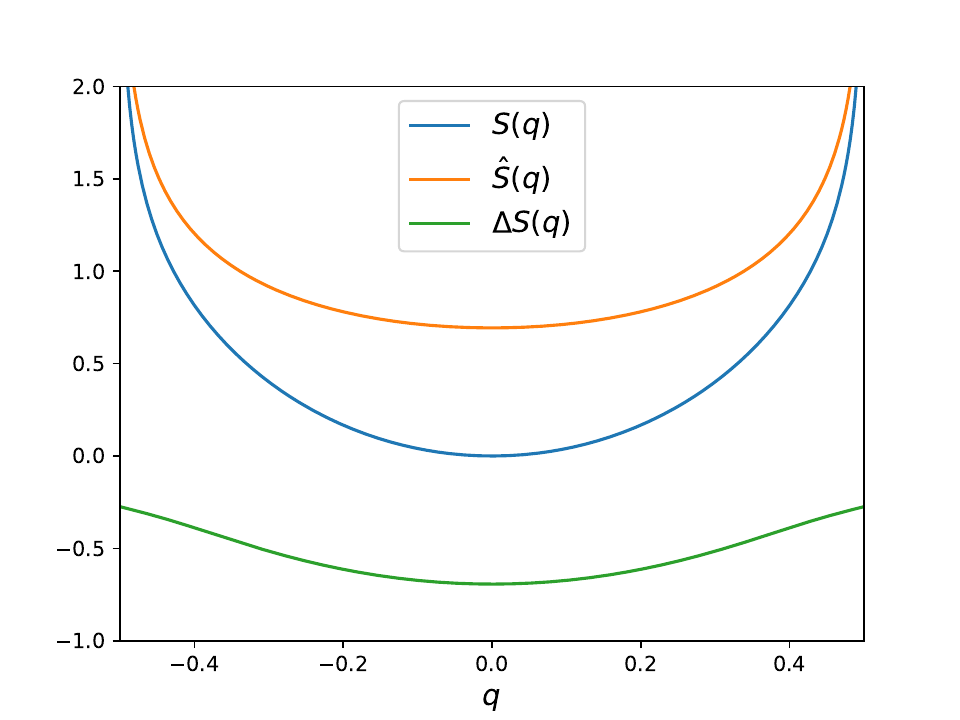}
\caption{The functions $S(q),\hat S(q)$ and $\Delta S(q)$ in the 2D Bingham closure.} \label{fig:bing2d}
\end{figure}

\begin{remark} \label{rmk:weyl-ineq}
Proposition \ref{prop:S2d-asymp} states that $\Delta S$ is Lipschitz continuous with respect to the eigenvalues $\lambda(\vb Q)$. By Weyl's inequality of eigenvalues (see also \cite[Thm.~8.1.4]{golub_matrix_2013}), for any pair of symmetric tensors $\vb Q_1,\vb Q_2$, the difference in their eigenvalues $|\lambda(\vb Q_1)-\lambda(\vb Q_2)|$ is bounded by the matrix norm $|\vb Q_1-\vb Q_2|,$ so $\Delta S$ is also Lipschitz continuous with respect to the tensor inputs, and Theorem \ref{thm:SQ-asymp} is proven.
\end{remark}

\section{Entropy decomposition in dimension 3} \label{sec:bing3d}

We have established the entropy decomposition in the 2D setting, but NLC systems are most often modelled and studied in 3D space. It is therefore natural to extend the same analysis into the 3D case ($d=3$) in this section. For notational convenience, we reuse the symbols $Z,S,\mu,q,$ etc., now interpreted in the 3D setting.

\subsection{Bingham distribution in 3D}

First, we briefly recall the basic properties of the 3D Bingham distribution.
Given $\vb Q\in\mcalQ_p$, we seek for a 3D Bingham distribution of the form similar to \eqref{bingham-2d}
\begin{equation} \label{bingham}
    f(\vb{m})=\frac{1}{4\pi Z} \e^{\vb B:\vb{mm}},\ Z=\frac{1}{4\pi}\int_{\BbbS^2} \e^{\vb B:\vb{mm}}\d\vb m,
\end{equation}
where $\vb B\in\mcalS_0$, so that $f_{\vb Q}$ minimizes the entropy $S[f]$ subject to $\<\vb{mm}-\frac1d\vb I\>_{f_{\vb Q}}=\vb Q.$
By analysing the function $\ln Z$ in 3D, we get that all basic properties listed in Lemma \ref{lem:Z2d} still hold, with the only modification of changing $d$ from 2 to 3. The proofs are identical, so they are omitted. 

Same as the 2D case, $Z$ is rotationally invariant and depends solely on the eigenvalues, and $\vb Q,\vb B$ share the same eigenframe. We can assume w.l.o.g.~that they have a diagonal form:
\begin{equation}
    \vb B=\diag(\mu)=\mqty[\mu_1\\&\mu_2\\&&\mu_3],\ 
    \vb Q=\diag(q)=\mqty[q_1\\&q_2\\&&q_3].
\end{equation}
We rewrite the function $Z$ as determined by the eigenvalues with the same symbol:
\begin{equation} \label{Z-mu}
    Z(\mu)=\frac{1}{4\pi} \int_{\BbbS^2} \exp\left( \sum_{i=1}^3 \mu_i m_i^2 \right)\d\vb m.
\end{equation}
Note that $Z(\mu)$ is well-defined for all $\mu\in\BbbR^3$.
Similarly, $q$ is the projected gradient of $\ln Z(\mu)$ onto the plane $\{x_1+x_2+x_3=0\}$, which we still denote by $\nabla_0$. The components of $q$ are
\begin{equation} \label{qi-def}
    q_i=\<m_i^2-\frac13\>_f=\frac{1}{4\pi Z} \int_{\BbbS^2} \qty(m_i^2-\frac13) \exp\left( \sum_{i=1}^3 \mu_i m_i^2 \right)\d\vb m,\ i=1,2,3.
\end{equation}
The physical condition $\vb Q\in\mcalQ_{p}$ is now expressed in terms of $q$ as
\begin{equation} \label{eigq-phy}
    q\in \Delta_0=\left\{x\in\BbbR^3: x_1+x_2+x_3=0, x_i\in\left(-\frac13,\frac23\right)\right\},
\end{equation}
which can be visualized as an equilateral triangle $\Delta_0$ in $\BbbR^3$ whose vertices are $[\frac23,-\frac13,-\frac13]^T$ and its permutations.


The existence and uniqueness of the Bingham closure in 3D can also be established similarly to Theorem \ref{thm:bing-cl-2d}. This result is well-known and has been proven independently many times \cite{ball_nematic_2010,li_local_2015}.
\begin{theorem}\label{thm:bing-cl}
For all $\vb Q\in\mcalQ_p$ with $d=3$, there exists a unique $\vb B\in\mcalS_0$ such that $\vb Q=\nabla_0(\ln Z(\vb B)).$
\end{theorem}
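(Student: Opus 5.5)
We will prove Theorem \ref{thm:bing-cl} along the lines of the 2D case, i.e.\ by convex duality. The statement follows once we show that the map $\Phi:\mcalS_0\to\mcalQ_p$, $\vb B\mapsto\nabla_0\ln Z(\vb B)$, is a bijection.

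\emph{Injectivity.} By the 3D analogue of Lemma \ref{lem:Z2d}, $\ln Z$ is smooth and strictly convex on $\mcalS_0$. Hence $\Phi$ is injective: if $\Phi(\vb B_1)=\Phi(\vb B_2)$, then
\[
\qty(\Phi(\vb B_1)-\Phi(\vb B_2)):(\vb B_1-\vb B_2)=0,
\]
which by strict monotonicity of the gradient of a strictly convex function forces $\vb B_1=\vb B_2$. The inclusion $\Phi(\vb B)\in\mcalQ_p$ also follows from that lemma. This settles uniqueness.

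\emph{Existence via a variational problem.} Fix $\vb Q\in\mcalQ_p$ and set
\[
G(\vb B)=\ln Z(\vb B)-\vb B:\vb Q .
\]
$G$ is smooth and strictly convex, and its critical points are exactly the solutions of $\Phi(\vb B)=\vb Q$. So it suffices to show that $G$ is coercive on $\mcalS_0$; then a minimizer exists and is the required $\vb B$.

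\emph{Coercivity (the main step).} By rotational invariance \eqref{rot-inv}, take $\vb B=t\vb E$ with $t>0$ and $\vb E=\vb R\diag(e_1,e_2,e_3)\vb R^T$, where $|\vb E|=1$, $\tr\vb E=0$ and $e_1\ge e_2\ge e_3$. Write $\lambda_{\max}(\vb E)=e_1>0$. By Laplace's method,
\[
\frac1t\ln Z(t\vb E)\to\max_{\vb m\in\BbbS^2}\vb E:\vb{mm}=e_1 \quad (t\to\infty).
\]
On the other hand, $\vb E:\vb Q=\vb E:(\vb Q+\frac13\vb I)$ because $\tr\vb E=0$. Since $\vb P=\vb Q+\frac13\vb I$ is positive definite with trace $1$,
\[
\vb E:\vb P\le e_1\,\tr\vb P=e_1 .
\]
We need strict inequality, uniform over the compact unit sphere of $\mcalS_0$. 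This comes from $\lambda_{\min}(\vb P)>0$:
\[
\vb E:\vb P=e_1-\sum_i (e_1-e_i)\,(\vb R^T\vb P\vb R)_{ii}\le e_1-\lambda_{\min}(\vb P)(2e_1-e_2-e_3)=e_1-3\lambda_{\min}(\vb P)\,e_1 .
\]
Also $e_1\ge c>0$ on the unit sphere of $\mcalS_0$. Combining, for large $t$,
\[
G(t\vb E)\ge t\qty(3\lambda_{\min}(\vb P)\,c-o(1))\to\infty
\]
uniformly in $\vb E$.

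For a clean argument, the Laplace asymptotic can be replaced by an elementary lower bound: integrate over the cap where $\vb E:\vb{mm}\ge e_1-\varepsilon$, which has measure bounded below by $c\,\varepsilon$. This gives
\[
\ln Z(t\vb E)\ge t(e_1-\varepsilon)+\ln(c\,\varepsilon),
\]
and choosing $\varepsilon$ small relative to $3\lambda_{\min}(\vb P)c$ yields coercivity. Convexity then gives a unique minimizer, which completes the proof.
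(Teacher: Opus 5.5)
Your argument is correct, but it takes a different route from the paper. The paper gives no separate 3D proof. It says the result follows ``similarly to Theorem~\ref{thm:bing-cl-2d}'' and cites \cite{ball_nematic_2010,li_local_2015}.

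The 2D argument it points to is scalar. After reducing to eigenvalues, $q(\mu)=\mathrm{I}_1(\mu)/(2\mathrm{I}_0(\mu))$ is strictly increasing (by Cauchy--Schwarz) with limits $\pm\frac12$ (by Bessel asymptotics). So it is a bijection $\BbbR\to(-\frac12,\frac12)$, and rotational equivariance lifts this to tensors. That monotonicity/intermediate-value mechanism does not carry over literally to 3D, where the eigenvalue map goes from the plane $\{\mu_1+\mu_2+\mu_3=0\}$ to the triangle $\Delta_0$. Injectivity still comes from strict convexity, but surjectivity needs control at infinity. Your coercivity estimate supplies exactly that; it is a direct proof that the mean map of this exponential family is onto the interior of the convex hull of $\vb{mm}-\frac13\vb I$.

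What your Legendre-duality route buys:
\begin{itemize}
\item It is dimension-free, so it works verbatim for every $d$, including the $d\ge4$ case raised in the conclusion.
\item It needs no special functions.
\item It shows exactly where $\vb Q\in\mcalQ_p$ enters: the coercivity constant is proportional to $\lambda_{\min}(\vb Q+\frac13\vb I)$.
\item Since the minimizer satisfies $G(\vb B)\le G(0)=0$, it also gives an a priori bound on $|\vb B|$ in terms of $\lambda_{\min}(\vb Q+\frac13\vb I)$.
\end{itemize}
In exchange, the paper's 2D route yields the explicit Bessel formula for $q(\mu)$, which is reused in the proof of Proposition~\ref{prop:S2d-asymp}.

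One detail you should write out. The pointwise Laplace limit does not by itself give uniformity in $\vb E$, so rely on your elementary bound and justify its uniformity as follows. The set $\{\vb m:\vb E:\vb{mm}\ge e_1-\varepsilon\}$ contains $\{\vb m:1-(\vb r_1\cdot\vb m)^2\le\varepsilon/(e_1-e_3)\}$, where $\vb r_1$ is the leading eigenvector of $\vb E$. On the unit sphere of $\mcalS_0$ one has $e_1-e_3\le\sqrt2$ and $e_1\ge1/\sqrt6$. Hence the area is at least $c\,\varepsilon$, and your coercivity constant is positive, both uniformly in $\vb E$.
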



\subsection{Entropy decomposition}

Then, we discuss Theorem \ref{thm:SQ-asymp} in the 3D case, which involves more complicated asymptotic analysis.

Substituting $f_{\vb Q}$ into the molecular entropy $S[f]$, we find that the 3D entropy $S(\vb Q)$ has the same form as \eqref{SQ-2d}. Therefore, it is still the convex conjugate of $\ln Z(\vb B)$, and $\vb B=\nabla_0 S(\vb Q)$. In terms of the eigenvalues $\mu=\lambda(\vb B),q=\lambda(\vb Q)$, we have that
\begin{equation} \label{S-q}
    S(q)=q\cdot \mu - \ln Z(\mu)
\end{equation}
and that $\nabla_0 S(q)=\mu$.
Here, $\mu$ is determined by the inverse mapping of $q=q(\mu),$ whose existence and smoothness are guaranteed by Theorem \ref{thm:bing-cl}. $S(q)$ is plotted in Figure \ref{fig:S}.

\begin{figure}[th]
\centering
\includegraphics[width=\textwidth]{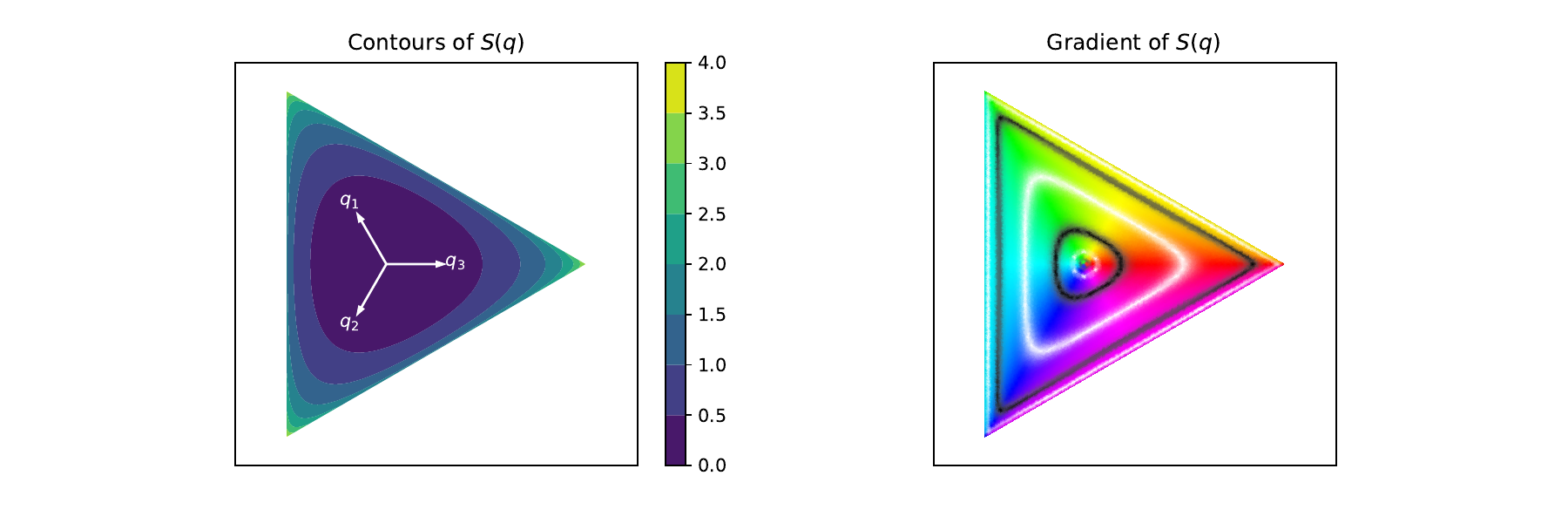}
\caption{Graph of $S(q)$ on its domain $\Delta_0$. Left panel: contour plots of $S$ as a function of $q$; right panel: gradient of $S$. 
The plane $\{q_1+q_2+q_3=0\}\subset\BbbR^3$ is parametrized isometrically by $(\xi,\eta)\in\BbbR^2$ with $\frac{\xi}{\sqrt{6}} [-1,-1,2]^T + \frac{\eta}{\sqrt{2}}[1,-1,0]^T.$
Visualization of $\nabla_0 S$ as a two-dimensional mapping within the $(\xi,\eta)$ plane uses alternating black and white contour lines to represent level curves of the logarithm of the magnitude of the function, and hue (in the order of the RGB colour wheel) for its argument.
Projections of $q_1,q_2,q_3$-axes onto $\{q_1+q_2+q_3=0\}$ are plotted in the left panel.} \label{fig:S}
\end{figure}
 
To simplify notations, we extend the domain of $Z$ \eqref{Z-mu} to all of $\mu\in\BbbR^3$, with the expressions for $q_i$ \eqref{qi-def} unchanged. Denote by $z(\mu)=\nabla[\ln Z(\mu)]$ the regular gradient in $\BbbR^3$. Then, one immediately finds that
\begin{equation} \label{zi-qi}
    z_i=\<m_i^2\>_f=q_i+\frac13,
\end{equation}
i.e.~they differ by a projection.
In addition, there is a ``shift invariance'' in this extension: $\mu+t\bm 1$ and $\mu$ represent essentially the same Bingham distribution for all $t\in\BbbR$ ($\bm 1\triangleq[1,1,1]^T$). Thus, we have that
\begin{equation}\label{Z-mupt-eq}
Z(\mu+t\bm 1)=\e^t Z(\mu),\ z(\mu+t\bm 1)=z(\mu),\ q(\mu+t\bm 1)=q(\mu).
\end{equation}
We shift $\mu$ and let $\mu_1=0\ge\mu_2\ge\mu_3$, which reduces the number of variables by 1 (see e.g.~\cite{kent_asymptotic_1987,luo_fast_2018}).
With the notations established, we present a statement parallel to Proposition \ref{prop:S2d-asymp}. 
We reiterate here that the decomposition of $S(q)$ into an explicit singular leading term $\hat S(q)$ and an implicit Lipschitz continuous correction term $\Delta S(q)$ is aimed at facilitating computation.

\begin{proposition} \label{prop:SmShat-lip}
Work under the notations $Z(\mu)$ \eqref{Z-mu}, $q_i(\mu)$ \eqref{qi-def}, $z_i(\mu)$ \eqref{zi-qi} and
\begin{equation} \label{Shat-q}
    \hat S(q)=-\frac12\ln\det(\vb Q+\frac{\vb I}{3})
    =-\frac12\sum_i\ln\left( q_i+\frac13 \right).
\end{equation}
Then, $\Delta S=S-\hat S$ is a uniformly Lipschitz continuous function, whose gradient with respect to $q\in\Delta_0$ is uniformly bounded.
\end{proposition}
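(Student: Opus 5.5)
Following the proof of Proposition \ref{prop:S2d-asymp}, we will bound the gradient of $\Delta S$ on $\Delta_0$ by rewriting it as a function of $\mu$. By convex duality, $\nabla_0 S(q)=\mu$. Also $\nabla \hat S(q)=-\frac12\,(1/z_1,1/z_2,1/z_3)$ with $z_i=q_i+\frac13$. Hence $\nabla_0\Delta S$ is the projection onto $\{x_1+x_2+x_3=0\}$ of the vector
\[
g(\mu)=\mu+\frac12\left(\frac{1}{z_1(\mu)},\frac{1}{z_2(\mu)},\frac{1}{z_3(\mu)}\right).
\]
Theorem \ref{thm:bing-cl} makes $\mu\mapsto q$ a bijection, so it suffices to show that $g(\mu)$ is bounded modulo multiples of $\bm 1$, uniformly in $\mu$. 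By the shift invariance \eqref{Z-mupt-eq} and permutation symmetry we may assume $\mu_1=0\ge\mu_2\ge\mu_3$ and write $a=-\mu_2\ge0$, $b=-\mu_3\ge a$. Since $\bm 1$ is projected away, the goal becomes
\[
\frac{1}{2z_1}-\frac{1}{2z_2}-a=O(1),\qquad \frac{1}{2z_1}-\frac{1}{2z_3}-b=O(1).
\]
Remark \ref{rmk:weyl-ineq} then transfers Lipschitz continuity from eigenvalues to tensors. We also use $z_1\ge\frac13$, which holds because $\mu_1$ is the largest exponent, so $1/z_1$ is bounded. It therefore suffices to prove
\[
\frac{1}{2z_2}=a+O(1),\qquad \frac{1}{2z_3}=b+O(1).
\]

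The key step is an asymptotic analysis of $z_i=\<m_i^2\>_f$ with $f\propto \e^{-a m_2^2-b m_3^2}$ on $\BbbS^2$. We want $z_3\approx \frac{1}{2b}$ whenever $b$ is large, uniformly in $a\in[0,b]$, and similarly for $z_2$ in $a$.

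We first parametrize the sphere by $m_3=\cos\theta$ near the equator, or integrate out one coordinate. Using $\d\vb m=\d m_3\,\d\phi$ (Archimedes), the $m_3$-marginal is
\[
\propto \e^{-b m_3^2}\,\mathrm{I}_0\!\left(\frac{a(1-m_3^2)}{2}\right)\e^{-a(1-m_3^2)/2}.
\]
This reduces the problem to the 2D Bessel functions already controlled in Lemma \ref{lem:bessel} and \eqref{bessel-asymp}. Define
\[
h(s)=\ln\!\left[\mathrm{I}_0\!\left(\frac{a s}{2}\right)\e^{-as/2}\right],\qquad s=1-m_3^2.
\]
The effective weight in $t=m_3^2$ is then
\[
\e^{-bt+h(1-t)}\,t^{-1/2}\,\d t .
\]
By \eqref{q2d-eq-I1}, $h'$ is bounded between $-a/2$ and $0$, and more precisely $h'(s)\approx -\frac{1}{2s}$ for large $as$. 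The expected result is $\<t\>\approx\frac{1}{2(b-a)+\dots}$; no wait: $\e^{-bt+h(1-t)}$ with $h'\in[-a/2,0]$ gives a rate between $b-\frac a2$ and $b$. We need exact accuracy $\frac{1}{2z_3}=b+O(1)$, so this crude bound is not enough. Instead we will work directly with $z_3$ and $z_2$ symmetrically. We will write
\[
\frac{1}{z_3}=2b+R(a,b)
\]
and show $R$ is bounded via a Laplace-type expansion: the $\Gamma(\frac12)$-type integral $\int t^{1/2}\e^{-\beta t}\,\d t/\int t^{-1/2}\e^{-\beta t}\,\d t=\frac{1}{2\beta}$, with $\beta=b-(\text{contribution from }a)$.

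The main obstacle is uniformity across regimes. These are: both $a,b$ bounded (trivial, by compactness and smoothness); $a$ bounded with $b\to\infty$ (the distribution concentrates on the great circle $m_3=0$, and only the $z_3$ estimate matters); and $a,b\to\infty$ with arbitrary ratio, including $a\approx b$ and $a\ll b$ with $a\to\infty$ (concentration at the poles $\pm e_1$, where the distribution is approximately Gaussian in $(m_2,m_3)$ with variances $\frac{1}{2a}$ and $\frac{1}{2b}$). Near $\pm e_1$ we will use the local coordinates $(m_2,m_3)$ with $m_1=\sqrt{1-m_2^2-m_3^2}$. The measure $\d\vb m=\d m_2\,\d m_3/m_1$ contributes a factor $1+O(|m|^2)$, which after rescaling produces corrections $O(1/a)$ relative to $\frac{1}{2a}$, and these translate into $O(1)$ errors in $1/z_2$. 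For the case $a$ bounded and $b$ large, the covariance factorizes approximately: $z_3=\frac{1}{2b}(1+O(1/b))$ uniformly in bounded $a$, by a one-dimensional Laplace argument in $m_3$ whose prefactor depends smoothly on $a$. To glue the regimes uniformly, we will split the integral for $z_3$ into $|m_3|<\delta$ and its complement. On $|m_3|<\delta$ we use the exact factorization $\e^{-bm_3^2}\times(\text{function of }m_3^2\text{ with bounded log-derivative})$; the crucial point is that this log-derivative is bounded, not of size $a$, because after conditioning on $m_3$ the $a$-dependence is a normalized Bessel factor. The complement is exponentially small. We expect this uniform Laplace estimate for $z_3$ in $b$, independent of $a\in[0,b]$, to be the hardest step, followed by the analogous estimate for $z_2$ obtained by conditioning on $m_2$ instead.
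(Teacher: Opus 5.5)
Your final plan (the last paragraph) is sound, and it reaches the key estimate by a genuinely different and shorter route than the paper. The reduction to $\frac{1}{2z_2}=a+O(1)$, $\frac{1}{2z_3}=b+O(1)$ is the paper's. The one-dimensional representation is also shared: the denominator of \eqref{z2-expr} is, up to constants, the normalizer of your $m_3^2$-marginal, with $x=a/2$, $y=b/2$, $u=1-m_3^2$. The difference is which moment you read off.

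The paper extracts $z_2$, i.e.\ it conditions on the \emph{other} coordinate. The quantity of interest is then the conditional moment $\phi(xu)/(2x)$, with $\phi(z)=z(\mathrm{I}_0(z)-\mathrm{I}_1(z))/\mathrm{I}_0(z)$, averaged against $w_y$. This is far from $1/(4x)$ where $xu=O(1)$, and that region is not negligible when $y$ is small. Uniformity in $y$ therefore costs the coarse Lemma \ref{lem:z2-bnd}, the cancellation $\int_0^\infty\psi_1=0$ obtained from the Laplace transforms of $\mathrm{I}_0,\mathrm{I}_1$, and an integration by parts against $w_y'$.

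You extract $z_3$ from the same marginal. The signal is then the explicit factor $\exp(-bm_3^2)$, and $a$ enters only through $G_a(s)=\exp(-as/2)\mathrm{I}_0(as/2)$, whose logarithmic derivative is exactly $-\phi(as/2)/s$. Because $\phi$ is bounded (shown in the proof of Lemma \ref{lem:z2-bnd}), on $m_3^2\le\frac12$ the weight is log-Lipschitz in $m_3^2$ with a constant $L$ independent of $a$. Sandwiching between $\exp(-(b\pm L)m_3^2)$, together with your exponentially small complement, gives $z_3=\frac{1}{2b}(1+O(b^{-1}))$ directly, with no cancellation identity. In exchange, the paper's route proves one estimate uniformly over the nuisance parameter on all of $[0,\infty)$, obtains the other component by swapping indices, and yields the two-sided decay bound of Lemma \ref{lem:z2-bnd} as a by-product.

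As written, the sketch still has loose ends.
\begin{enumerate}
\item Under your ordering $a\le b$, the $z_2$ estimate is not the mirror image of the $z_3$ one. The nuisance parameter is now $b$, which may be exponentially larger than $a$, so the crude bound $G_b(s)/G_b(1)\le C\sqrt{1+b}$ does not make the complement $m_2^2>\frac12$ negligible uniformly in $b$. You need $G_b(s)\le Cs^{-1/2}G_b(1)$ for $s\in(0,1]$, uniformly in $b\ge0$. This follows from $\exp(-z)\mathrm{I}_0(z)$ being comparable to $(1+z)^{-1/2}$, and it makes that tail exponentially small in $a$ relative to the main part. Equivalently, prove the $z_3$ estimate for every $a\ge0$ and swap indices, as the paper does.
\item The regime of bounded $a$ and $b\to\infty$ is not ``only about $z_3$''. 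You must also show $z_2\ge c>0$ uniformly in $b$ (the paper's Step 2). This follows because $G_b(1-m_2^2)$ is increasing in $m_2^2$.
\item The claim $z_1\ge\frac13$ needs an argument such as Lemma \ref{lem:q-mu-order}.
\item Your reduced goal has a sign slip: $g_1-g_2=\frac{1}{2z_1}+a-\frac{1}{2z_2}$.
\end{enumerate}
The pole-Gaussian heuristics and the ``$\beta=b-\dots$'' passage are not uniform and should be replaced by the final argument.
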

\begin{proof}
To prove Lipschitz continuity of the correction term $\Delta S$, we estimate the uniform bound of the gradient $\nabla_0(\Delta S)$, which equals
\begin{equation} \label{grad-SmShat}
    \nabla_0 (\Delta S) = \left[\mu_1+\frac{1}{2z_1}, \mu_2+\frac{1}{2z_2},\mu_3+\frac{1}{2z_3} \right]^T - c\bm 1,
\end{equation}
where $z_i=q_i+\frac13$, $\bm 1=[1,1,1]^T$ and $c=\frac13(\mu_1+\mu_2+\mu_3)+\frac16(\frac{1}{z_1}+\frac{1}{z_2}+\frac{1}{z_3})$ projects the gradient to $S_0$. The gradient \eqref{grad-SmShat} is fully explicit in $\mu$, which we can use as the primary variable.
By shift invariance \eqref{Z-mupt-eq}, we can assume w.l.o.g.~that $\mu_1=0\ge \mu_2\ge \mu_3$.
The proof is separated into several independent steps relying on several lemmas. To emphasize the main approach, proofs of the lemmas are postponed until the end of this proof.

\textbf{Step 1: Problem reduction.}
Note the elementary fact that for all $a+b+c=0$,
\[(a-b)^2+(b-c)^2+(c-a)^2 = 3(a^2+b^2+c^2).\]
Hence, the norm of \eqref{grad-SmShat}, whose components sum to 0 by definition, is controlled by the differences between its components:
\begin{equation} \label{gradSmShat-le-mui2zimuj2zj}
|\nabla_0 (\Delta S)| \le C \max_{i\neq j} \left| \qty(\mu_i+\frac{1}{2z_i}) - \qty(\mu_j+\frac{1}{2z_j})\right|.
\end{equation}
($z_i=q_i+\frac13$ has been substituted)
If we ensure that one component $\mu_i+\frac{1}{2z_i}$ is bounded, then the problem reduces to estimating the other two components.

We assert that $q_i$ are in the same order as $\mu_i$, which is very intuitive since a larger $\mu_i$ stands for more tendency for the corresponding orientation in the Bingham distribution \eqref{bingham}. 
\begin{lemma} \label{lem:q-mu-order}
    If $\mu_i>\mu_j$, then $q_i>q_j$.
\end{lemma}

By Lemma \ref{lem:q-mu-order}, $z_1$ is the largest among $z_i$ if $\mu_1=0\ge \mu_2\ge \mu_3$, so $\frac13\le z_1\le 1$. Thus, $\mu_1+\frac{1}{2z_1}$ is uniformly bounded for all $\mu_2,\mu_3\le 0$. 
We rewrite $\mu_2=-2x$ and $\mu_3=-2y$, so $x,y\ge 0$.  
Using the spherical coordinate parametrization $m_2=\sin\theta\cos\vph$ and $m_3=\cos\theta$, we get the explicit expression of $z_2=\<m_2^2\>$:
\[z_2(0,-2x,-2y)=\fracd{
    \int_0^{\pi} \e^{-2y \cos^2\theta}\sin\theta\d\theta \int_0^{2\pi} (\sin\theta\cos\vph)^2 \e^{-x \sin^2\theta(1+\cos 2\vph)}\d\vph
}{
    \int_0^{\pi} \e^{-2y\cos^2\theta}\sin\theta\d\theta \int_0^{2\pi} \e^{-x \sin^2\theta(1+\cos 2\vph)}\d\vph
}.\]
After a variable change $u=\sin^2\theta$ and substitution of the Bessel functions $\mathrm{I}_0, \mathrm{I}_1$ (see \eqref{bessel-I0}),
\begin{equation} \label{z2-expr}
    z_2(0,-2x,-2y) = \fracd{
        \int_0^1 u  \e^{-x u} (\mathrm{I}_0(x u)-\mathrm{I}_1(x u)) w_y (u)\d u
    }{
        2\int_0^1 \e^{-x u} \mathrm{I}_0(x u) w_y (u)\d u
    },
\end{equation}
where $w_y(u)=\e^{2yu}(1-u)^{-\frac12}$ is a strictly increasing weight function. 
Proposition \ref{prop:SmShat-lip} then reduces to checking the uniform bound
\begin{equation} \label{sup-1z2x}
    \sup_{x,y\ge 0} \left| \frac{1}{2z_2} - 2x \right| = \sup_{x,y\ge 0} \left| \frac{1-4xz_2}{2z_2} \right|.
\end{equation}
The bound for $\frac{1}{2z_3}-2y$ follows from symmetry.

\textbf{Step 2: Estimation in bounded case.}
We first deal with the trivial case where $x$ is bounded. If $0\le x \le M$, then by the property of Bessel functions (Lemma \ref{lem:bessel}), for all $u\in[0,1]$, $\e^{-xu}[\mathrm{I}_0(xu)-\mathrm{I}_1(xu)]>0$ and $\e^{-xu}\mathrm{I}_0(xu)>0$. Then, there exists positive constants $c(M), C(M)$ such that
\begin{gather*}
    c(M) \le \e^{-xu}(\mathrm{I}_0(xu)-\mathrm{I}_1(xu)) \le C(M), \\
    c(M) \le \e^{-xu} \mathrm{I}_0(xu) \le C(M)
\end{gather*}
for all $x\le M, u\in[0,1]$. Therefore,
\[0< \frac{1}{z_2} \le\frac{2C(M)}{c(M)} \fracd{
    \int_0^1  w_y (u)\d u
}{
    \int_0^1 u w_y (u)\d u
}\le \frac{2C(M)}{c(M)}  \fracd{2\int_{\frac12}^1 w_y(u)\d u}{\frac12 \int_{\frac12}^1 w_y(u)\d u} = 4C'(M). \]
In the second inequality, we use the strict increasing property of $w_y(u)$: $\int_0^{\frac12} w_y(u)\d u<\int_{\frac12}^{1} w_y(u)\d u$. Therefore, for $0\le x\le M$ and all $y\ge 0$, there is a uniform bound
\begin{equation} \label{1z2x-le-CM}
    \sup_{\substack{0\le x\le M \\y\ge 0}} \left| \frac{1}{2z_2}-2x \right| \le C(M).
\end{equation}

\textbf{Step 3: Estimation in unbounded case.}
We proceed to the uniform bound on $\frac{1}{2z_2}-2x$ when $x\to+\infty$. First, we present a coarse $O(x^{-1})$ estimate of the denominator $z_2$.
\begin{lemma}\label{lem:z2-bnd}
Let $\mu=[0,-2x,-2y]^T$. There exists $M>0$ and constants $C_1,C_2$ (dependent on $M$) such that for all $x\ge M, y\ge 0$,
\[\frac{1}{2x+C_1 \sqrt{x}}\le 2z_2(\mu) \le \frac{C_2}{x}.\]
That is, $z_2$ decreases at the same order as $x^{-1}$, and the order constant is independent of $y$. 
\end{lemma}
\begin{remark}
Lemma \ref{lem:z2-bnd} significantly improves the existing estimates of the Bingham moments $z_i$ (such as \cite[Cor.~4]{ball_nematic_2010} or \cite[Prop.~2]{li_local_2015}) by asserting a precise decay rate of $z_2=O((\mu_1-\mu_2)^{-1})$, and, more importantly, its independence on the third eigenvalue $\mu_3$. A symmetric result for $z_3$ and $y=\frac{\mu_1-\mu_3}{2}$ follows obviously: 
\[ \frac{1}{2y+C_1\sqrt{y}} \le 2 z_3(0,-2x,-2y) \le \frac{C_2}{y},\ y\ge M, x\ge 0.\]
On the one hand, if we let $\mu(t)=[2t,-t,-t]^T$ where $t\to+\infty$, then $x,y\to-\infty$, and the upper bounds $z_2\le C_2 x^{-1}, z_3\le C_2 y^{-1}$ assert that $z_2,z_3\to 0$. Hence, the corresponding $q(\mu(t))$ converges to the vertex $V_1=[\frac32,-\frac13,-\frac13]^T$ of $\Delta_0$.
On the other hand, if $\mu$ is bounded and traceless with $\mu_1\ge\mu_2\ge\mu_3$, then the two lesser moments $z_j$ ($j=2,3$) are strictly bounded away from zero by the lower bounds $z_j \ge c(\mu_1-\mu_j)^{-1}$, where $c$ is some constant related to $M$ and $C_1$. Correspondingly, $q=z-\frac13\bm 1$ is strictly bounded away from the edges of $\Delta_0$ by the same amount.
\end{remark}

Then, we present a uniform bound on the numerator $1-4xz_2$, which is also $O(x^{-1})$.
\begin{lemma} \label{lem:1m4xz2-bnd}
Let $\mu=[0,-2x,-2y]^T$. There exists $M>0$ and a constant $C$ (dependent on $M$) such that for all $x\ge M,y\ge 0$
\[|1-4x z_2| \le \frac{C_3}{x}. \]
\end{lemma}
Combining Lemma \ref{lem:z2-bnd} and \ref{lem:1m4xz2-bnd}, we find that for all $x\ge M,y\ge 0$,
\[\left| \frac{1}{2z_2}-2x \right| \le  \frac{ C_3(2x+C_1\sqrt x)}{x} < C \]
is uniformly bounded. Combining it with the estimate for the bounded case \eqref{1z2x-le-CM}, we obtain \eqref{sup-1z2x}. Finally, by \eqref{gradSmShat-le-mui2zimuj2zj}, we conclude that the gradient $\nabla_0(S-\hat S)$ \eqref{grad-SmShat} is uniformly bounded for all $q\in\Delta_0$, and establish Proposition \ref{prop:SmShat-lip}.
\end{proof}

By Proposition \ref{prop:SmShat-lip} and the same argument using Weyl's inequality as Remark \ref{rmk:weyl-ineq}, $\Delta S$ is also uniformly Lipschitz continuous with respect to tensor inputs, so the 3D version of Theorem \ref{thm:SQ-asymp} also holds.

\begin{remark}
The log-determinant function \eqref{Shat-q} highly resembles the \textit{quasi-entropy} first proposed by \cite{xu_quasi-entropy_2022}, so we will refer to it by that name. The original quasi-entropy was constructed phenomenologically, but yielded convincing results in numerical tests.
Our Proposition \ref{prop:SmShat-lip} explains further that this quasi-entropy (slightly different from the original) correctly models the blow-up behaviour of $S(q)$ near the boundary up to a Lipschitz continuous perturbation, highlighting the physical rationale of this surrogate function.
\end{remark}


\subsection{Proofs of lemmas}

First, we present a simple and straightforward proof of Lemma \ref{lem:q-mu-order} about the ordering of eigenvalues.
\begin{proof}[Proof of Lemma \ref{lem:q-mu-order}]
W.l.o.g., we let $\mu_1>\mu_2$. We compute \eqref{qi-def} with the spherical coordinates $m_1=\sin\theta\cos\vph,m_2=\sin\theta\sin\vph, m_3=\cos\theta$: \small
\begin{align*}
    &\quad q_1-q_2\\
    &=\fracd{
        \int_0^\pi\e^{\mu_3\cos^2\theta}\sin\theta\d\theta \int_0^{2\pi} \e^{\sin^2\theta(\mu_1\cos^2\vph+\mu_2\sin^2\vph)}\sin^2\theta(\cos^2\vph-\sin^2\vph)\d\vph
    }{
        \int_0^\pi\e^{\mu_3\cos^2\theta}\sin\theta\d\theta \int_0^{2\pi} \e^{\sin^2\theta(\mu_1\cos^2\vph+\mu_2\sin^2\vph)}\d\vph
    }\\
    &=
    \fracd{
        \int_0^\pi \e^{\mu_3\cos^2\theta+\mu_2\sin^2\theta}\sin\theta\d\theta \int_0^{\pi/4} [\e^{(\mu_1-\mu_2)\sin^2\theta\cos^2\vph}-\e^{(\mu_1-\mu_2)\sin^2\theta\sin^2\vph}]\sin^2\theta\cos 2\vph\d\vph
    }{
        \int_0^\pi\e^{\mu_3\cos^2\theta+\mu_2\sin^2\theta}\sin\theta\d\theta \int_0^{\pi/4} [\e^{(\mu_1-\mu_2)\sin^2\theta\cos^2\vph} +\e^{(\mu_1-\mu_2)\sin^2\theta\sin^2\vph}]\d\vph
    }>0.
\end{align*}\normalsize
\end{proof}

Then, we prove Lemma \ref{lem:z2-bnd}, which states the precise $O(x^{-1})$ decay rate of $z_2$.
\begin{proof}[Proof of Lemma \ref{lem:z2-bnd}]
For the upper bound, we use Cauchy's mean value theorem on the fraction in \eqref{z2-expr}. There exists $\xi\in(0,1)$ such that
\[2z_2 x = \frac{x\xi(\mathrm{I}_0(x\xi)-(\mathrm{I}_1(x\xi)))}{\mathrm{I}_0(x\xi)} \triangleq \phi(x\xi), \]
where $\phi(z)=\frac{z(\mathrm{I}_0(z)-\mathrm{I}_1(z))}{\mathrm{I}_0(z)}.$
The function $\phi(z)$ is obviously bounded for bounded $z$; moreover, the asymptotic expansions \eqref{bessel-asymp} and \eqref{I0mI1-asymp} show that
\[\mathrm{I}_0(z)= \frac{\e^z}{\sqrt{2\pi z}}(1+O(z^{-1})),\ 
\mathrm{I}_0(z)-\mathrm{I}_1(z)= \frac{\e^z}{\sqrt{2\pi z}}\qty(\frac{1}{2z}+O(z^{-2})),\]
so $\phi(z)$ has a finite limit when $z\to\infty$:
\[\lim_{z\to\infty} \phi(z) = \lim_{z\to\infty}\frac{z\cdot \frac1{2z} + O(z^{-1})}{ 1 + O(z^{-1})} = \frac12.\]
Therefore, $2 z_2 x$ has a uniform upper bound $C_2$, i.e. $2z_2\le \frac{C_2}{x}$.

For the lower bound, we need more details about the asymptotic expansions \eqref{bessel-asymp} and \eqref{I0mI1-asymp}. For any $\ve>0$, there exists $M$ such that when $z\ge M$, the following estimates hold.
\begin{equation} \label{emxI0x-asymp-bnd}
\begin{aligned}
    \frac{1}{\sqrt{2\pi z}}\left[ 1+\left( \frac18-\ve \right)z^{-1} \right]& <\e^{-z} \mathrm{I}_0(z) < \frac{1}{\sqrt{2\pi z}}\left[ 1+\left( \frac18+\ve \right)z^{-1} \right], \\
    \frac{1}{\sqrt{2\pi z}} \left[ \frac12 z^{-1} + \left( \frac{3}{16}-\ve  \right) z^{-2} \right] &< \e^{-z}(\mathrm{I}_0(z)-\mathrm{I}_1(z)) < \frac{1}{\sqrt{2\pi z}} \left[ \frac12 z^{-1} + \left( \frac{3}{16}+\ve  \right) z^{-2} \right].
\end{aligned}
\end{equation}
Then, if $x\ge M$, we use the upper and lower bounds above (which apply on $u\in[\frac{M}{x},1]$) to estimate that
\begin{align*}
    2z_2 x &> \fracd{
        \int_{M/x}^1 xu \qty[\frac12 (xu)^{-\frac32} +\qty(\frac{3}{16}-\ve)(xu)^{-\frac52}] w_y(u)\d u 
    }{
        C(M) \int_0^{M/x} w_y(u)\d u + \int_{M/x}^1 \qty[(xu)^{-\frac12} + \qty(\frac18+\ve)(xu)^{-\frac32}] w_y(u)\d u} \\
    &\triangleq \frac{\frac12 J_0 + \qty(\frac{3}{16}-\ve)x^{-1}J_1}{x^{\frac12} R+J_0 + \qty(\frac18+\ve)x^{-1}J_1},
\end{align*}
where $C(M)$ is the upper bound of $\e^{-z}\mathrm{I}_0(z)$ on $z\in[0,M]$ and the abbreviations are
\begin{equation}  \label{R-J0-J1}
    R = C(M)\int_0^{M/x} w_y(u)\d u, \
    J_0 =  \int_{M/x}^1 u^{-\frac12}w_y(u)\d u, \
    J_1 =  \int_{M/x}^1 u^{-\frac32}w_y(u)\d u.
\end{equation}

We provide bounds for $R,J_0,J_1$ when $x\ge 2M$.
First, by the monotonicity of $w_y(u)=\e^{2yu}(1-u)^{-\frac12}$, we have that
\[R \le C(M) \cdot \frac{M}{x}\cdot w_y\qty(\frac{M}{x}) \le MC(M) x^{-1} w_y\qty(\frac12),\]
and that 
\[J_0 \ge \int_{\frac12}^1 u^{-\frac12} w_y\qty(\frac12)\d u \ge \frac12 w_y\qty(\frac12),\]
provided $x\ge 2M$. Therefore, 
\begin{equation} \label{R-J0-x--1}
    \frac{R}{J_0} \le c_1 x^{-1},
\end{equation}
where $c_1$ depends on $M$ only.
The estimation of $J_1$ is more technical since the integration of $x^{-\frac32}$ on $[0,1]$ is divergent. Taking the derivative of $w_y(u)=\e^{2yu}(1-u)^{-\frac12}$ with respect to $y$, we get that
\[\frac{\pt}{\pt y} \frac{J_1}{J_0}
=\frac{2}{J_0^2} \left[ \left( \int_{M/x}^1 u^{-\frac12} w_y(u)\d u\right)^2 - \int_{M/x}^1 u^{-\frac32} w_y(u)\d u \int_{M/x}^1 u^{\frac12} w_y(u)\d u \right] \le 0\]
by Cauchy-Schwarz inequality. Hence, $\frac{J_1}{J_0}$ is a nonnegative and decreasing function of $y$, and is bounded by its value at $y=0$:
\[ \frac{J_1}{J_0}\le \frac{J_1}{J_0}\Big|_{y=0}
= \fracd{\int_{M/x}^1 u^{-\frac32} (1-u)^{-\frac12}\d u}{
    \int_{M/x}^1 u^{-\frac12} (1-u)^{-\frac12}\d u
}\]
For $x\ge 2M$, the denominator is bounded below by a constant (equal to $\frac12\mathrm{B}(\frac12,\frac12)=\frac{\pi}{2}$), while the numerator is bounded above by
\[\int_{M/x}^1 u^{-\frac32} (1-u)^{-\frac12}\d u
= 2 \sqrt{\frac{1-M/x}{M/x}} \le c' x^{\frac12}.\]
Therefore, we have the estimate
\begin{equation} \label{J1-J0-sqrtx}
    \frac{J_1}{J_0} \le c_2 x^{\frac12},
\end{equation}
where $c_2$ depends on $M$ only.
Combining \eqref{R-J0-x--1} and \eqref{J1-J0-sqrtx} into the expression of $z_2 x$, we find that
\begin{align*}
\frac{1}{2z_2 x} &< 2 \frac{x^{\frac12}R +J_0+ (\frac18+\ve) x^{-1}J_1}{J_0 + x^{-1} ( \frac38-2\ve ) J_1}
\le 2 \qty[ 1 + x^{\frac12} \frac{R}{J_0} + \qty(\frac18+\ve) x^{-1} \frac{J_1}{J_0}] \\
& \le 2 + C_1 x^{-\frac12},
\end{align*}
which leads to the desired lower bound estimate $2z_2 \ge \frac{1}{2x + C_1\sqrt{x}}$ ($x\ge 2M$). The proof of Lemma \ref{lem:z2-bnd} is hereby complete.
\end{proof}

Finally, we prove Lemma \ref{lem:1m4xz2-bnd} about the same $O(x^{-1})$ order of the numerator $1-4xz_2$. This proof is the most technical.
\begin{proof}[Proof of Lemma \ref{lem:1m4xz2-bnd}]
By \eqref{z2-expr},
\begin{equation} \label{1m4xz2-expr}
1-4xz_2 = \fracd{\int_0^1 \psi_1(xu)w_y(u)\d u}{\int_0^1 \psi(xu) w_y(u)\d u},
\end{equation}
where we use the abbreviations
\begin{align*}
    \psi(z) &= \e^{-z}\mathrm{I}_0(z),\\
    \psi_1(z) &= \e^{-z}[\mathrm{I}_0(z)-2z(\mathrm{I}_0(z)-\mathrm{I}_1(z))].
\end{align*}
Lemma \ref{lem:1m4xz2-bnd} then follows from the following claims, which hold for all $x \ge M$ with absolute constants $c_1,c_2,c_3$ dependent only on $M$.
\begin{align}
    \text{Control of denominator: }&\int_0^1 \psi(xu)w_y(u)\d u\ge c_1 x^{-\frac12} \int_{\frac12}^1 w_y(u)\d u, \label{int-psi-lb} \\
    \text{Control of numerator (away from 0): }& \left|\int_{\frac12}^1 \psi_1(xu)w_y(u)\d u \right| \le c_2 x^{-\frac32} \int_{\frac12}^1 w_y(u)\d u, \label{int-psi1-ub-far} \\
    \text{Control of numerator (close to 0): }& \left| \int_{0}^{\frac12} \psi_1(xu)w_y(u)\d u \right| \le c_2 x^{-\frac32} \int_{\frac12}^1 w_y(u)\d u. \label{int-psi1-ub-near}
\end{align}

We verify \eqref{int-psi-lb} and \eqref{int-psi1-ub-far} first. 
Take the $\ve,M$ in \eqref{emxI0x-asymp-bnd}, and then $\psi(xu)\ge c_1 (xu)^{-\frac12} \ge c_1 x^{-\frac12}$ for $x\ge 2M, u\in[\frac12,1]$ and some constant $c_1=c_1(M)$. Thus,
\[\int_0^1 \psi(xu)w_y(u) \d u 
\ge \int_0^{\frac12} \psi(xu)w_y(u) \d u 
\ge  c_1 x^{-\frac12} \int_{\frac12}^1 w_y(u)\d u. \]
Similarly, we use the asymptotic expansion of $\psi_1(z)$ obtained from \eqref{bessel-asymp} and \eqref{I0mI1-asymp}
\begin{equation}
    \psi_1(z) = \frac{1}{\sqrt{2\pi z}}\qty[-\frac{1}{4z} - \frac{9}{32 z^2} + O(z^{-3})]
\end{equation}
to derive that $|\psi_1(xu)|\le c'_2 (xu)^{-\frac32} \le c_2 x^{-\frac32}$ for $x\ge 2M, u\in[\frac12,1]$ and some $c_2=c_2(M)$. Thus,
\[ \left|\int_{\frac12}^1 \psi_1(xu)w_y(u)\d u \right| \le c_2 x^{-\frac32} \int_{\frac12}^1 w_y(u)\d u.\]

Then, we consider \eqref{int-psi1-ub-near}. 
We make the following key observation:
\begin{equation} \label{int-psi1-0}
    \int_0^\infty \psi_1(z)\d z =0.
\end{equation}
\begin{proof}[Proof of \eqref{int-psi1-0}]
By the asymptotic expansions \eqref{bessel-asymp} and \eqref{I0mI1-asymp}, $\psi_1(z)=O(z^{-\frac32})$ when $z\to+\infty$, so the integral is convergent in $L^1$. By the dominated convergence theorem,
\begin{equation} \label{int-psi1-domconv}
    \int_0^\infty \psi_1(z)\d z = \lim_{\lambda\to 1^+} \int_0^\infty \e^{-(\lambda-1) z} \psi_1(z)\d z.
\end{equation}
We quote the following Laplacian transform of $\mathrm{I}_0$ from \cite{abramowitz_handbook_2013}:
\begin{equation}
\int_0^\infty \e^{-\lambda z} \mathrm{I}_0(z) \d z = \frac{1}{\sqrt{\lambda^2-1}},\ 
\int_0^\infty \e^{-\lambda z} \mathrm{I}_1(z)\d z= \frac{\lambda}{\sqrt{\lambda^2-1}}-1,\ \lambda>1
\end{equation}
Differentiating them with respect to $\lambda$, we get
\begin{equation}
    \int_0^\infty z \e^{-\lambda z} \mathrm{I}_0(z)\d z= \frac{\lambda}{(\lambda^2-1)^{\frac32}},\ \int_0^\infty z \e^{-\lambda z} \mathrm{I}_1(z)\d z= \frac{1}{(\lambda^2-1)^{\frac32}}.
\end{equation}
Then, we compute the RHS of \eqref{int-psi1-domconv}. After some simplification, we get
\begin{align*}
    \int_0^\infty\psi_1(z)\d z
    &= \lim_{\lambda\to 1^+} \left[ \int_0^\infty \e^{-\lambda z} \mathrm{I}_0(z) \d z - 2 \int_0^\infty z\e^{-\lambda z} \mathrm{I}_0(z)\d z +2 \int_0^\infty z\e^{-\lambda z} \mathrm{I}_1(z)\d z \right] \\
    &= \lim_{\lambda\to 1^+} \frac{(\lambda-1)^2}{(\lambda^2-1)^{\frac32}} = 0. 
\end{align*}
\end{proof}
Back to estimating \eqref{int-psi1-ub-near}. Let $\Psi_1(z)$ be the primitive function of $\psi_1$:
\begin{equation}
    \Psi_1(z)=\int_0^z \psi_1(z)\d z.
\end{equation}
Then, $\Psi_1(0)=0$, $\lim_{z\to\infty}\Psi_1(z)= 0$, and $|\Psi_1'(z)| = |\psi_1(z)| \le C z^{-\frac32}$ for $z\ge M$. Hence, by the Newton-Leibniz formula we can estimate $\Psi_1$:
\begin{equation} \label{bigpsi1-sqrtx}
    |\Psi_1(z)| \le \int_z^\infty |\psi_1(t)|\d t \le  C' z^{-\frac12},\ z\ge M.
\end{equation}
We rewrite the integral \eqref{int-psi1-ub-near} with integration by parts ($x\ge 2M$)
\begin{align*}
    &\ \int_0^{\frac12} \psi_1(xu)w_y(u)\d u
    =\int_0^{\frac12} \Psi_1'(xu)w_y(u) \d u \\
    &= \frac1x\left[ \Psi_1\qty(\frac{x}{2}) w_y\qty(\frac12) - \qty(\int_0^{M/x}+\int_{M/x}^{\frac12}) \Psi_1(xu) w_y'(u)\d u\right] \\
    &\triangleq \frac1x (K_1+K_2+K_3).
\end{align*}
The next step is to show that $K_1,K_2,K_3$ are all uniformly of order $O(x^{-\frac12})$. 
When $x\ge 2M$,
\begin{equation} \label{K1-is-x--1/2}
\begin{aligned}
    |K_1| &= \left| \Psi_1\qty(\frac{x}{2}) \right| w_y\qty(\frac12) \le C' \qty(\frac{x}{2})^{-\frac12} \cdot 2\int_{\frac12}^1 w_y(u)\d u \\
    &\le c_3' x^{-\frac12}\int_{\frac12}^1 w_y(u)\d u.
\end{aligned}
\end{equation}
For the first inequality, we have used both the estimate \eqref{bigpsi1-sqrtx} (applies to $\Psi_1(\frac{x}{2})$ when $x\ge 2M$) and the monotonicity of $w_y(u)=\e^{2yu}(1-u)^{-\frac12}$.
Since $\Psi_1(+\infty)=0$, $\Psi_1$ has a uniform bound
\begin{equation} \label{Psi1-unibnd}
    \sup_{z>0}|\Psi_1(z)|<\infty.
\end{equation}
Applying \eqref{Psi1-unibnd} to $K_2$ and \eqref{bigpsi1-sqrtx} to $K_3$, we get that
\begin{align}
    |K_2| &\le \sup_{z>0}|\Psi_1(z)| \int_0^{M/x} w_y'(u)\d u \le c_3'' x^{-1} \sup_{[0,1/2]} |w_y'(u)| \label{K2-is-x--1/2} \\
    |K_3| &\le C'\int_{M/x}^{\frac12} (xu)^{-\frac12} w_y'(u)\d u \le c_3''' x^{-\frac12} \sup_{[0,1/2]} |w_y'(u)|. \label{K3-is-x--1/2}
\end{align}
Hence, it remains to control $\sup_{[0,\frac12]} |w_y'(u)|$ by the RHS of \eqref{int-psi1-ub-near}. We evaluate that
\[w_y'(u) = \frac{2y\e^{2yu}}{\sqrt{1-u}} + \frac{\e^{2yu}}{2(1-u)^{\frac32}} \le 2\sqrt{2} (y+1)\e^y,\ \forall u\le \frac12.\]
Since any polynomial can be uniformly controlled by exponentials for all $y\ge 0$, there exists absolute constants $C,C'$ such that 
\[(y+1)\e^y \le C \e^{\frac32 y} \le C' \int_{\frac34}^1 \frac{\e^{2yu}}{\sqrt{1-u}}\d u \le C' \int_{\frac12}^1 w_y(u)\d u.\]
With this estimate combined with \eqref{K1-is-x--1/2}, \eqref{K2-is-x--1/2} and \eqref{K3-is-x--1/2}, we conclude that
\begin{equation}
    \left| \int_0^{\frac12} \psi_1(xu)w_y(u)\d u \right|
    \le \frac1x(|K_1|+|K_2|+|K_3|) \le c_3 x^{-\frac32} \int_{\frac12}^1 w_y(u)\d u,
\end{equation}
and \eqref{int-psi1-ub-near} is hereby established. 

By \eqref{int-psi-lb}, \eqref{int-psi1-ub-far} and \eqref{int-psi1-ub-near}, we get
\[1-4xz_2 \le \fracd{(c_2+c_3) x^{-\frac32} \int_{\frac12}^1 w_y(u)\d u}{c_1 x^{-\frac12} \int_{\frac12}^1 w_y(u)\d u} = \frac{C_3}{x},\]
which completes the proof of Lemma \ref{lem:1m4xz2-bnd}.
\end{proof}

\subsection{Dimension reduction: embedding of the 2D model in the 3D model}

In the study of planar NLC devices such as shallow confinements, it is reasonable to reduce a 3D $\vb Q$-tensor model to the 2D model, where the NLC directors tend to lie in a 2D plane, and are homogeneous along the $z$-direction. This approach has also been adopted by the LdG model \cite{canevari_well_2020,golovaty_dimension_2015}.

In this subsection, we prove a similar dimension reduction result for the mean-field $\vb Q$-tensor model with Bingham closure. To distinguish the notations in 2D and 3D, tildes will be attached to functions and variables ($\mu,q,Z,S$, etc.) in 2D. 

\begin{theorem} \label{thm:bing3d-to-2d}
\begin{enumerate}[\rm(i)]
\item Let $q^*=[q_1^*,q_2^*,-\frac13]^T$ be on the interior of an edge of $\pt\Delta_0$ (i.e.~$q_1^*,q_2^*>-\frac13$). Then,
\[\Delta S(q^*) = \Delta\tilde S(\tilde q) - \frac{1+\ln(\pi/2)}{2},\]
where $\tilde q=\frac12(q^*_1-q^*_2).$ Values on the other edges are determined identically by symmetry.
\item At the boundary points $\tilde q=\pm\frac12$, we have that
\[\Delta \tilde S\qty(\pm\frac12) = -\frac{1-\ln(\pi/2)}{2}.\]
Consequently, $\Delta S=-1$ at the vertices of $\Delta_0$.
\end{enumerate}
\end{theorem}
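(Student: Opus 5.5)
The plan is to define the boundary values of $\Delta S$ and $\Delta\tilde S$ as limits, and then compute those limits along explicit families of Bingham parameters. By Proposition \ref{prop:SmShat-lip} and Proposition \ref{prop:S2d-asymp}, both correction terms are uniformly Lipschitz, so they extend uniquely and continuously to the closures of their domains. The value at a boundary point is therefore the limit along \emph{any} interior path converging to it, and I am free to pick the most convenient path in $\mu$-space.

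One caveat concerns the form of $S$. When $\mu$ is not traceless I will use the shift-invariant form $S=z\cdot\mu-\ln Z(\mu)$ rather than $q\cdot\mu-\ln Z(\mu)$. The two agree for traceless $\mu$, and by \eqref{Z-mupt-eq} the first is invariant under $\mu\mapsto\mu+t\bm 1$ because $\sum_i z_i=1$.

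\textbf{Part (ii) (the 2D endpoints).} I will do this first because it is a direct computation. Letting $\mu\to+\infty$, use \eqref{bessel-asymp} and \eqref{I0mI1-asymp}:
\begin{itemize}
\item $2\mu q=\mu\mathrm{I}_1/\mathrm{I}_0=\mu-\frac12+O(\mu^{-1})$;
\item $\ln\mathrm{I}_0(\mu)=\mu-\frac12\ln(2\pi\mu)+O(\mu^{-1})$;
\item $\frac12-q=\frac{\mathrm{I}_0-\mathrm{I}_1}{2\mathrm{I}_0}=\frac{1}{4\mu}(1+O(\mu^{-1}))$, and $\frac12+q\to1$.
\end{itemize}
Hence $\tilde S=-\frac12+\frac12\ln(2\pi\mu)+o(1)$ and $\hat{\tilde S}=\frac12\ln(4\mu)+o(1)$. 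Their difference tends to $-\frac12+\frac12\ln\frac{\pi}{2}$, and the endpoint $-\frac12$ follows by oddness (Lemma \ref{lem:bessel}(i)).

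\textbf{Part (i) (edge values).} Fix $a,b\in\BbbR$ and take $\mu=[a,b,-2y]^T$ with $y\to+\infty$.
\begin{itemize}
\item \emph{Partition function.} In spherical coordinates with $m_3=t$, Laplace's method on $\int_{-1}^1\e^{-2yt^2}g(t)\d t$ gives
\[Z(\mu)=\frac{1}{4\pi}\sqrt{\frac{\pi}{2y}}\cdot 2\pi\,\e^{(a+b)/2}\tilde Z(\tilde\mu)\,(1+O(y^{-1})),\qquad \tilde\mu=\tfrac{a-b}{2}.\]
\item \emph{Moments.} The same expansion, or Lemmas \ref{lem:z2-bnd} and \ref{lem:1m4xz2-bnd} with the roles of indices $2,3$ swapped, gives $4yz_3=1+O(y^{-1})$. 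It also gives $z_1\to\frac12+\tilde q$ and $z_2\to\frac12-\tilde q$, where $\tilde q=\tilde q(\tilde\mu)$ is the 2D moment \eqref{q2d-eq-I1}.
\item \emph{Reaching every edge point.} By Theorem \ref{thm:bing-cl-2d}, every interior edge point $q^*$ is reached by choosing $\tilde\mu$ appropriately.
\end{itemize}
Substituting these expansions:
\begin{itemize}
\item $z\cdot\mu\to\frac{a+b}{2}+2\tilde\mu\tilde q-\frac12$;
\item $-\ln Z=\ln 2-\frac12\ln\frac{\pi}{2}+\frac12\ln y-\frac{a+b}{2}-\ln\tilde Z+o(1)$;
\item $\hat S=\hat{\tilde S}(\tilde q)+\frac12\ln(4y)+o(1)$.
\end{itemize}
The divergent $\frac12\ln y$ terms cancel, and the $\frac{a+b}{2}$ terms cancel. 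What remains is
\[\Delta S\to\tilde S-\hat{\tilde S}-\tfrac12+\ln2-\tfrac12\ln\tfrac{\pi}{2}-\ln 2=\Delta\tilde S(\tilde q)-\frac{1+\ln(\pi/2)}{2}.\]

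\textbf{Vertices.} Restrict $\Delta S$ to an edge and let $\tilde q\to\pm\frac12$, using continuity on the closure. This gives
\[\Delta S(\text{vertex})=-\frac{1-\ln(\pi/2)}{2}-\frac{1+\ln(\pi/2)}{2}=-1.\]

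\textbf{Main obstacle.} The hard step is making the 3D Laplace asymptotics rigorous with error terms. Specifically, I need $-2yz_3\to-\frac12$, which requires the $O(y^{-1})$ rate in $4yz_3-1$ and not just the order of magnitude. I also need the corrections to $\ln Z$ and to $z_1,z_2$ to vanish. I would first obtain these by reducing to the Bessel-type representation \eqref{z2-expr} (with indices permuted) and reusing the estimates of Lemma \ref{lem:1m4xz2-bnd}. Since $a,b$ are fixed here, no uniformity in the other variable is needed, which makes this simpler than the lemma itself.
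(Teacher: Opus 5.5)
Your proposal is correct and follows the paper's proof: extend $\Delta S$ and $\Delta\tilde S$ by continuity, approach an edge point by sending one eigenvalue of $\mu$ to $-\infty$ with the other two fixed, cancel the $\frac12\ln$ divergences between $-\ln Z$ and $\hat S$, and use the Bessel asymptotics \eqref{bessel-asymp}, \eqref{I0mI1-asymp} for part (ii) and the vertices. The only difference is that the paper quotes Kent's expansion (Lemma \ref{lem:Zmu-asymp}) with term-wise differentiation along $\mu=(0,-2\beta,-\beta-\lambda/2)$ and carries $O(\lambda^{-1})$ corrections, whereas you derive the leading-order asymptotics by Laplace's method; this suffices because only the limits enter, so $4yz_3\to1$ is all you need and no rate is required.
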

\begin{remark}
Theorem \ref{thm:bing3d-to-2d} shows that the 2D correction function $\Delta\tilde S(\tilde q)=\tilde S(\tilde q)-\tilde{\hat S}(\tilde q)$ is identical to the slices of the 3D correction function $\Delta S$ along the edges of $\Delta_0$.
This implies an embedding relation of the 2D Bingham closure model in to the 3D model as a limiting case: when one eigenvalue $q_i$ tends to $-\frac13$ in 3D (i.e.~$\mu_i$ tends to $-\infty$ and the $i$-th component of the director vanishes), the Bingham distribution degenerates from 3D \eqref{bingham} into 2D \eqref{bingham-2d}, and so does its entropy $S$ after removing the logarithmic singularity in the $i$-th component.
\end{remark}

\begin{proof}[Proof of Theorem \ref{thm:bing3d-to-2d}]
(i) By symmetry, we can assume $q^*_1\ge q^*_2>-\frac13$. As $\Delta S$ is continuous up to the boundary, we only need to check the limit of $\Delta S$ for a special sequence $q_n\to q^*$.
To produce such a sequence, we let $\mu_1=0, \mu_2=-2\beta$ fixed and let $\mu_3\to-\infty$, with $\beta$ determined later on. We quote the following result regarding the limiting behaviour of $Z(\mu)$.
\begin{lemma}[\cite{kent_asymptotic_1987}] \label{lem:Zmu-asymp}
Work under the assumption $0=\mu_1\ge\mu_2\ge\mu_3$, and denote by $\lambda=\mu_2-2\mu_3, \beta=-\mu_2/2.$ If $\beta$ is constant and $\lambda\to+\infty$, then
\begin{equation} \label{Z-asymp-1}
    Z\qty(0,-2\beta,-\beta-\frac{\lambda}{2}) \sim \e^{-\beta} \mathrm{I}_0(\beta) \sqrt{\frac{\pi}{2\lambda}} \sum_{k=0}^\infty \frac{(-\beta)^k (2k-1)!! B_k(\beta)}{\lambda^k k!},
\end{equation}
where $\mathrm{I}_0$ is the modified Bessel function from \eqref{bessel-I0} and 
\begin{equation}
    B_k(\beta) = \frac{1}{2\pi \mathrm{I}_0(\beta)}\int_0^{2\pi} \e^{\beta \cos\theta}\cos^k\theta\d\theta
\end{equation}
are the moments of the von Mises-Fisher distribution \cite{mardia_statistics_1975}. Term-wise differentiation applies to \eqref{Z-asymp-1}.
\end{lemma}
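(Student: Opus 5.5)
The plan is to reduce $Z$ to a one-dimensional Laplace-type integral in which $\lambda$ appears only in a Gaussian factor, and then apply Watson's lemma. In the spherical coordinates $m_3=\cos\theta$ and $(m_1,m_2)=\sin\theta(\cos\vph,\sin\vph)$ the exponent is $-2\beta\sin^2\theta\sin^2\vph-(\beta+\frac\lambda2)\cos^2\theta$. Using $-2\beta\sin^2\vph=-\beta+\beta\cos2\vph$ and $\sin^2\theta+\cos^2\theta=1$, it becomes
\[-\beta+\beta\sin^2\theta\cos 2\vph-\frac{\lambda}{2}\cos^2\theta .\]
Integrating in $\vph$ with \eqref{bessel-I0} (after the change of variable $2\vph\to\vph$) gives $2\pi\,\mathrm{I}_0(\beta\sin^2\theta)$. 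Substituting $t=\cos\theta$ then yields the exact representation
\[Z\qty(0,-2\beta,-\beta-\frac{\lambda}{2})=\frac{\e^{-\beta}}{2}\int_{-1}^{1}\mathrm{I}_0\bigl(\beta(1-t^2)\bigr)\e^{-\lambda t^2/2}\d t .\]

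Next I would Taylor expand the smooth (indeed entire) function $u\mapsto \mathrm{I}_0(\beta-\beta u)$ at $u=0$. Differentiating \eqref{bessel-I0} under the integral sign gives $\mathrm{I}_0^{(k)}(\beta)=\frac1{2\pi}\int_0^{2\pi}\cos^k\theta\,\e^{\beta\cos\theta}\d\theta=\mathrm{I}_0(\beta)B_k(\beta)$. Hence
\[\mathrm{I}_0\bigl(\beta(1-u)\bigr)=\mathrm{I}_0(\beta)\sum_{k=0}^{N}\frac{(-\beta u)^k B_k(\beta)}{k!}+R_N(u),\qquad |R_N(u)|\le C_N(\beta)\,u^{N+1},\ u\in[0,1],\]
where the remainder bound follows from Lagrange's form together with $|\mathrm{I}_0^{(N+1)}|\le \mathrm{I}_0(|\beta|)$ on $[0,\beta]$. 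Inserting this with $u=t^2$, each polynomial term is integrated over $\BbbR$ instead of $[-1,1]$; the tails $|t|>1$ cost only $O(\e^{-\lambda/2})$. The Gaussian moments
\[\int_\BbbR t^{2k}\e^{-\lambda t^2/2}\d t=\sqrt{2\pi/\lambda}\,(2k-1)!!\,\lambda^{-k}\]
then reproduce exactly the $k$-th term of \eqref{Z-asymp-1}, including the prefactor $\frac12\sqrt{2\pi/\lambda}=\sqrt{\pi/(2\lambda)}$. The remainder contributes at most $C_N\int_\BbbR t^{2N+2}\e^{-\lambda t^2/2}\d t=O(\lambda^{-N-3/2})$, which is of smaller order than the last retained term. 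This proves the asymptotic expansion in the Poincaré sense.

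The main point requiring care is the meaning of the statement and of the term-wise differentiation, since the series diverges: its coefficients grow like $(2k-1)!!/k!$ times $B_k$. I would therefore treat everything through truncations with explicit remainders rather than by summing the series, as in Watson's lemma. For $\pt_\lambda$, differentiating the integral representation only inserts a factor $-t^2/2$, and the same truncation argument applies verbatim. The result coincides with the term-wise derivative of each $\lambda^{-k-1/2}$ term.

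For $\pt_\beta$, I would differentiate under the integral: the prefactor $\e^{-\beta}$ is harmless, and $\pt_\beta\mathrm{I}_0(\beta(1-t^2))=(1-t^2)\mathrm{I}_1(\beta(1-t^2))$ is again entire in $t^2$. Its Taylor remainder obeys the same $u^{N+1}$ bound, locally uniformly in $\beta$, so the coefficients of its expansion are the $\beta$-derivatives of $\mathrm{I}_0(\beta)B_k(\beta)(-\beta)^k/k!$. Uniformity of $C_N(\beta)$ for $\beta$ in compact sets is what makes these differentiated expansions legitimate. This is exactly what is needed later, when $\ln Z$ and its gradient $z$ are evaluated along $\mu_3\to-\infty$ with $\beta$ fixed.
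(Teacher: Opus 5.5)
Your proof is correct. Note that the paper gives no proof of this lemma at all: both the expansion and the clause about term-wise differentiation are quoted from Kent. Your argument therefore replaces a citation with a self-contained Watson's-lemma derivation.

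The pieces all check out and reproduce \eqref{Z-asymp-1} term by term, including the prefactor $\sqrt{\pi/(2\lambda)}$:
the exact reduction $Z(0,-2\beta,-\beta-\frac{\lambda}{2})=\frac{\e^{-\beta}}{2}\int_{-1}^{1}\mathrm{I}_0(\beta(1-t^2))\e^{-\lambda t^2/2}\d t$;
the identification of the Taylor coefficients through $\mathrm{I}_0^{(k)}=\mathrm{I}_0B_k$;
the Lagrange remainder bound;
and the Gaussian moments.

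What your route buys over the citation is a concrete meaning for the last sentence of the lemma. You differentiate under the integral, rerun the same truncation argument, and use joint smoothness in $(\beta,u)$ to identify the coefficients. That is exactly what legitimizes the paper's later computation of $\partial Z/\partial\mu_2$ and $\partial Z/\partial\mu_3$ to relative order $O(\lambda^{-2})$.

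One remark in your proposal is wrong, though it does not affect the argument: the series does not diverge.
Since $(2k-1)!!/k!=(2k)!/(2^k(k!)^2)\sim 2^k/\sqrt{\pi k}$ grows only geometrically and $|B_k|\le 1$, the series converges absolutely whenever $\lambda>2\beta$.
Under the lemma's ordering assumption, $\lambda-2\beta=2(\mu_2-\mu_3)\ge 0$, so this is essentially automatic.
By dominated convergence (the Taylor series of $\mathrm{I}_0(\beta(1-t^2))$ is dominated by $\mathrm{I}_0(\beta)\e^{\beta t^2}$), its sum is $\frac{\e^{-\beta}}{2}\int_{\BbbR}\mathrm{I}_0(\beta(1-t^2))\e^{-\lambda t^2/2}\d t$.
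This differs from $Z$ only by the exponentially small contribution from $|t|>1$.
So the relation is still asymptotic only in the Poincar\'e sense, but because of that tail rather than because of divergence. Your truncation-with-remainder proof is valid either way.
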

Return to estimating $S-\hat S$. 
We truncate \eqref{Z-asymp-1} to $O(\lambda^{-2})$ relative error:
\[Z=\e^{-\beta}\mathrm{I}_0(\beta) \sqrt{\frac{\pi}{2\lambda}} \left[ 1 - \frac{\beta B_1}{\lambda} + O(\lambda^{-2}) \right].\]
Then, we apply term-wise differentiation with respect to $\lambda,\beta$. By the chain rule,
\[\frac{\pt Z}{\pt \lambda}=-\frac12 \frac{\pt Z}{\pt \mu_3},\ \frac{\pt Z}{\pt\beta}=-2 \frac{\pt Z}{\pt \mu_2}- \frac{\pt Z}{\pt \mu_3},\]
from which we can resolve
\begin{align*}
    \frac{\pt Z}{\pt\mu_3}&=-2\frac{\pt Z}{\pt\lambda}
=\e^{-\beta}\mathrm{I}_0(\beta) \sqrt{\frac{\pi}{2\lambda}} \cdot \frac{1}{\lambda} \left[ 
    1 - \frac{3\beta B_1}{\lambda} + O(\lambda^{-2}) \right], \\
    \frac{\pt Z}{\pt\mu_2} &= -\frac12 \frac{\pt Z}{\pt\beta} + \frac{\pt Z}{\pt\lambda} \\
    &=\e^{-\beta}\mathrm{I}_0(\beta) \sqrt{\frac{\pi}{2\lambda}} \left[ 
        \frac{1-B_1}{2} - \frac{\beta(B_1-B_2)-B_1-1}{2\lambda} + O(\lambda^{-2}) \right].
\end{align*}
We have used the property $\frac{\d}{\d\beta}(\mathrm{I}_0 B_k) = \mathrm{I}_0 B_{k+1}.$
We can then compute the asymptotic expansion of $z_3$ and $z_2$ as follows.
\begin{subequations} \label{B-asymp-1}
\begin{align}
    z_3&=\frac1Z \frac{\pt Z}{\pt\mu_3}
    =\frac{1}{\lambda} \left[ 1 - 2\beta B_1\lambda^{-1} + O(\lambda^{-2})\right], \\
    z_2&=\frac1Z \frac{\pt Z}{\pt\mu_2}
    =\frac{1-B_1}{2} + \frac{\beta(B_2-B_1^2)-B_1-1}{2}\lambda^{-1} + O(\lambda^{-2}).
\end{align} 
Using $z_1+z_2+z_3=1$, we get
\begin{equation}
    z_1=\frac{1+B_1}{2} -\frac{\beta(B_2-B_1^2)-B_1+1}{2}\lambda^{-1} + O(\lambda^{-2}). 
\end{equation}
\end{subequations}
Hence, the limit of the moments $z_1,z_2$ are
\begin{equation} \label{z-limit-1}
    \lim_{\lambda\to\infty} z_{1,2}=\frac12\pm\frac{\mathrm{I}_1(\beta)}{2\mathrm{I}_0(\beta)}=\frac12\pm\tilde q(\beta),
\end{equation}
where $\tilde q$ is the 2D scalar $\vb Q$-tensor with expression \eqref{q2d-eq-I1}. Equivalently, the limit point of $q(\mu)$ when $\lambda\to\infty$ is
\[\qty[ \frac16+\tilde q(\beta), \frac16-\tilde q(\beta), -\frac13 ]^T.\]
Equating this expression with $q^*$, we find that
\[\begin{cases}
    q_1^*=\frac16+\tilde q(\beta) \\
    q_2^*=\frac16-\tilde q(\beta)
\end{cases} \Rightarrow \tilde q(\beta) = \frac{q_1^*-q_2^*}{2}.\]
By Theorem \ref{thm:bing-cl-2d} (Bingham closure in 2D), there exists a unique $\beta\in\BbbR$ such that $\tilde q(\beta) = \frac{q^*_1-q^*_2}{2} \in [0,\frac12).$

By \eqref{zi-qi}, the entropy $S(q)$ can also be expressed in terms of $\mu$ and $z$ as
\begin{equation} \label{S-q-z}
    S(q)=\sum_i \mu_i z_i - \ln Z(\mu),\ \mu\in\BbbR^3,
\end{equation}
while $\hat S(q)=-\frac12\ln\det(\vb Q+\frac{\vb I}{3})$ becomes
\begin{equation} \label{Shat-z}
    \hat S(q)=-\frac12\ln(z_1z_2z_3).
\end{equation}
Substituting \eqref{B-asymp-1} into \eqref{S-q-z} and \eqref{Shat-z}, we compute
\begin{align*}
    \Delta S=S-\hat S
    &=-2\beta z_2 -\left( \beta+\frac{\lambda}{2} \right)z_3 -\ln Z + \frac12\ln (z_1 z_2 z_3) \\
    &=-\qty[\beta(1-B_1)+(\beta^2(B_2-B_1^2)-\beta(B_1+1)) \lambda^{-1}] - \qty(\beta\lambda^{-1} + \frac12 -\beta B_1 \lambda^{-1}) \\
    &\quad -\left[
        -\beta+ \ln(\sqrt{\frac{\pi}{2}} \mathrm{I}_0(\beta)) - \frac12\ln\lambda
        +\ln (1-\beta B_1 \lambda^{-1})
    \right] \\
    &\quad + \frac12\qty[\ln\frac{1+B_1}{2} +\ln (1- \frac{\beta(B_2-B_1^2)-B_1+1}{1+B_1}\lambda^{-1}) ] \\
    &\quad + \frac12\qty[\ln\frac{1-B_1}{2} + \ln(1+\frac{\beta(B_2-B_1^2) - B_1-1}{1-B_1} \lambda^{-1}) ] \\
    &\quad + \frac12\qty[-\ln\lambda + \ln (1-2\beta B_1\lambda^{-1}) ] + O(\lambda^{-2}).
\end{align*}
The logarithms cancel out, and we are left with
\begin{equation}\label{Sq-asymp-1}
    \Delta S= \beta \frac{\mathrm{I}_1(\beta)}{\mathrm{I}_0(\beta)} -\ln \mathrm{I}_0(\beta) + \frac12\ln\frac{1-(\frac{\mathrm{I}_1(\beta)}{\mathrm{I}_0(\beta)})^2}{4} -\frac{1+\ln(\pi/2)}{2}+ O(\lambda^{-1}).
\end{equation}
Recall the definition of the 2D scalar $\vb Q$-tensor mapping $\tilde q(\tilde\mu)$ defined in \eqref{q2d-eq-I1}. Substituting this formula into \eqref{Sq-asymp-1}, we translate this limit into the 2D Bingham distribution:
\begin{align*}
    \Delta S 
    &=2\beta\tilde q(\beta) -\ln \mathrm{I}_0(\beta)+\frac12\ln\qty(\frac12-\tilde q(\beta))\qty(\frac12+\tilde q(\beta))  - \frac{1+\ln(\pi/2)}{2} \\
    &=\Delta\tilde S(\tilde q(\beta)) - \frac{1+\ln(\pi/2)}{2},
\end{align*}
where the 2D Bingham entropy $\tilde S$ and quasi-entropy $\tilde{\hat S}$ are defined in \eqref{S-q-2d} and \eqref{Shat-q-2d} respectively.

Therefore, by the continuity of $\Delta S$, as $\lambda\to\infty$ and $q(\mu) \to q^*=[q_1^*,q_2^*,-\frac13]^T$,
\[\Delta S(q_n) \to \Delta S(q^*)=\Delta \tilde S(\tilde q) - \frac{1+\ln(\pi/2)}{2}.\]
This concludes part (i) of Theorem \ref{thm:bing3d-to-2d}.

(ii) Since $\Delta\tilde S(\tilde q)$ is an even function of $\tilde q$, we focus on $\tilde q>0$. The limit $\tilde q\to(\frac12)^-$ is equivalent to $\tilde q=\tilde q(\beta)$ and $\beta\to+\infty$. We apply the asymptotic expansions \eqref{bessel-asymp} and \eqref{I0mI1-asymp}.
\begin{align*}
    \Delta S^\star(\tilde q)
    &=\frac{\beta \mathrm{I}_1}{\mathrm{I}_0} - \ln \mathrm{I}_0 + \frac12\qty[ \ln\frac{\mathrm{I}_0-\mathrm{I}_1}{2\mathrm{I}_0}+\ln\frac{\mathrm{I}_0+\mathrm{I}_1}{2\mathrm{I}_0}]\\
    &=\frac{\beta \mathrm{I}_1}{\mathrm{I}_0} - \frac32\ln \mathrm{I}_0 + \frac12\ln(\mathrm{I}_0-\mathrm{I}_1) + \frac12\ln\frac{\mathrm{I}_0+\mathrm{I}_1}{2\mathrm{I}_0} - \frac12\ln 2\\
    &=\frac{\beta(1-\frac38\beta^{-1} + O(\beta^{-2}))}{1+\frac18\beta^{-1}+O(\beta^{-2})}
    - \frac32\qty[(\beta-\ln\sqrt{2\pi\beta})+\ln(1+\frac18\beta^{-1}+O(\beta^{-2}))] \\
    &\quad + \frac12\qty[ (\beta-\ln\sqrt{8\pi\beta^3})+\ln(1+\frac38\beta^{-1} + O(\beta^{-2}))] \\
    &\quad + \ln \frac{1-\frac18\beta^{-1}+O(\beta^{-2})}{1-\frac18\beta^{-1}+O(\beta^{-2})}-\frac12\ln 2  \\
    &=-\frac{1-\ln(\pi/2)}{2} + O(\beta^{-1}).
\end{align*}
Letting $\beta\to\infty$, we find that $\Delta\tilde S(\frac12)=-\frac{1-\ln(\pi/2)}{2}$ as desired, which yields part (ii) of Theorem \ref{thm:bing3d-to-2d}. 

Finally, we derive the value of $\Delta S(q)$ at the vertices of $\Delta_0$.
First, we let $q$ converge to an edge point $q^*=[q_1^*,q_2^*,-\frac13]^T$. By part (i), we have that
\[\Delta S(q^*)= \Delta\tilde S(\tilde q)-\frac{1+\ln(\pi/2)}{2},\ \tilde q=\frac{q_1^*-q_2^*}{2}.\]
Then, we let $q^*$ converge to a vertex $V_1=[\frac23,-\frac13,-\frac13]^T$. Then, $\tilde q\to+\frac12$, and by part (ii) $\Delta S(q^*)$ converges to
\[\Delta S(V_1)=-\frac{1-\ln(\pi/2)}{2} -\frac{1+\ln(\pi/2)}{2}=-1. \]
Values at the other two vertices follow by symmetry.
\end{proof}

\section{Neural network approximation to the correction term} \label{sec:nn}

The 2D Bingham closure is relatively simple because it is essentially the inverse of a scalar function involving the well-studied Bessel functions. The difficulties lie in the 3D problem, where the forward mapping is a non-elementary integral. With Theorem \ref{thm:SQ-asymp} established, the troublesome entropy $S(\vb Q)$ is now reduced to a regular correction term that can be effectively approximated. We approximate the 3D Bingham closure with a neural network \cite{shi_neural_2026}.

The forward problem (Bingham distribution \eqref{bingham}) can be reduced to evaluating the values and derivatives of $Z(\mu)$ \eqref{Z-mu} for $\mu\in\BbbR^3$. We compute it through the globally convergent Taylor expansion
\begin{equation} \label{Z-taylor}
\begin{aligned}
    Z&=\frac{1}{4\pi} \sum_{k=0}^\infty \frac{1}{k!} \int_{\BbbS^2} (\mu_1 m_1^2+\mu_2 m_2^2+\mu_3 m_3^2)^k \d\vb m \\
    &= \sum_{\bm\alpha} \frac{1}{\alpha_1!\alpha_2!\alpha_3!} \frac{(2\alpha_1-1)!! (2\alpha_2-1)!! (2\alpha_3-1)!!}{(2\alpha_1+2\alpha_2+2\alpha_3+1)!!}\mu_1^{\alpha_1}\mu_2^{\alpha_2}\mu_3^{\alpha_3},
\end{aligned}
\end{equation}
where $\bm\alpha=(\alpha_1,\alpha_2,\alpha_3)$ is the 3D multi-index. We evaluate this expression for as many terms as possible until the error is less than a prescribed tolerance.
This expansion as well as its term-wise differentiation has produced Figures \ref{fig:S}, where Newton's iteration was used to compute the inverse mapping $\mu=\mu(q)$.

The inverse problem (Bingham closure) is then approximated with a neural network approximation. We improve the algorithm therein by learning the perturbation term $\Delta S(\vb Q)$ instead of the full entropy $S(\vb Q)$, which avoid the singularity near the boundaries.
As the Bingham closure requires computing $\vb B=\nabla_0 S(\vb Q)$, the neural network approximation should be accurate in the $C^1$ norm. The feasibility is guaranteed by Theorem \ref{thm:SQ-asymp}, which ensures the uniform boundedness of the gradient of $\Delta S$, and the Universal Approximation Theorem \cite{hornik_approximation_1991}, which states that a neural network approximation can be arbitrarily close in $C^1$ norm.


We construct a dataset comprising a group of triplets $\{(\vb Q_i, Z_i, \vb B_i)\}_{i\in I}$ prepared by sampling diagonal matrices $\vb B_i=\diag(\mu_i)$ from $\mcalS_0$, where $Z_i=Z(\vb B_i),\vb Q_i=\frac{1}{Z_i} \nabla_0 Z(\vb B_i)$ are computed through \eqref{Z-taylor}, and $S(\vb Q_i),\Delta S(\vb Q_i)$ through \eqref{SQ-2d}.
Then, we apply a feed-forward network denoted by $\Delta S_{\rm NN}(\vb Q_i;\bm\Theta)$, and minimize the loss function
\begin{equation}
    \mcalL(\bm\Theta) = \frac{1}{|I|}\sum_{i\in I} \left[ \left|\Delta S_{\rm NN}(\vb Q_i;\bm\Theta)-(S_i-\hat S(\vb Q_i))\right|^2 - \left|\nabla_0 \Delta S_{\rm NN}(\vb Q_i;\bm\Theta)-(\vb B_i-\nabla_0 \hat S(\vb Q_i))\right|^2 \right].
\end{equation}

The network $\Delta S_{\rm NN}(\vb Q_i;\bm\Theta)$ takes two inputs, maps them through two full-connected linear hidden layers of size 64 with softplus activation functions, and outputs a single scalar.
Note that the input tensor $\vb Q$ is preprocessed into the scalar variables $(\frac12\tr\vb Q^2,\frac13\tr\vb Q^3)$ before entering the neural network, a design aiming at preserving rotational invariance, namely the sole dependence on eigenvalues.
Since the eigenvalue mapping $\vb Q\mapsto\lambda(\vb Q)$ is difficult to compute and non-differentiable, we take the moments $(\frac12\tr\vb Q^2,\frac13\tr\vb Q^3)$ as inputs instead, which corresponds one-to-one to the eigenvalues, and more importantly, allows a convenient differentiation:
\begin{equation}
    \frac{\pt\Delta S_{\rm NN}(\vb Q)}{\pt\vb Q}
    =\frac{\pt\Delta S_{\rm NN}}{\pt(\frac12\tr\vb Q^2)} \vb Q + \frac{\pt\Delta S_{\rm NN}}{\pt(\frac13\tr\vb Q^3)}\left( \vb Q^2-\frac{|\vb Q|^2}{3}\vb I \right).
\end{equation}
This formula automatically preserves rotational invariance relations \eqref{rot-inv} (of the function value) and \eqref{rot-inv-grad} (of the gradient).
We use the ADAM algorithm from PyTorch to train the network for 2000 epochs. We eventually reach test error $\sim 10^{-6}$, and the resulting function $\Delta S_{\rm NN}$ with respect to $q=\lambda(\vb Q)$ is plotted in Figure \ref{fig:S-NN-plot}.

\begin{figure}
    \centering
    \includegraphics[width=0.38\textwidth]{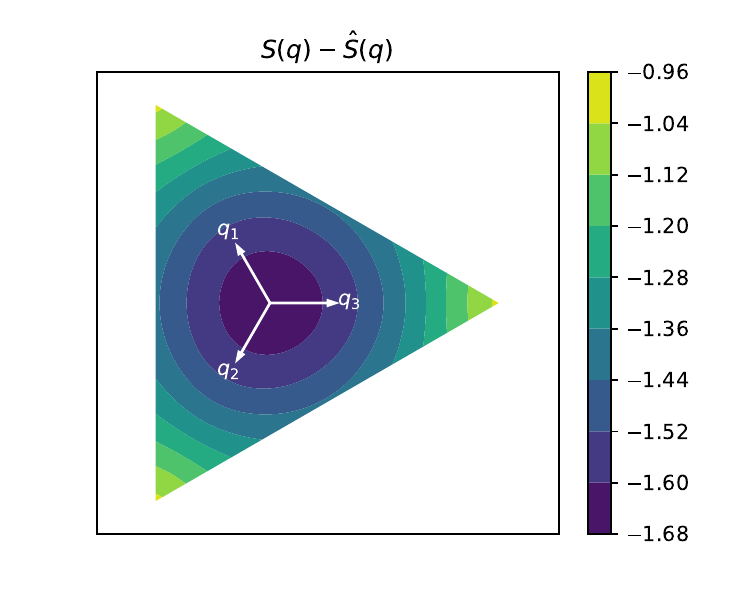}
    \includegraphics[width=0.45\textwidth]{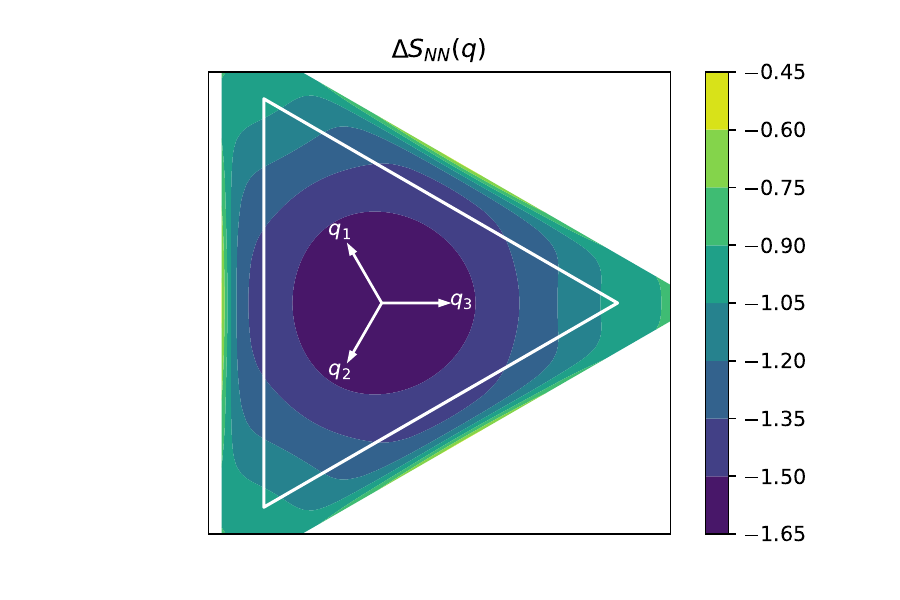}
    \caption{Left panel: Accurate values of $\Delta S$; right panel: neural network approximation $\Delta S_{\rm NN}$. Both are plotted as functions of $q=\lambda(\vb Q)$ in the same way as Figure \ref{fig:S}. $\Delta S_{\rm NN}$ has a nice generalization beyond the physical domain (enclosed by white triangle), which ensures stable computation.}\label{fig:S-NN-plot}
\end{figure}

\section{Numerical experiments} \label{sec:num}

Using the entropy decomposition in Theorem \ref{thm:SQ-asymp} and the neural network approximation introduced in Section~\ref{sec:nn}, we obtain an explicit expression for the entropy term, which is particularly convenient for computation. In this section, we validate the model on (i) the isotropic–nematic phase transition problem and (ii) the liquid crystal droplet problem. Notably, the latter is considered to be more challenging in the context of confinement, since the confining domain can evolve in the droplet setting.

\subsection{Isotropic-nematic phase transition}

First, we consider isotropic-nematic phase transition of the free energy \eqref{ons-fe-hom} under Bingham closure.
For a homogeneous NLC solution with the Maier-Saupe interaction potential \cite{maier_einfache_1959}, the free energy \eqref{ons-fe-hom} is expressed in terms of $\vb Q\in\mcalQ_p$ as a \textit{bulk energy}
\begin{equation} \label{bing-bulk}
    F_b(\vb Q)=S(\vb Q)-\frac{\alpha}{2} |\vb Q|^2,
\end{equation}
where $S(\vb Q)$ \eqref{SQ-2d} replaces the molecular entropy $S[f]$, and the latter term is the exact value of the Maier-Saupe potential (see \cite{ball_nematic_2010} for a derivation). Here, $\alpha>0$ is a nondimensional parameter representing the overall effect of concentration and temperature, and is monotonically decreasing with respect to temperature.

Properties of stationary points of the bulk energy $F_b$ are well-known, especially the isotropic-nematic phase transition. By \cite[App.~A]{fatkullin_critical_2005}, any stationary point of \eqref{bing-bulk} is \textit{uniaxial}, i.e.~having two equal eigenvalues. A uniaxial $\vb Q$-tensor in $\mcalQ_p$ can always be expressed by
\begin{equation} \label{Q-uniax}
    \vb Q=s\qty(\vb{nn}-\frac13\vb I),
\end{equation}
where $\vb{n}\in\BbbS^2$ and $s\in(-\frac12,1)$. If $s>0$, the uniaxial tensor \eqref{Q-uniax} has a single largest eigenvalue, so we regard the direction that leading eigenvector as the principal director, and identify it with the \textit{nematic} phase. If $s=0$, then $\vb Q=\vb B=0$, which represents the uniform distribution, so we identify it with the \textit{isotropic} phase. The isotropic-nematic phase transition is reported in \cite{liu_axial_2005}:
\begin{enumerate}[(i)]
\item The isotropic state $\vb Q=0$ is always a stationary point. When $\alpha>\alpha_*\approx 6.731393$, there exist two types of nematic stationary points with different sets of different eigenvalues.
\item $\vb Q=0$ is stable if and only if $\alpha<7.5$. Otherwise, the isotropic phase loses stability, and one of the nematic phases is the only stable point.
\end{enumerate}

Numerically, we evaluate $F_b$ with the neural network approximation for $\Delta S$:
\[F_b(\vb Q) \approx \hat S(\vb Q) + \Delta S_{\rm NN}(\vb Q) - \frac{\alpha}{2}|\vb Q|^2.\] 
Since the minimizers of $F_b$ are either nematic or isotropic phases, we only need to investigate it along the uniaxial tensors of the form \eqref{Q-uniax}. The approximated energy values are plotted against $s\in(-\frac12,1)$ in Figure \ref{fig:bing-bulk-uniax}. 
One can observe directly that for a smaller $\alpha$ (higher temperature), the unique minimizer is at $s=0$, i.e.~the isotropic state; as $\alpha$ increases from $\alpha_0\approx 6.731393$ to $7.5$, a nematic minimizer (with $s>0$) emerges; for a larger $\alpha>7.5$ (lower temperature), the isotropic phase is unstable, which indicates an isotropic-nematic phase transition. Therefore, the phase transition behaviour of the bulk energy \eqref{bing-bulk} is accurately captured by our approximation.

\begin{figure}[th]
\centering
\includegraphics[width=0.45\textwidth]{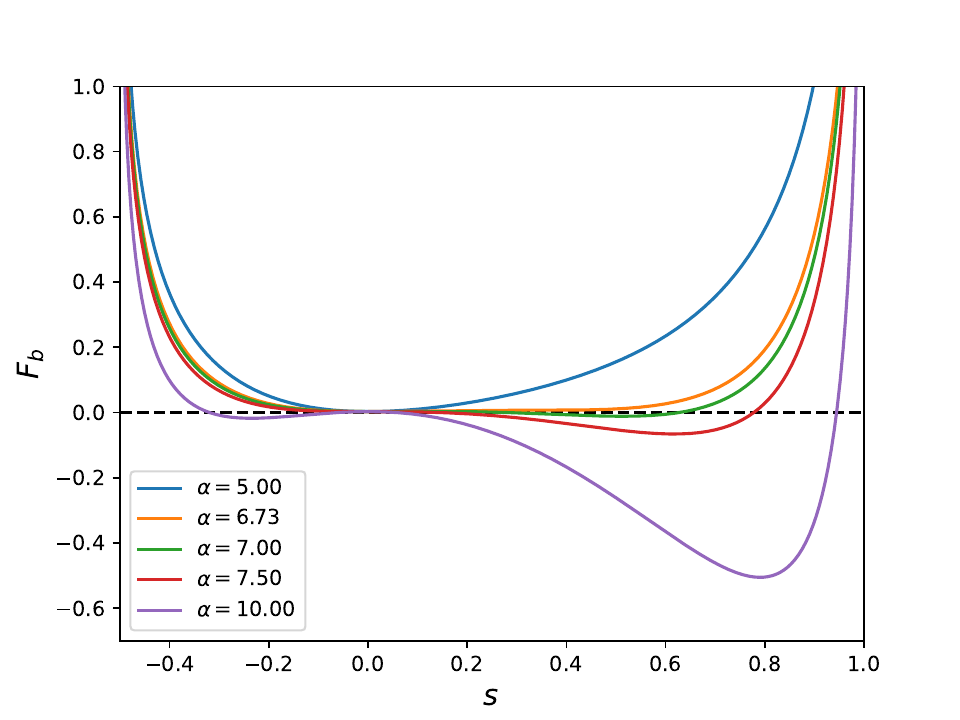}
\caption{The numerically evaluated bulk energy \eqref{bing-bulk} at uniaxial tensors \eqref{Q-uniax} plotted against $s\in(-\frac12,1)$.} \label{fig:bing-bulk-uniax}
\end{figure}


\subsection{Droplet problem}

To further verify that our model can be applied to the confinement problem and even the free-boundary problem \cite{wu_analysing_2026}, we propose the following diffuse-interface energy for a free-boundary NLC droplet based on our entropy decomposition approximation
\begin{equation}\label{bingphf}
\begin{aligned}
    \mathscr{E}[\vb Q,\phi]&=
    \int_\Omega \phi^2\left[ \lambda \left( S(\vb Q)-\frac{\alpha}{2}|\vb Q|^2 - B_{\min}\right) + \frac{1}{2\lambda}|\grad\vb Q|^2 \right]\d x \\
    &\quad + w_p \int_\Omega \qty[
        \ve^{-1}\phi^2 (1-\phi)^2 + \ve|\nabla\phi|^2
        +\omega \ve \left|\qty(\vb Q+\frac{1}{3}\vb I)\nabla\phi \right|^2 ]\d x \\
    &\quad +  \frac{w_v}{2} \int_\Omega \qty[ \kappa |\grad\vb Q|^2 + \kappa^{-1} (1-\phi)^2|\vb Q|^2] \d x
\end{aligned}
\end{equation}
subject to the volume constraint
\begin{equation} \label{vol-con}
    \int_\Omega \phi\d x = V_0.
\end{equation}
Here, $\phi$ is a phase field function such that $\phi\approx 1$ represents the interior of the droplet and $\phi\approx 0$ the isotropic liquid surrounding it.
The energy terms in \eqref{bingphf} are explained as follows.
\begin{itemize}
\item The first integral is the molecular free energy \eqref{ons-fe-hom} under Bingham closure, consisting of the bulk energy $F_b(\vb Q)$ \eqref{bing-bulk} the simplified one-constant elastic energy and $\frac{1}{2\lambda}|\grad\vb Q|^2$ \cite{mottram_introduction_2014}. The parameter $\lambda$ is proportional to the characteristic scale, and $B_{\min}$ is the minimum of \eqref{bing-bulk} that keeps the functional positive. Both terms are masked by $\phi^2$ to restrict the integration to the droplet.
\item The second integral is the interfacial energy, where $w_p$ is a positive weight. The double-well potential $\ve^{-1}\phi^2 (1-\phi)^2$ and the gradient $\ve|\nabla\phi|^2$ constitute the Van der Waals-Cahn-Hilliard energy which approximates the boundary area \cite{cahn_free_1958,modica_gradient_1987,sternberg_effect_1988,van_der_waals_thermodynamic_1979}. The small parameter $\ve$ represents the width of the interface where $\phi$ changes rapidly.
The term $\omega \ve \left|\qty(\vb Q+\frac{1}{3}\vb I)\nabla\phi \right|^2$ is a penalty enforcing the \textit{tangential anchoring condition}, i.e.~the leading eigenvectors of $\vb Q$ should be perpendicular to the boundary normal represented by $\nabla\phi$.
\item The third integral is a penalty enforcing that $\vb Q=0$ in the isotropic region $\phi\approx 0$, where $w_v$ is a positive weight. The gradient term $\kappa|\grad\vb Q|^2$ is added to ensure the overall coercivity of the functional. The small parameter $\kappa$ represents the width of the interface where $\vb Q$ changes.
\end{itemize}
By the requirement of the Bingham entropy $S(\vb Q)$, any well-defined configuration $(\vb Q,\phi)$ must satisfy the physical constraint \eqref{Q-phy}, which overcomes the non-physicality issue occurring in our previous Landau-de~Gennes-based model \cite{majumdar_equilibrium_2010}.

We compute the energy \eqref{bingphf} numerically on the 3D region $\Omega=[0,1]^3$ with periodic boundary conditions. We set the parameters
\[\alpha=8,\ V_0=0.1,\ \ve=0.005,\ \omega=20,\ w_p=1,\ w_v=1,\ \kappa=\sqrt{\ve},\]
and adjust $\lambda=1,2,3,4,5,6$. Note that we choose $\alpha>7.5$ so that a nematic phase is the only stable stationary point of the molecular bulk energy under Bingham closure.
The region is discretized into a finite difference grid of size $N\times N\times N$ where $N=48$, and we evaluate the energy with finite difference, where we utilize the neural network $\Delta S_{\rm NN}(\vb Q)$ to compute the entropy $S(\vb Q)$ under Bingham closure. Then, we run the gradient flow
\[\frac{\pt\vb Q}{\pt t} = -\frac{\delta\mathscr{E}}{\delta\vb Q},\ \frac{\pt\phi}{\pt t} = -\frac{\delta\mathscr{E}}{\delta\phi}+C \]
to minimize the energy. 
The initial guesses are $\vb Q\equiv 0$ and $\phi$ a smooth approximation to a slightly prolate spheroid with volume $V_0$. Finally, we discover the droplet configurations shown in Figure \ref{fig:droplets}. As the scale $\lambda$ increases, biaxial regions (shown in red) are gradually eliminated from the droplet, and the droplet shape is more and more elongated. Eventually, it becomes a nematic tactoid with pointy ends filled with uniaxial tensors, which is consistent with the experimental observations \cite{bawden_liquid_1936,wang_liquid_2018} and the existing numerical results in the diffuse-interface LdG model \cite{wu_analysing_2026,wu_diffuse-interface_2025}. Therefore, the modified diffuse-interface energy \eqref{bingphf} behaves like our previous diffuse-interface LdG model of nematic droplets, but has the additional advantage that the physical admissibility of $\vb Q$ is ensured by the Bingham entropy $S(\vb Q)$.

\begin{figure}
\centering
\includegraphics[width=0.55\textwidth]{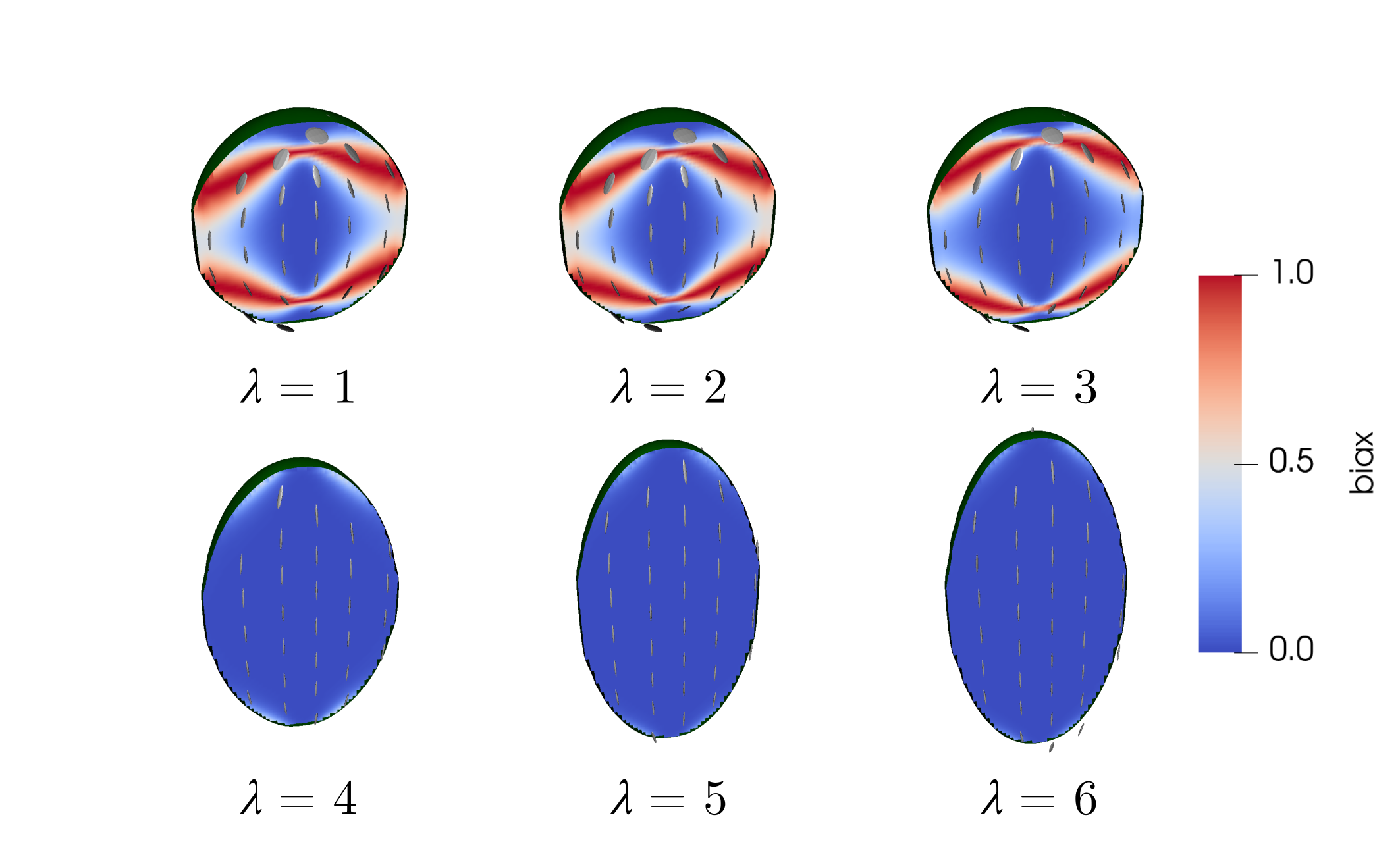}
\caption{Droplets obtained from the gradient flow of \eqref{bingphf}. Boundary of droplet (dark green) is the level set $\phi=0.5$. Colours denote biaxiality $\beta=1-6(\tr\vb Q^3)^2/(\tr\vb Q^2)^3$. White ellipsoids represent $\vb Q$-tensors, with principal axes parallel to the eigenvectors, and their lengths corresponding to eigenvalues in order.} \label{fig:droplets}
\end{figure}

\section{Conclusion}

In this paper, we conduct a detailed analysis of the Bingham closure of Onsager’s molecular free energy, with particular emphasis on the limiting behaviour of the entropy under Bingham closure, $S(\vb Q)$, near the boundaries of the admissible domain, where a logarithmic singularity arises. 
Our principal contribution is the decomposition of the singular entropy into an explicit singular leading term $\hat S$ and an implicit regular correction term $\Delta S$, with special attention in the 3D case (Theorem \ref{thm:SQ-asymp}; Propositions \ref{prop:S2d-asymp} and \ref{prop:SmShat-lip}). This decomposition allows the numerical evaluation of $S(\vb Q)$ to be reduced to the computation of $\Delta S$, thereby effectively circumventing the singular behaviour. Notably, the correction term $\Delta S$ possesses a uniformly bounded gradient that facilitates the application of straightforward and efficient numerical schemes for its approximation.
We further investigate the embedding relationship between the 3D and 2D Bingham closure models (Theorem \ref{thm:bing3d-to-2d}). In particular, we show that the two-dimensional correction term $\Delta\tilde S$ in 2D can be obtained as a slice of its 3D counterpart $\Delta S$ up to an additive constant, which allows us to use the 2D model to reduce computational cost when studying the NLC in 2D or shallow 3D confinements.
Finally, the Lipschitz regularity of $\Delta S$ enables the effective exploitation of the approximation capabilities of neural networks. Numerical experiments confirm that the proposed approximation framework can be successfully applied to the prediction of NLC equilibria and the isotropic–nematic phase transition.

This work suggests several directions for future studies. On the theoretical side, it would be natural to extend the entropy decomposition to higher-dimensional Bingham distributions ($d\ge 4$), which arise in applications such as computer graphics \cite{chen_maximum_2021}. In this setting, an extension of the embedding relation in Theorem \ref{thm:bing3d-to-2d} would also be of interest.
From a computational perspective, the regularity of $\Delta S$ allows the use of higher-order numerical approximations, such as interpolation, which may yield greater accuracy than the neural-network approach considered here.
The proposed framework may also be incorporated into computations of NLC dynamics, where the Bingham closure is known to be effective \cite{feng_closure_1998}.
By combining a relatively simple structure with physically enforced constraints through Bingham closure, the present model offers a promising approach for efficient computation of NLC dynamics, such as the system proposed in \cite{han_microscopic_2015}, which is closely related to the molecular model but has seen limited numerical investigation.

\bibliographystyle{siam}
\bibliography{lc_bib,nn_bib,myworks}

@book{golub_matrix_2013,
	address = {Baltimore},
	edition = {4th},
	series = {Johns {Hopkins} studies in the mathematical sciences},
	title = {Matrix {Computations}},
	isbn = {978-1-4214-0794-4},
	language = {en},
	publisher = {The Johns Hopkins University Press},
	author = {Golub, Gene H. and Van Loan, Charles F.},
	year = {2013},
}

@article{wang_modelling_2021,
	title = {Modelling and computation of liquid crystals},
	volume = {30},
	issn = {0962-4929, 1474-0508},
	url = {https://www.cambridge.org/core/product/identifier/S0962492921000088/type/journal_article},
	doi = {10.1017/S0962492921000088},
	language = {en},
	urldate = {2023-03-09},
	journal = {Acta Numerica},
	author = {Wang, Wei and Zhang, Lei and Zhang, Pingwen},
	month = may,
	year = {2021},
	pages = {765--851},
}

@book{de_gennes_physics_1993,
	address = {Oxford, New York},
	edition = {2nd},
	series = {Oxford science publications},
	title = {The {Physics} of {Liquid} {Crystals}},
	isbn = {978-0-19-852024-5},
	number = {83},
	publisher = {Clarendon Press; Oxford University Press},
	author = {de Gennes, Pierre Gilles and Prost, Jacques},
	year = {1993},
}

@article{han_microscopic_2015,
	title = {From microscopic theory to macroscopic theory: {A} systematic study on modeling for liquid crystals},
	volume = {215},
	issn = {0003-9527, 1432-0673},
	shorttitle = {From {Microscopic} {Theory} to {Macroscopic} {Theory}},
	url = {http://link.springer.com/10.1007/s00205-014-0792-3},
	doi = {10.1007/s00205-014-0792-3},
	language = {en},
	number = {3},
	urldate = {2023-03-09},
	journal = {Archive for Rational Mechanics and Analysis},
	author = {Han, Jiequn and Luo, Yi and Wang, Wei and Zhang, Pingwen and Zhang, Zhifei},
	month = mar,
	year = {2015},
	pages = {741--809},
}

@article{yin_construction_2020,
	title = {Construction of a pathway map on a complicated energy landscape},
	volume = {124},
	issn = {0031-9007, 1079-7114},
	url = {https://journals.aps.org/prl/abstract/10.1103/PhysRevLett.124.090601},
	doi = {10.1103/PhysRevLett.124.090601},
	language = {en},
	number = {9},
	urldate = {2023-03-09},
	journal = {Physical Review Letters},
	author = {Yin, Jianyuan and Wang, Yiwei and Chen, Jeff Z. Y. and Zhang, Pingwen and Zhang, Lei},
	month = mar,
	year = {2020},
	pages = {090601},
}

@article{xu_quasi-entropy_2022,
	title = {Quasi-entropy by log-determinant covariance matrix and application to liquid crystals},
	volume = {435},
	issn = {01672789},
	url = {https://linkinghub.elsevier.com/retrieve/pii/S0167278922000914},
	doi = {10.1016/j.physd.2022.133308},
	language = {en},
	urldate = {2023-03-30},
	journal = {Physica D: Nonlinear Phenomena},
	author = {Xu, Jie},
	month = jul,
	year = {2022},
	pages = {133308},
}

@article{luo_fast_2018,
	title = {A fast algorithm for the moments of {Bingham} distribution},
	volume = {75},
	issn = {0885-7474, 1573-7691},
	url = {http://link.springer.com/10.1007/s10915-017-0589-2},
	doi = {10.1007/s10915-017-0589-2},
	language = {en},
	number = {3},
	urldate = {2023-04-02},
	journal = {Journal of Scientific Computing},
	author = {Luo, Yixiang and Xu, Jie and Zhang, Pingwen},
	month = jun,
	year = {2018},
	pages = {1337--1350},
}

@article{canevari_well_2020,
	title = {The well order reconstruction solution for three-dimensional wells, in the {Landau}-de {Gennes} theory},
	volume = {119},
	issn = {00207462},
	url = {https://linkinghub.elsevier.com/retrieve/pii/S0020746218307212},
	doi = {10.1016/j.ijnonlinmec.2019.103342},
	language = {en},
	urldate = {2023-07-29},
	journal = {International Journal of Non-Linear Mechanics},
	author = {Canevari, Giacomo and Harris, Joseph and Majumdar, Apala and Wang, Yiwei},
	month = mar,
	year = {2020},
	pages = {103342},
}

@article{modica_gradient_1987,
	title = {The gradient theory of phase transitions and the minimal interface criterion},
	volume = {98},
	issn = {0003-9527, 1432-0673},
	url = {http://link.springer.com/10.1007/BF00251230},
	doi = {10.1007/BF00251230},
	language = {en},
	number = {2},
	urldate = {2023-10-08},
	journal = {Archive for Rational Mechanics and Analysis},
	author = {Modica, Luciano},
	month = jun,
	year = {1987},
	pages = {123--142},
}

@misc{mottram_introduction_2014,
	title = {Introduction to {Q}-tensor theory},
	url = {http://arxiv.org/abs/1409.3542},
	language = {en},
	urldate = {2023-10-09},
	publisher = {arXiv},
	author = {Mottram, Nigel J. and Newton, Christopher J. P.},
	month = sep,
	year = {2014},
	note = {arXiv:1409.3542 [cond-mat]},
}

@article{spencer_zenithal_2010,
	title = {Zenithal bistable device: {Comparison} of modeling and experiment},
	volume = {82},
	issn = {1539-3755, 1550-2376},
	shorttitle = {Zenithal bistable device},
	url = {https://link.aps.org/doi/10.1103/PhysRevE.82.021702},
	doi = {10.1103/PhysRevE.82.021702},
	language = {en},
	number = {2},
	urldate = {2023-10-25},
	journal = {Physical Review E},
	author = {Spencer, T. J. and Care, C. M. and Amos, R. M. and Jones, J. C.},
	month = aug,
	year = {2010},
	pages = {021702},
}

@article{kitson_controllable_2002,
	title = {Controllable alignment of nematic liquid crystals around microscopic posts: {Stabilization} of multiple states},
	volume = {80},
	issn = {0003-6951, 1077-3118},
	shorttitle = {Controllable alignment of nematic liquid crystals around microscopic posts},
	url = {https://pubs.aip.org/apl/article/80/19/3635/514957/Controllable-alignment-of-nematic-liquid-crystals},
	doi = {10.1063/1.1478778},
	language = {en},
	number = {19},
	urldate = {2023-10-25},
	journal = {Applied Physics Letters},
	author = {Kitson, Stephen and Geisow, Adrian},
	month = may,
	year = {2002},
	pages = {3635--3637},
}

@book{boyd_convex_2004,
	address = {Cambridge, UK ; New York},
	title = {Convex {Optimization}},
	isbn = {978-0-521-83378-3},
	language = {en},
	publisher = {Cambridge University Press},
	author = {Boyd, Stephen P. and Vandenberghe, Lieven},
	year = {2004},
}

@article{golovaty_dimension_2015,
	title = {Dimension reduction for the {Landau}-de {Gennes} model in planar nematic thin films},
	volume = {25},
	issn = {0938-8974, 1432-1467},
	url = {http://link.springer.com/10.1007/s00332-015-9264-7},
	doi = {10.1007/s00332-015-9264-7},
	language = {en},
	number = {6},
	urldate = {2024-02-21},
	journal = {Journal of Nonlinear Science},
	author = {Golovaty, Dmitry and Montero, José Alberto and Sternberg, Peter},
	month = dec,
	year = {2015},
	pages = {1431--1451},
}

@article{wang_liquid_2018,
	title = {Liquid crystalline tactoids: ordered structure, defective coalescence and evolution in confined geometries},
	volume = {376},
	issn = {1364-503X, 1471-2962},
	shorttitle = {Liquid crystalline tactoids},
	url = {https://royalsocietypublishing.org/doi/10.1098/rsta.2017.0042},
	doi = {10.1098/rsta.2017.0042},
	language = {en},
	number = {2112},
	urldate = {2024-02-23},
	journal = {Philosophical Transactions of the Royal Society A: Mathematical, Physical and Engineering Sciences},
	author = {Wang, Pei-Xi and MacLachlan, Mark J.},
	month = feb,
	year = {2018},
	pages = {20170042},
}

@article{majumdar_equilibrium_2010,
	title = {Equilibrium order parameters of nematic liquid crystals in the {Landau}-de {Gennes} theory},
	volume = {21},
	issn = {0956-7925, 1469-4425},
	url = {https://www.cambridge.org/core/product/identifier/S0956792509990210/type/journal_article},
	doi = {10.1017/S0956792509990210},
	language = {en},
	number = {2},
	urldate = {2024-03-07},
	journal = {European Journal of Applied Mathematics},
	author = {Majumdar, Apala},
	month = apr,
	year = {2010},
	pages = {181--203},
}

@article{van_der_waals_thermodynamic_1979,
	title = {The thermodynamic theory of capillarity under the hypothesis of a continuous variation of density},
	volume = {20},
	issn = {0022-4715, 1572-9613},
	url = {http://link.springer.com/10.1007/BF01011514},
	doi = {10.1007/BF01011514},
	language = {en},
	number = {2},
	urldate = {2024-03-31},
	journal = {Journal of Statistical Physics},
	author = {van der Waals, J. D.},
	translator = {Rowlinson, J. S.},
	month = feb,
	year = {1979},
	pages = {200--244},
}

@article{sternberg_effect_1988,
	title = {The effect of a singular perturbation on nonconvex variational problems},
	volume = {101},
	issn = {0003-9527, 1432-0673},
	url = {http://link.springer.com/10.1007/BF00253122},
	doi = {10.1007/BF00253122},
	language = {en},
	number = {3},
	urldate = {2024-03-31},
	journal = {Archive for Rational Mechanics and Analysis},
	author = {Sternberg, Peter},
	month = sep,
	year = {1988},
	pages = {209--260},
}

@article{cahn_free_1958,
	title = {Free energy of a nonuniform system. {I}. {Interfacial} free energy},
	volume = {28},
	issn = {0021-9606, 1089-7690},
	url = {https://pubs.aip.org/jcp/article/28/2/258/74794/Free-Energy-of-a-Nonuniform-System-I-Interfacial},
	doi = {10.1063/1.1744102},
	language = {en},
	number = {2},
	urldate = {2024-04-18},
	journal = {The Journal of Chemical Physics},
	author = {Cahn, John W. and Hilliard, John E.},
	month = feb,
	year = {1958},
	pages = {258--267},
}

@article{onsager_effects_1949,
	title = {The effects of shape on the interaction of colloidal particles},
	volume = {51},
	issn = {0077-8923, 1749-6632},
	url = {https://nyaspubs.onlinelibrary.wiley.com/doi/10.1111/j.1749-6632.1949.tb27296.x},
	doi = {10.1111/j.1749-6632.1949.tb27296.x},
	language = {en},
	number = {4},
	urldate = {2024-07-10},
	journal = {Annals of the New York Academy of Sciences},
	author = {Onsager, Lars},
	month = may,
	year = {1949},
	pages = {627--659},
}

@article{de_gennes_short_1971,
	title = {Short range order effects in the isotropic phase of nematics and cholesterics},
	volume = {12},
	issn = {0026-8941},
	url = {https://www.tandfonline.com/doi/full/10.1080/15421407108082773},
	doi = {10.1080/15421407108082773},
	language = {en},
	number = {3},
	urldate = {2024-07-11},
	journal = {Molecular Crystals and Liquid Crystals},
	author = {de Gennes, P. G.},
	month = feb,
	year = {1971},
	pages = {193--214},
}

@article{liu_axial_2005,
	title = {Axial symmetry and classification of stationary solutions of {Doi}-{Onsager} equation on the sphere with {Maier}-{Saupe} potential},
	volume = {3},
	issn = {15396746, 19450796},
	url = {https://link.intlpress.com/JDetail/1806266493467840514},
	doi = {10.4310/CMS.2005.v3.n2.a7},
	language = {en},
	number = {2},
	urldate = {2025-03-13},
	journal = {Communications in Mathematical Sciences},
	author = {Liu, Hailiang and Zhang, Hui and Zhang, Pingwen},
	year = {2005},
	pages = {201--218},
}

@article{maier_einfache_1959,
	title = {Eine einfache molekular-statistische {Theorie} der nematischen kristallinflüssigen {Phase}. {Teil} {I}},
	volume = {14},
	copyright = {http://creativecommons.org/licenses/by-nc-nd/3.0/},
	issn = {1865-7109, 0932-0784},
	url = {https://www.degruyter.com/document/doi/10.1515/zna-1959-1005/html},
	doi = {10.1515/zna-1959-1005},
	language = {en},
	number = {10},
	urldate = {2025-03-31},
	journal = {Zeitschrift für Naturforschung A},
	author = {Maier, Wilhelm and Saupe, Alfred},
	month = oct,
	year = {1959},
	pages = {882--889},
}

@article{ball_nematic_2010,
	title = {Nematic liquid crystals: from {Maier}-{Saupe} to a continuum theory},
	volume = {525},
	issn = {1542-1406, 1563-5287},
	shorttitle = {Nematic liquid crystals},
	url = {http://www.tandfonline.com/doi/abs/10.1080/15421401003795555},
	doi = {10.1080/15421401003795555},
	language = {en},
	number = {1},
	urldate = {2025-04-15},
	journal = {Molecular Crystals and Liquid Crystals},
	author = {Ball, John M. and Majumdar, Apala},
	month = jul,
	year = {2010},
	pages = {1--11},
}

@article{feng_closure_1998,
	title = {Closure approximations for the {Doi} theory: {Which} to use in simulating complex flows of liquid-crystalline polymers?},
	volume = {42},
	issn = {0148-6055, 1520-8516},
	shorttitle = {Closure approximations for the {Doi} theory},
	url = {https://pubs.aip.org/sor/jor/article/42/5/1095-1119/239264},
	doi = {10.1122/1.550920},
	language = {en},
	number = {5},
	urldate = {2025-04-15},
	journal = {Journal of Rheology},
	author = {Feng, J. and Chaubal, C. V. and Leal, L. G.},
	month = sep,
	year = {1998},
	pages = {1095--1119},
}

@article{li_local_2015,
	title = {Local well-posedness and small {Deborah} limit of a molecule-based \textit{{Q}}-tensor system},
	volume = {20},
	issn = {1531-3492},
	url = {http://www.aimsciences.org/journals/displayArticlesnew.jsp?paperID=11548},
	doi = {10.3934/dcdsb.2015.20.2611},
	language = {en},
	number = {8},
	urldate = {2025-04-16},
	journal = {Discrete and Continuous Dynamical Systems - Series B},
	author = {Li, Sirui and Wang, Wei and Zhang, Pingwen},
	month = aug,
	year = {2015},
	pages = {2611--2655},
}

@article{bingham_antipodally_1974,
	title = {An antipodally symmetric distribution on the sphere},
	volume = {2},
	issn = {0090-5364},
	url = {https://projecteuclid.org/journals/annals-of-statistics/volume-2/issue-6/An-Antipodally-Symmetric-Distribution-on-the-Sphere/10.1214/aos/1176342874.full},
	doi = {10.1214/aos/1176342874},
	number = {6},
	urldate = {2025-04-22},
	journal = {The Annals of Statistics},
	author = {Bingham, Christopher},
	month = nov,
	year = {1974},
	pages = {1201--1225},
}

@article{kent_asymptotic_1987,
	title = {Asymptotic expansions for the {Bingham} distribution},
	volume = {36},
	issn = {00359254},
	url = {https://www.jstor.org/stable/10.2307/2347545?origin=crossref},
	doi = {10.2307/2347545},
	language = {en},
	number = {2},
	urldate = {2025-04-22},
	journal = {Applied Statistics},
	author = {Kent, John T.},
	year = {1987},
	pages = {139},
}

@article{chaubal_closure_1998,
	title = {A closure approximation for liquid-crystalline polymer models based on parametric density estimation},
	volume = {42},
	issn = {0148-6055, 1520-8516},
	url = {https://pubs.aip.org/sor/jor/article/42/1/177-201/239183},
	doi = {10.1122/1.550887},
	language = {en},
	number = {1},
	urldate = {2025-04-25},
	journal = {Journal of Rheology},
	author = {Chaubal, Charu V. and Leal, L. Gary},
	month = jan,
	year = {1998},
	pages = {177--201},
}

@article{weady_fast_2022,
	title = {A fast {Chebyshev} method for the {Bingham} closure with application to active nematic suspensions},
	volume = {457},
	issn = {00219991},
	url = {https://linkinghub.elsevier.com/retrieve/pii/S0021999121008329},
	doi = {10.1016/j.jcp.2021.110937},
	language = {en},
	urldate = {2025-04-28},
	journal = {Journal of Computational Physics},
	author = {Weady, Scott and Shelley, Michael J. and Stein, David B.},
	month = may,
	year = {2022},
	pages = {110937},
}

@article{chen_maximum_2021,
	title = {Maximum likelihood estimation of the {Fisher}–{Bingham} distribution via efficient calculation of its normalizing constant},
	volume = {31},
	issn = {0960-3174, 1573-1375},
	url = {https://link.springer.com/10.1007/s11222-021-10015-9},
	doi = {10.1007/s11222-021-10015-9},
	language = {en},
	number = {4},
	urldate = {2025-05-23},
	journal = {Statistics and Computing},
	author = {Chen, Yici and Tanaka, Ken'ichiro},
	month = jul,
	year = {2021},
	pages = {40},
}

@article{katriel_free_1986,
	title = {Free energies in the {Landau} and molecular field approaches},
	volume = {1},
	issn = {0267-8292, 1366-5855},
	url = {http://www.tandfonline.com/doi/abs/10.1080/02678298608086667},
	doi = {10.1080/02678298608086667},
	language = {en},
	number = {4},
	urldate = {2025-06-08},
	journal = {Liquid Crystals},
	author = {Katriel, J. and Kventsel, G. F. and Luckhurst, G. R. and Sluckin, T. J.},
	month = jul,
	year = {1986},
	pages = {337--355},
}

@article{lewis_convex_1996,
	title = {Convex analysis on the {Hermitian} matrices},
	volume = {6},
	issn = {1052-6234, 1095-7189},
	url = {http://epubs.siam.org/doi/10.1137/0806009},
	doi = {10.1137/0806009},
	language = {en},
	number = {1},
	urldate = {2025-06-09},
	journal = {SIAM Journal on Optimization},
	author = {Lewis, A. S.},
	month = feb,
	year = {1996},
	pages = {164--177},
}

@book{abramowitz_handbook_2013,
	address = {New York, NY},
	edition = {9th},
	series = {Dover books on mathematics},
	title = {Handbook of {Mathematical} {Functions} with {Formulas}, {Graphs}, and {Mathematical} {Tables}},
	isbn = {978-0-486-61272-0},
	shorttitle = {Handbook of {Mathematical} {Functions}},
	language = {eng},
	publisher = {Dover Publications},
	editor = {Abramowitz, Milton and Stegun, Irene A.},
	year = {2013},
}

@article{mardia_statistics_1975,
	title = {Statistics of directional data},
	volume = {37},
	copyright = {https://academic.oup.com/journals/pages/open\_access/funder\_policies/chorus/standard\_publication\_model},
	issn = {1369-7412, 1467-9868},
	url = {https://academic.oup.com/jrsssb/article/37/3/349/7027430},
	doi = {10.1111/j.2517-6161.1975.tb01550.x},
	language = {en},
	number = {3},
	urldate = {2025-06-12},
	journal = {Journal of the Royal Statistical Society Series B: Statistical Methodology},
	author = {Mardia, K. V.},
	month = jul,
	year = {1975},
	pages = {349--371},
}

@article{urbanski_liquid_2017,
	title = {Liquid crystals in micron-scale droplets, shells and fibers},
	volume = {29},
	issn = {0953-8984, 1361-648X},
	url = {https://iopscience.iop.org/article/10.1088/1361-648X/aa5706},
	doi = {10.1088/1361-648X/aa5706},
	language = {en},
	number = {13},
	urldate = {2025-06-26},
	journal = {Journal of Physics: Condensed Matter},
	author = {Urbanski, Martin and Reyes, Catherine G. and Noh, JungHyun and Sharma, Anshul and Geng, Yong and Subba Rao Jampani, Venkata and Lagerwall, Jan P. F.},
	month = apr,
	year = {2017},
	pages = {133003},
}

@article{wu_diffuse-interface_2025,
	title = {A diffuse-interface {Landau}–de {Gennes} model for free boundary problems in the theory of nematic liquid crystals},
	volume = {57},
	issn = {0036-1410, 1095-7154},
	url = {https://epubs.siam.org/doi/10.1137/24M1710322},
	doi = {10.1137/24m1710322},
	language = {en},
	number = {4},
	urldate = {2025-08-04},
	journal = {SIAM Journal on Mathematical Analysis},
	publisher = {Society for Industrial \& Applied Mathematics (SIAM)},
	author = {Wu, Dawei and Shi, Baoming and Han, Yucen and Zhang, Pingwen and Majumdar, Apala and Zhang, Lei},
	month = aug,
	year = {2025},
	pages = {4358--4395},
}

@article{fatkullin_critical_2005,
	title = {Critical points of the {Onsager} functional on a sphere},
	volume = {18},
	issn = {0951-7715, 1361-6544},
	url = {https://iopscience.iop.org/article/10.1088/0951-7715/18/6/008},
	doi = {10.1088/0951-7715/18/6/008},
	language = {en},
	number = {6},
	urldate = {2025-08-20},
	journal = {Nonlinearity},
	publisher = {IOP Publishing},
	author = {Fatkullin, I. and Slastikov, V.},
	month = nov,
	year = {2005},
	pages = {2565--2580},
}

@article{weady_thermodynamically_2022,
	title = {Thermodynamically consistent coarse-graining of polar active fluids},
	volume = {7},
	copyright = {https://link.aps.org/licenses/aps-default-license},
	issn = {2469-990X},
	url = {https://link.aps.org/doi/10.1103/PhysRevFluids.7.063301},
	doi = {10.1103/physrevfluids.7.063301},
	language = {en},
	number = {6},
	urldate = {2025-08-28},
	journal = {Physical Review Fluids},
	publisher = {American Physical Society (APS)},
	author = {Weady, Scott and Stein, David B. and Shelley, Michael J.},
	month = jun,
	year = {2022},
}

@book{priestley_introduction_1975,
	address = {Boston, MA},
	title = {Introduction to {Liquid} {Crystals}},
	copyright = {https://www.springernature.com/gp/researchers/text-and-data-mining},
	isbn = {978-1-4684-2177-4 978-1-4684-2175-0},
	url = {https://link.springer.com/10.1007/978-1-4684-2175-0},
	doi = {10.1007/978-1-4684-2175-0},
	language = {en},
	urldate = {2025-08-29},
	publisher = {Springer US},
	editor = {Priestley, E. B. and Wojtowicz, Peter J. and Sheng, Ping},
	year = {1975},
}

@article{bawden_liquid_1936,
	title = {Liquid crystalline substances from virus-infected plants},
	volume = {138},
	copyright = {http://www.springer.com/tdm},
	issn = {0028-0836, 1476-4687},
	url = {https://www.nature.com/articles/1381051a0},
	doi = {10.1038/1381051a0},
	language = {en},
	number = {3503},
	urldate = {2026-01-13},
	journal = {Nature},
	author = {Bawden, F. C. and Pirie, N. W. and Bernal, J. D. and Fankuchen, I.},
	month = dec,
	year = {1936},
	pages = {1051--1052},
}

@article{shi_neural_2026,
	title = {Neural network-based tensor models for liquid crystals with molecular-level information},
	volume = {113},
	issn = {2470-0045, 2470-0053},
	url = {https://link.aps.org/doi/10.1103/7v32-lr9w},
	doi = {10.1103/7v32-lr9w},
	language = {en},
	number = {1},
	urldate = {2026-01-18},
	journal = {Physical Review E},
	author = {Shi, Baoming and Majumdar, Apala and Zhang, Lei},
	month = jan,
	year = {2026},
	pages = {015401},
}

@article{wu_unlocking_2026,
	title = {Unlocking hidden topological multistability via biphasic correlated order evolution},
	volume = {136},
	issn = {0031-9007, 1079-7114},
	url = {https://link.aps.org/doi/10.1103/zyy7-cm33},
	doi = {10.1103/zyy7-cm33},
	language = {en},
	number = {6},
	urldate = {2026-02-14},
	journal = {Physical Review Letters},
	author = {Wu, Jin-Bing and Guo, Zhenghao and Shi, Baoming and Luo, Daoxing and Zhang, Lei and Lu, Yan-Qing and Hu, Wei},
	month = feb,
	year = {2026},
	pages = {068101},
}

@unpublished{wu_analysing_2026,
  title   = {Analysing the nematic liquid crystal droplet with an improved diffuse-interface {Landau}-de {Gennes} model},
  note = {to appear in SIAM Journal on Applied Mathematics},
  author  = {Shi, Baoming and Wu, Dawei and Zhang, Pingwen and Zhang, Lei}
}

@article{hornik_approximation_1991,
	title = {Approximation capabilities of multilayer feedforward networks},
	volume = {4},
	copyright = {https://www.elsevier.com/tdm/userlicense/1.0/},
	issn = {08936080},
	url = {https://linkinghub.elsevier.com/retrieve/pii/089360809190009T},
	doi = {10.1016/0893-6080(91)90009-T},
	language = {en},
	number = {2},
	urldate = {2025-05-29},
	journal = {Neural Networks},
	author = {Hornik, Kurt},
	year = {1991},
	pages = {251--257},
}

\appendix

\section{Properties of the quasi-entropy} \label{app:qe}

The quasi-entropy $\hat S(q)$ captures the blow-up behaviour of the entropy in limiting regimes, so it is expected to retain some key properties of the original entropy, e.g.~the isotropic–nematic phase transition. 
In this case, the explicit form of the quasi-entropy can allow quantitative analysis.

We replace the entropy $S(q)$ under 3D Bingham closure with the quasi-entropy $\hat S(q)$ \eqref{Shat-q} in the bulk energy \eqref{bing-bulk}:
\begin{equation} \label{qe-bulk}
    \hat F_b(q)=\hat S(q)- \frac{\alpha}{2}|q|^2,\ q_1+q_2+q_3=0.
\end{equation}
(written in terms of the eigenvalues $q=\lambda(\vb Q)$)
Then, we show that the new bulk energy still possesses qualitatively similar properties to the original energy \eqref{bing-bulk}, in the sense that as the temperature-related constant $\alpha$ changes, stationary points of the $\hat F_b$ exhibit a first-order isotropic-nematic phase transition. 

A stationary point $q=[q_1,q_2,q_3]^T$ of \eqref{qe-bulk} solves the equations
\begin{equation}
    -\frac12 \frac{1}{q_i+\frac13} - \alpha q_i = \lambda,
\end{equation}
where $\lambda$ is a Lagrangian multiplier from the constraint $q_1+q_2+q_3=0$. We then find that
\begin{equation} \label{qe-lag}
    \alpha q_i^2 +\qty(\frac{\alpha}{3} +\lambda) q_i + \frac12+\frac{\lambda}{3} = 0,
\end{equation}
so $q_i$ are roots of a polynomial with degree no more than 2, and take at most two values. In other words, all stationary points are uniaxial, like \eqref{bing-bulk} and the LdG energy.


In order to study the stationary points of $\hat F_b$ and their stability more accurately, we parametrize $q\in \Delta_0$ with
\begin{equation}
    q=\qty[ \frac{2s}{3}, -\frac{s+\eta}{3}, -\frac{s-\eta}{3}]^T,
\end{equation}
and then
\begin{equation}
    \hat F_b = -\frac12\left[ \log\frac{2s+1}{3} + \log \frac{1-s-\eta}{3}+ \log \frac{1-s+\eta}{3} \right] - \frac{\alpha}{2}\left( \frac23 s^2 + \frac29 \eta^2 \right).
\end{equation}
By the symmetry of eigenvalues, we can assume w.l.o.g. that $|q_1|\ge |q_2| \ge |q_3|$, which gives $|s|\ge |\eta|\ge 0$. At the stationary point, it must hold that $\eta=0$ because $q_2=q_3$. Equation \eqref{qe-lag} then becomes
\[\begin{cases}
    -\frac{1}{2s+1} +\frac{1}{2(1-s-\eta)}+\frac{1}{2(1-s+\eta)} - \frac23\alpha s=0 \\
    \frac{1}{2(1-s-\eta)}-\frac{1}{2(1-s+\eta)} - \frac29\alpha \eta=0
\end{cases}\]
The second equation follows from $\eta=0$ directly, and the first simplifies into
\[-\frac{1}{2s+1} + \frac{1}{1-s} - \frac23 \alpha s =0
\Rightarrow s (4\alpha s^2 -2\alpha s + 9-2\alpha)=0,\]
which gives three solutions:
\begin{equation} \label{qe-s}
    s_0=0,\ s_\pm = \frac{\alpha \pm 3\sqrt{\alpha(\alpha-4)}}{4\alpha}.
\end{equation}
We denote the stationary points corresponding to $s_{\{0,\pm\}}$ by $q_{\{0,\pm\}}$. $q_0=0$ is the isotropic phase, and $q_+$ is the uniaxial state corresponding to the nematic phase. 
Note that the physical constraint $q_\pm\in\Delta_0$ requires $s_\pm\in (-\frac12,1)$, so $q_\pm$ are well-defined only if $\alpha>4$, i.e.~nontrivial stationary points exist only at lower temperatures.

To study their stability, we compute the Hessian $H(q)=\nabla^2 \hat F_b(s,\eta)$ at $s=s_{\{0,\pm\}}$ and $\eta=0$:
\begin{equation}
    H(q)=\begin{bmatrix}
        \frac{2}{(2s+1)^2} + \frac{1}{(1-s)^2} - \frac23\alpha &
        0 \\ 0 & \frac{1}{(1-s)^2}-\frac29\alpha
    \end{bmatrix}.
\end{equation}
The stability of $q_{\{0,\pm\}}$ is determined by the signs of the diagonal entries of $H$.
At $q_0$, we get
\begin{equation}
    H_{11}(q_0)=3-\frac23\alpha,\ H_{22}(q_0) = 1-\frac29\alpha,
\end{equation}
so it is stable when $\alpha<\frac92$, and unstable when $\alpha>\frac92$. In other words, the isotropic state loses stability at a lower temperature.
At $q_{\pm}$,
we use \eqref{qe-s} to compute
\begin{equation}
\begin{aligned}
    H_{11}(q_\pm) &=\frac19 \alpha \qty (3\alpha-12\pm \sqrt{\alpha(\alpha-4)}), \\
    H_{22}(q_\pm) &= \frac29 \alpha\qty( \alpha-3 \pm \sqrt{\alpha(\alpha-4)} ).
\end{aligned}
\end{equation}
When $\alpha> 4$, we have that
\begin{gather*}
    H_{11}(q_+)>0,\ H_{22}(q_+)>0,\\
H_{11}(q_-)\begin{cases}
    <0, & \alpha<\frac92, \\
    >0, & \alpha>\frac92,
\end{cases}\ 
H_{22}(q_-)\begin{cases}
    <0, & \alpha>\frac92,\\
    >0, &\alpha<\frac92,
\end{cases}
\end{gather*}
so $q_+$ is always stable, while $q_-$ is always unstable (a saddle point of order 1).

When $4<\alpha<\frac92$, $q_0$ and $q_+$ coexist as stable local minimizers. Compare their energy levels:
\begin{equation}
\hat F_b(q_+)-\hat F_b(q_0) = - \frac12 \ln\left( \frac{9(1-s_+)}{2\alpha} \right)-\frac{\alpha s_+^2}{3}. \
\end{equation}
We assert that $\hat F_b(q_+)-\hat F_b(q_0)$ is strictly decreasing with respect to $\alpha$.
Let $\alpha=(t+\frac1t)^2$ with $t\ge 1$ (so $\alpha=\alpha(t)$ is monotonic), and then
\begin{align*}
    s_+&= \frac14 + \frac34 \sqrt{\frac{\alpha-4}{\alpha}} = \frac{2t^2-1}{2(t^2+1)}, \\
    \hat F_b(q_+)-\hat F_b(q_0)
&=-\frac12 \ln\qty(\frac{27}{4}\frac{1}{t^2+1}\cdot \frac{1}{(t+\frac1t)^2}) - \frac{1}{12} \qty(t+\frac1t)^2 \left( \frac{2t^2-1}{t^2+1} \right)^2 \\
&= -\frac12\ln \frac{27}{4} + \frac32\ln(t^2+1)-\ln t - \frac{1}{12} \qty(2t-\frac1t)^2 \triangleq \vph(t).
\end{align*}
Then, we compute the derivative of $\vph(t)$:
\[\vph'(t)=\frac{3t}{t^2+1}-\frac1t - \frac16\qty(2t-\frac1t)\qty(2+\frac{1}{t^2})=-\frac{(2t^2-1)^2(t^2-1)}{6t^3(t^2+1)}<0,\ \forall t>1,\]
so $\vph(t)$ is strictly decreasing with respect to $t$, and also to $\alpha$. In addition, $\vph(1)=\frac12\ln\frac{32}{27}-\frac{1}{12}>0$, and $\vph(t)\to-\infty$ as $t\to+\infty$, so there exists a unique zero $t^*>1$.
Numerical computation shows that the corresponding $\alpha^*=\alpha(t^*)\approx 4.051407.$

In summary, we have the following characterization of the stability results of the energy \eqref{qe-bulk}. The results are also shown in Table \ref{tab:Fb-sur-stab}.
\begin{proposition} \label{prop:qe-bulk-stab}
All stationary points of $\hat F_b(q)$ are uniaxial, i.e.~having two identical eigenvalues. Their existence and stability are given as follows:
\begin{enumerate}[\rm (i)]
\item When $\alpha<4$, the isotropic solution $q_0=0$ is the only stationary point of $\hat F_b$; when $\alpha\ge 4$, there also exists nonzero stationary points $q_\pm=s_\pm[\frac32,-\frac13,-\frac13]^T$, where $s_\pm$ are given by \eqref{qe-s}.
\item The solution $q_-$ is always unstable.
\item When $4<\alpha<\frac92$, $q_+$ and $q_0$ are both stable. When $\alpha>\frac92$, $q_+$ is the only stable stationary point, which corresponds to the nematic phase.
\item A first-order phase transition occurs at $\alpha^*\approx 4.051407,$ such that $q_+$ is the global minimizer when $\alpha>\alpha^*$, and vice versa when $\alpha<\alpha^*$.
\end{enumerate}
\end{proposition}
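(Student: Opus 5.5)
The plan is to collect the computations carried out just before the statement into the four claims. First, uniaxiality. At any stationary point the Lagrange condition \eqref{qe-lag} says that each eigenvalue $q_i$ is a root of one polynomial of degree at most two. Hence the $q_i$ take at most two distinct values, and two of them must coincide. Using the permutation symmetry of $\hat F_b$, I would order the eigenvalues so that $q_2=q_3$, which is $\eta=0$ in the parametrization $(s,\eta)$. The reduced equation $s(4\alpha s^2-2\alpha s+9-2\alpha)=0$ then gives $s_0=0$ and $s_\pm$ as in \eqref{qe-s}.

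For (i), note that $s_\pm$ are real exactly when $\alpha(\alpha-4)\ge0$, that is, $\alpha\ge4$. I would then check that they lie in $(-\tfrac12,1)$, so that $q_\pm\in\Delta_0$. Since $\sqrt{\alpha(\alpha-4)}<\alpha$, we get $s_\pm\in(-\tfrac12,1)$, because $s_\pm=\tfrac14\pm\tfrac34\sqrt{(\alpha-4)/\alpha}$.

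For (ii) and (iii), stability is read off from the diagonal Hessian $H$ in the $(s,\eta)$ coordinates. The off-diagonal entry vanishes at $\eta=0$ because $\hat F_b$ is even in $\eta$. At $q_0$ the Hessian is $\mathrm{diag}(3-\tfrac23\alpha,\,1-\tfrac29\alpha)$, so $q_0$ is stable iff $\alpha<\tfrac92$. At $q_\pm$, substituting $s_\pm$ into the closed forms for $H_{11}$ and $H_{22}$ gives the required signs:
\begin{itemize}
\item $H_{22}(q_+)>0$, since $\alpha-3>0$.
\item $H_{11}(q_+)>0$, since $3\alpha-12\ge0$ and the square root is positive for $\alpha>4$.
\item For $q_-$, the sign of $3\alpha-12-\sqrt{\alpha(\alpha-4)}$ is compared by squaring: $9(\alpha-4)^2$ versus $\alpha(\alpha-4)$, which changes sign at $\alpha=\tfrac92$.
\item Similarly $\alpha-3$ versus $\sqrt{\alpha(\alpha-4)}$ changes at $\alpha=\tfrac92$, with the opposite sign.
\end{itemize}
Thus exactly one entry of $H(q_-)$ is negative for $\alpha\neq\tfrac92$, and $q_-$ is a saddle. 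At $\alpha=\tfrac92$ both entries vanish; this degenerate case I would treat separately (a third-order expansion) or simply exclude as non-generic.

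For (iv), I would compare energies through $\varphi(t)=\hat F_b(q_+)-\hat F_b(q_0)$ with the substitution $\alpha=(t+1/t)^2$. The derivative computed above shows that $\varphi$ is strictly decreasing for $t>1$. We also have $\varphi(1)>0$ and $\varphi\to-\infty$, so $\varphi$ has a unique zero $t^*$ and hence a unique $\alpha^*$. Below $\alpha^*$ the state $q_0$ has lower energy, and above it $q_+$ does. $q_-$ is never the global minimizer, being unstable. Global minimality also needs $\hat F_b\to+\infty$ at $\partial\Delta_0$ (the log terms), so a minimizer exists and is a stationary point, hence one of $q_0,q_\pm$.

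The main obstacle is the bookkeeping in the last step. One has to confirm the algebraic identity $\hat F_b(q_+)-\hat F_b(q_0)=-\tfrac12\ln\frac{9(1-s_+)}{2\alpha}-\tfrac{\alpha s_+^2}{3}$, which uses the cubic relation to simplify $(2s+1)(1-s)^2$. One also has to verify the factorization of $\varphi'$. I would do this symbolically, and obtain the value $4.051407$ numerically.
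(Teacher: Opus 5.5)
Your proposal is correct and follows the paper's own argument: uniaxiality from the quadratic Lagrange condition, reduction to $\eta=0$ and the cubic in $s$, signs of the diagonal Hessian in $(s,\eta)$ at $q_0$ and $q_\pm$, and the monotone function $\varphi(t)$ under $\alpha=(t+1/t)^2$. Your coercivity remark ($\hat F_b\to+\infty$ on $\partial\Delta_0$, so a global minimizer exists and must be one of the stationary points) fills a step the paper leaves implicit. Your higher-order check at $\alpha=\frac92$ also works: there $s_-=0$, so $q_-$ merges with $q_0$, and the cubic term $-\frac92\sum_i q_i^3$ shows instability. The same kind of check is needed at the other degenerate endpoint $\alpha=4$, where $q_-=q_+$ and $H_{11}=0$; there your phrase ``exactly one entry negative'' does not hold.
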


\begin{table}[th] 
\centering
\caption{Stability of stationary points of $\hat F_b$} \label{tab:Fb-sur-stab}
\begin{tabular}{cccc}
    \toprule
    \multirow{2}{*}{$\alpha$}{} & \multicolumn{3}{c}{Stationary points} \\
    \cline{2-4}
    & $q_0$ & $q_-$ & $q_+$ \\
    \midrule
    $(0,4)$ & global minimizer & N/A & N/A \\
    $(4,\alpha^*)$ & global minimizer & unstable & local minimizer \\
    $(\alpha^*,\frac92)$ & local minimizer & unstable & global minimizer \\
    $(\frac92, \infty)$ & unstable & unstable & global minimizer \\
    \bottomrule
\end{tabular}
\end{table}

\end{document}